\newcommand*{\addFileDependency}[1]{
  \typeout{(#1)}
  \@addtofilelist{#1}
  \IfFileExists{#1}{}{\typeout{No file #1.}}
}
 \setlist[enumerate]{leftmargin=.5in}
 \setlist[itemize]{leftmargin=.5in}
\setlist[enumerate,1]%
  {label=\textup{(\alph*)},
   ref=\theassumption\textup{(\alph*)},   
   leftmargin=*,       
   }
\title{Functional Multi-Reference Alignment via Deconvolution}
\author{
O. Al-Ghattas\thanks{Broad Institute of MIT and Harvard, Cambridge, MA (\email{oag@broadinstitute.org})} 
  \and  A. Little\thanks{Department of Mathematics, University of Utah, Salt Lake City, UT}
\and  D. Sanz-Alonso\thanks{Department of Statistics, University of Chicago, Chicago, IL}
\and M. Sweeney\footnotemark[2]
}
\providecommand{\mathbbm}{\mathbb} 
\renewcommand{\phi}{\varphi}
\newcommand{\C}{\mathbbm{C}}
\newcommand{\N}{\mathbbm{N}}
\newcommand{\Z}{\mathbbm{Z}}
\definecolor{mygreen}{rgb}{0.13,0.55,0.13}
\newtheorem{assumption}[theorem]{Assumption}
\newcommand{\R}{\mathbb{R}}
\DeclarePairedDelimiterX{\norm}[1]{\lVert}{\rVert}{#1}
\newcommand{\hatomega}{\widehat{\omega}}
\newcommand{\yft}{y^{\text{ft}}}
\newcommand{\etaft}{\eta^{\text{ft}}}
\newcommand{\fft}{f^{\text{ft}}}
\newcommand{\pft}{p^{\text{ft}}}
\newcommand{\Xft}{X^{\text{ft}}}
\newcommand{\mft}{m^{\text{ft}}}
\newcommand{\rft}{r^{\text{ft}}}
\newcommand{\kft}{k^{\text{ft}}}
\newcommand{\hatf}{\widehat{f}}
\newcommand{\hatr}{\widehat{r}}
\newcommand{\hatk}{\widehat{k}}
\newcommand{\hatpsi}{\widehat{\psi}}
\newcommand{\tildePsi}{\widetilde{\Psi}}
\newcommand{\hatPsi}{\widehat{\Psi}}
\newcommand{\hatfft}{\hatf^{\text{ft}}}
\newcommand{\Kft}{K^{\text{ft}}}
\newcommand{\E}{\mathbb{E}}
\newcommand{\inparen}[1]{\left(#1\right)}             
\newcommand{\abs}[1]{\ensuremath{\left\lvert #1 \right\rvert}}
\newcommand{\hatxi}{\widehat{\xi}}
\renewcommand{\P}{\mathbb{P}}
\newcommand{\hinv}{h^{-1}}
\newcommand{\checkep}{\check{\epsilon}}
\newcommand{\thresh}{{\mathcal{T}}}
\newcommand{\hatP}{\widehat{P}}
\newcommand{\barf}{\bar{f}}
\newcommand{\barg}{\bar{g}}
\newcommand{\mcE}{\mathcal{E}}
\newcommand{\mcF}{\mathcal{F}}
\newcommand{\mcL}{\mathcal{L}}
\definecolor{mygreen}{rgb}{0.1,0.75,0.2}
\let\epsilon\varepsilon
\begin{document}

\maketitle

\begin{abstract}
This paper studies the multi-reference alignment (MRA) problem of estimating a signal function from shifted, noisy observations. Our functional formulation reveals a new connection between  MRA and deconvolution: the signal can be estimated from second-order statistics via Kotlarski's formula, an important identification result in deconvolution with replicated measurements. To design our MRA algorithms, we extend Kotlarski's formula to general dimension and study the estimation of signals with vanishing Fourier transform, thus also contributing to the deconvolution literature. We validate our deconvolution approach to MRA through both theory and numerical experiments.
\end{abstract}

\begin{keywords}
Multi-reference alignment, deconvolution, Kotlarski’s formula, signal processing
\end{keywords}

\begin{MSCcodes}
62G05, 62M20, 92C55
\end{MSCcodes}

\section{Introduction}
\subsection{Aim} This paper studies the functional multi-reference alignment (MRA) problem of estimating a signal function from shifted, noisy observations. We introduce two algorithms rooted in a new connection between MRA and deconvolution: in the Fourier domain, the signal can be estimated from second-order statistics via Kotlarski's formula, a well-known identification result in deconvolution with replicated measurements \cite{meister2007Deconvolution, li1998nonparametric,comte2015density, kurisu2022uniform}.  
In the process of developing our algorithms,
we extend Kotlarski's formula to multidimensional settings and investigate the estimation of signals with vanishing Fourier transform. To validate our deconvolution approach to MRA,
we analyze the estimation error in both Fourier and spatial domains, studying the sample complexity in terms of interpretable model parameters, such as the variance and the lengthscale of the observation noise. As illustrated in Figure \ref{fig:signalsintro}, our algorithms can successfully recover a wide range of signals from highly noisy observations.   

\begin{figure}
    \includegraphics[width=\textwidth]{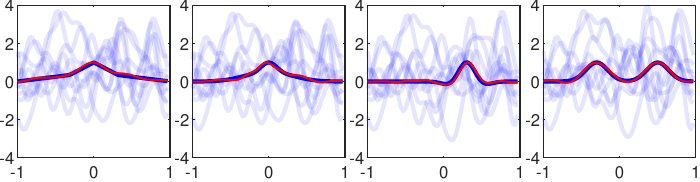}
    \caption{
    Performance of our algorithms in four illustrative examples. True signal (thick blue), representative observations (shadowed blue, drawn from a full sample of size $10^5$), and signal estimates (red).
    See Section \ref{sec:numerics} and Remark \ref{rmk:numerics_signalrec} for implementation details.}\label{fig:signalsintro}
\end{figure}

\subsection{Discrete and functional MRA models}
  MRA models arise in applications such as structural biology, radar, and image processing where an unknown signal is noisily observed after being altered by a group action \cite{diamond1992multiple, park2014assembly, park2011stochastic, sadler1992shift, scheres2005maximum, theobald2012optimal, brown1992survey, foroosh2002extension, zwart2003fast}.
  In particular, MRA arises as a simplified model for cryo-electron microscopy (cryo-EM), where the goal is to reconstruct a molecular structure
  that has been subject to random rotations and translations.
 
In the classical discrete formulation of the MRA problem, the goal is to recover an unknown signal \emph{vector} $f \in \mathbb{R}^J$ from $N$ noisy, randomly shifted observations:
\begin{align}
\label{equ:classic_discrete_MRA}
y_n[j] &= f\big[j - \zeta_n] + \eta_n[j], \qquad 1 \leq j \leq J,\quad 1 \leq n \leq N,
\end{align}
where translations of indices are defined modulo $J$. Each shift $\zeta_n$ is drawn independently from an unknown probability distribution on $\{0, 1, \ldots, J-1\}$, and the noise terms $\eta_1, \ldots, \eta_N$ are independent, centered, isotropic Gaussian vectors with variance $\sigma^2$.
While several discrete MRA models have appeared in the literature, the key unifying feature is that the signal is altered by a group action before being noisily observed. In \eqref{equ:classic_discrete_MRA}, the acting group is the cyclic group $\Z_J,$ which acts on $\R^J$ by circular shifts of the indices. More general group actions are considered for instance in \cite{balanov2025expectation,bandeira2020non}.

Closely related to the discrete MRA problem, and moving towards a functional formulation, one may consider estimating the Fourier coefficients $\fft[j],$ $1 \le j \le J,$ of a \emph{band-limited} signal function $f$ from Fourier domain observations of the form 
\[\yft_n[j] = e^{-i j \zeta_n} \, \fft[j] + \etaft_n[j], \qquad  1 \leq j \leq J, \qquad 1 \leq n \leq N.\]
In this phase-shift model \cite{bandeira2020optimal, dou2024rates}, the signal is translated in physical space prior to observation, which manifests as a multiplicative phase factor in the Fourier domain.

Inspired by the discrete and phase shift models, this paper studies a \emph{functional MRA} model in which the goal is to recover an unknown real-valued signal \emph{function} $f$ on a domain $D \subset \R^d$ from $N$ noisy, randomly shifted observations:   
\begin{align}\label{eq:MRADataGeneratingModelintro}
    y_n(t) = f(t-\zeta_n) + \eta_n(t), \qquad t \in D, \qquad 1 \le n \le N.
\end{align}
Here, each shift $\zeta_n$ determines a spatial translation of the signal and is drawn from an unknown distribution with compact support; the noise terms $\eta_1,\ldots, \eta_N$ are independent, centered Gaussian processes on $D$ with given covariance function. We additionally assume that the signal $f:\R^d\to\R$ is supported on a compact subset of $D$ and that all shifted functions $t \in \R^d \mapsto f(t-\zeta_n)$ are also supported on $D.$ 
This set-up provides a simplified model for applications where interest lies in identifying a compactly supported signal that has been subject to compactly supported random spatial translations; on the other hand, cyclic shifts as in \eqref{equ:classic_discrete_MRA} provide a simplified one-dimensional model for rotations. Both translations and rotations arise in numerous applications in structural biology, radar, and imaging, including cryo-EM. The simple model \eqref{eq:MRADataGeneratingModelintro} retains the key feature of any MRA model: prior to noisy observation, the signal is altered by an unknown element of a group. In \eqref{eq:MRADataGeneratingModelintro}, the acting group is the group of spatial translations in $\R^d$. As in the discrete setting, other group actions can be considered but are beyond the scope of this work.

\subsection{Comparison between discrete and functional models}\label{ssec:comparisons}
We next highlight important differences and similarities between discrete and functional formulations of MRA. 

\subsubsection{Signal modeling}
In many applications that motivate MRA, such as cryo-EM, the latent signal represents a physical object, and is hence more naturally modeled as function rather than a finite-dimensional vector. In such applications, it is important to understand how properties of the noise affect sample complexity independently of the discretization level $J$, 
and, as shown in this work, such understanding is facilitated by a functional formulation. 

The discrete and functional MRA models \eqref{equ:classic_discrete_MRA} and \eqref{eq:MRADataGeneratingModelintro} also differ in the assumptions placed on the support of the signal: no constraints are imposed in the discrete model, whereas the functional model assumes compact support. While more restrictive, the additional support constraint in functional MRA is natural in applications where the signal represents an object subject to random translations in a constrained physical environment.

\subsubsection{Shift modeling}
\label{sec:shifts_modeling}
In the discrete MRA model \eqref{equ:classic_discrete_MRA}, shifts are cyclic in $\{0,\ldots, J-1\}$, while in the functional MRA model \eqref{eq:MRADataGeneratingModelintro} shifts are noncyclic and the underlying domain is $\R^d$; in both cases, shifts specify a group action. In this aspect discrete MRA provides more flexibility for shift modeling, since by restricting the support of the hidden signal and shift distribution to avoid wrap around, noncyclic shifts can always be obtained as a special case of cyclic shifts. However, noncyclic shift modeling is natural for spatial translations in biological applications, as these translations are inherently not cyclic. 

Notice also that discrete MRA assumes that the shifts occur exactly on the discretization grid, i.e. shifts are modeled by a \text{finite} group whose cardinality agrees with the signal length $J$. In contrast, in functional MRA shifts follow a continuous distribution. While discretization is required for practical implementation, in the functional setting it can be carried out after the signals have been altered by the action of a continuous \textit{infinite} group, as described in Subsection \ref{sec:comp_to_spectral}.
This is important, since in many applications motivating MRA, such as cryo-EM, the latent transformations are governed by continuous symmetries, and, as we will show in Subsection \ref{sec:comp_to_spectral}, standard recovery algorithms struggle when the finite group assumption of discrete MRA is violated.

\subsubsection{Noise modeling}
The white noise assumption in~\eqref{equ:classic_discrete_MRA} forces the noise to blow up as $J \to \infty$, which in turn necessitates that the noise level $\sigma \to 0$ as $J\to\infty$ for recovery to remain possible~\cite{romanov2021multi, dou2024rates}. Beyond this issue, white noise is often an unrealistic modeling choice. In many applications, including cryo-EM, the noise is known to exhibit spatial correlations~\cite{hazon2022noise, bejjanki2017noise, huang2009noise}. As resolution increases, the assumption that pixelwise noise is uncorrelated becomes increasingly implausible; consequently, modeling the noise as in \eqref{eq:MRADataGeneratingModelintro} using a Gaussian process with short-range correlations (small lengthscale) may offer a more accurate representation.  

In order to analyze the impact of random dilations, the papers \cite{yin2024bispectrum, hirn2023power,hirn2021wavelet} consider a functional model with white noise; however, theoretical guarantees are derived only for the relevant Fourier invariants, and inversion is done with existing discrete algorithms which lack theoretical guarantees in the functional context.

\subsection{Algorithms and recovery guarantees for MRA}
 Numerous algorithmic strategies have been developed for discrete MRA. When the noise level is moderate, \textit{synchronization}-based methods ---which estimate the shifts and then align the observations--- are often effective~\cite{bandeira2020non, bandeira2023estimation, bandeira2016low, bandeira2014multireference, boumal2016nonconvex, chen2018projected, chen2014near, perry2018message, singer2011angular, zhong2018near}. However, synchronization fails in the high-noise regime, motivating alternative approaches such as expectation-maximization (EM)~\cite{abbe2018multireference, dempster1977maximum, balanov2025expectation}, the method of moments~\cite{hansen1982large, kam1980reconstruction, sharon2019method, abas2022generalized}, and the method of invariants~\cite{bandeira2020non, bendory2017bispectrum, collis1998higher, hirn2023power, hirn2021wavelet}.
If the shift distribution is \textit{aperiodic}, the signal $f$ can be recovered from the first and second empirical moments using a spectral method, with sample complexity $N \gtrsim \sigma^4$~\cite{abbe2018multireference}. In contrast, if the shift distribution is \textit{periodic}  ---e.g., uniform on $\{0, \ldots, J-1\}$  ---higher-order moments are typically required. A common approach in this case is to estimate the bispectrum of $f$ (which requires $N \gtrsim \sigma^6$ samples), and then invert it; see~\cite{perry2019sample} for theoretical guarantees using Jennrich’s algorithm. Recent work~\cite{dou2024rates} provides precise error bounds for bispectrum inversion with explicit dependence on signal length and frequency structure. The high-dimensional setting has also been studied: for instance,~\cite{romanov2021multi} shows that for Gaussian signals, a phase transition in sample complexity occurs at the threshold $J/(\sigma^2 \log J)=2.$ There are important exceptions where second moments are sufficient for recovery, including when the signal is sparse \cite{bendory2024sample, ghosh2023sparse, ghosh2023minimax} or, more generally, when it lies in a semi-algebraic set of sufficiently small dimension (for example a linear subspace) \cite{bendory2025transversality}.

While several algorithms have been proposed to solve discrete MRA from third moments, to the best of our knowledge only the spectral algorithm in \cite{abbe2018multireference} solves discrete MRA from second moments without additional structural assumptions as discussed in \cite{bendory2025transversality, bendory2024sample, ghosh2023sparse, ghosh2023minimax}, thus requiring $N \gtrsim \sigma^4$ samples instead of $N \gtrsim \sigma^6$. However, as we demonstrate in Subsection \ref{sec:comp_to_spectral}, the performance of this spectral algorithm deteriorates as (1) the noise level increases, (2) the signal length increases, or (3) the finite group action assumption discussed in Subsection \ref{sec:shifts_modeling} is violated. Inspired by the functional formulation, we propose an algorithm for recovery from second moments which is stable under all of the above scenarios. In particular, even when data is generated exactly from a discrete MRA model with isotropic Gaussian noise (with support constraints ensuring the equivalence of cyclic and noncyclic shifts), our proposed algorithm significantly improves upon the spectral algorithm in empirical evaluations. 

Our methodology builds on a novel connection between functional MRA and the classical statistical problem of deconvolution from replicated measurements \cite{meister2007Deconvolution, li1998nonparametric,comte2015density, kurisu2022uniform}, wherein one seeks to identify the distribution of a latent variable from repeated, conditionally independent noisy observations ---possibly under unknown noise. This connection enables us to leverage established inferential techniques, most notably Kotlarski’s identity \cite{rao1992identifiability, kotlarski1967characterizing}, which guarantees identifiability of the latent distribution from the joint law of the noisy replicates. This connection additionally enables us to establish recovery rates for functional MRA under natural smoothness assumptions.

\subsection{Outline and main contributions}
This paper is organized as follows:
\begin{itemize}
    \item Section \ref{sec:settingandalgorithm} formalizes our functional MRA model, introduces our main algorithm, and discusses the novel connection between MRA and deconvolution that underpins our work. Furthermore, we extend Kotlarski's formula to arbitrary dimension, which is useful for both MRA (Theorem \ref{thm:kotlarski}) and deconvolution (Corollary \ref{cor:MultivariateKotlarskiDeconvolution}). 
    \item Section \ref{sec:theory} analyzes our main algorithm. Theorem \ref{thm:ErrorBoundDeconvolutionEstimator} provides the first recovery guarantees for functional MRA for general (i.e. non-bandlimited) functions. In addition,  we characterize the sample complexity in terms of the signal-to-noise ratio (Corollary \ref{corr:highNoiseRegime}) and interpretable model parameters, such as the marginal variance and correlation lengthscale of the noise distribution (see Corollaries \ref{corr:largesigma} and \ref{corr:smallLengthscale}). 
    \item Section \ref{sec:vanishing} extends our algorithm and analysis to the case where $\fft$ may vanish (see Theorem \ref{thm:ErrorBoundDeconvolutionEstimatorVanishingFT} and Corollary \ref{corr:highNoiseRegimeVanishing}).  This removes a limiting but nonetheless standard assumption in the MRA literature, i.e. that the hidden signal has frequencies given by a finite-dimensional vector with non-vanishing entries.
    \item  Section \ref{sec:numerics} shows the performance of our algorithms in several computed examples  and also compares with the spectral algorithm proposed in \cite{abbe2018multireference}. 
\end{itemize}

\section{MRA: functional setting and deconvolution algorithm}\label{sec:settingandalgorithm}

\subsection{Problem setting}\label{ssec:problemsetting}
Let $D := [-1,1]^d$ and let $f : D \to \R$ be a square-integrable function supported on $\bigl[-\frac{1}{2}, \frac{1}{2}\bigr]^d$. Our goal is to estimate the signal $f$ from $N$ independent observations $\{y_n\}_{n=1}^N$  of the form
\begin{align}\label{eq:MRADataGeneratingModel}
    y_n(t) = f(t-\zeta_n) + \eta_n(t), \qquad t \in D , \qquad n=1, \ldots, N,
\end{align}
 with the convention that $f(x) =0$ for any $x \notin D.$  Here, $\{\zeta_n\}_{n=1}^N$ are independent copies of a centered random vector $\zeta$ with density $p_{\zeta}$ supported on $\bigl[-\frac{1}{2}, \frac{1}{2}\bigr]^d,$ and $\{\eta_n\}_{n=1}^N$ are independent copies of a centered square-integrable Gaussian process $\eta$ on $D$ with known covariance function $k_\eta.$ Thus, each observation $y_n$ represents a shifted, noisy measurement of the signal $f.$ We assume that the shifts $\{\zeta_n\}_{n=1}^N$ and measurement errors $\{\eta_n\}_{n=1}^N$ are all independent. We will let $y(t) = f(t-\zeta) + \eta(t)$ denote a generic observation.
Note that we do not assume periodicity of the signal $f,$ nor cyclic shifts; due to the compact support of the signal and shifts, all observations are supported in the domain $D.$\footnote{In our numerical examples, we use a slightly different convention and take $D =[-2,2]^d$ and the signal and shifts to be supported on $[-1,1]^d.$ See Figure \ref{fig:signalsintro} for illustrative signals and observations in dimension $d =1$.} 

\subsection{Identifiability via second-order statistics in the Fourier domain}
We will work in the Fourier domain, where the observations take the form
\begin{align}\label{eq:MRADataGeneratingModelFourier} 
    \yft_n(\omega) &= e^{-i \omega \cdot \zeta_n} \fft(\omega)+\etaft_n(\omega), \qquad \omega \in \R^d, \qquad n=1,\dots, N.
\end{align} 
Here, $f^\text{ft}(\omega) := \int_{\R^d} f(t) e^{-it \cdot \omega} \, dt$ and $\etaft_n(\omega):= \int_D \eta_n(t) e^{-it\cdot \omega} \, dt$ denote the Fourier transforms of $f$ and $\eta_n$. As before, we denote by $\yft$ and $\etaft$ a generic observation and measurement error in Fourier domain. 

Our deconvolution approach to MRA is based on the fact that the autocorrelation function $r_{\yft}(\omega_1,\omega_2) : = \E \bigl[ \yft(\omega_1) \overline{\yft(\omega_2) }\bigr] $ of the noncentered process $\yft$
satisfies
\begin{align*}\label{eq:identity}
    r_{\yft}(\omega_1,\omega_2)
    &= \fft(\omega_1) \overline{\fft(\omega_2)} \pft_{\zeta}(\omega_1-\omega_2)
    +
    k_{\etaft} (\omega_1, \omega_2)\\
    &=: \Psi(\omega_1,\omega_2) + k_{\etaft} (\omega_1, \omega_2), \qquad \omega_1, \omega_2 \in \R^d,
\end{align*}
where $k_{\etaft}$ is the covariance function of $\etaft$. Note that  since $f, p_\zeta$ have compact support, it follows that $\fft,\pft_{\zeta} \in C^\infty;$ consequently, we also have that $\Psi \in C^\infty.$  
The following identity shows that $\fft$ is completely determined by a nonlinear transformation of $\Psi.$ For $u\in \R^d, v \in \C^d$ we abuse notation slightly and write $u \cdot v$ to mean $u \cdot \Re(v) + i u \cdot \Im(v),$ where $\Re(v), \Im(v)$ are the real and imaginary parts of $v,$ respectively.

\begin{theorem}[Identification] 
\label{thm:kotlarski} 
    Let $p_{\zeta}\in L^2(D)$ be the probability density function of a centered random variable $\zeta$ with Fourier transform $\pft_{\zeta}$
    and let $f \in L^2(D)$ have Fourier transform $\fft$ satisfying $\fft(0)=1.$ 
    Define 
    \begin{equation}\label{eq:Kotlarski_second_mom}
        \Psi(\omega_1,\omega_2) 
        := \fft(\omega_1) \overline{\fft(\omega_2)}\pft_{\zeta}(\omega_1-\omega_2), 
        \qquad \omega_1, \omega_2 \in \R^d,
    \end{equation}
    and assume that $\Psi$ is nowhere vanishing on its diagonal (i.e. $\Psi(\omega, \omega) \neq 0$ for any $\omega$). Then, 
    \begin{align}\label{eq:formulaft}
    \fft(\omega) = 
    \exp \left ( 
    \int_0^1 \frac{\nabla_{1} \Psi(\alpha \omega, \alpha \omega)}{\Psi(\alpha \omega,\alpha \omega)} \cdot \omega ~ d \alpha 
    \right ),
    \end{align}
    where $\nabla_1$ indicates the gradient with respect to the first argument.
\end{theorem}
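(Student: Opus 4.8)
The plan is to differentiate $\Psi$ along its diagonal, recognize the resulting logarithmic-derivative structure, and then integrate back along rays emanating from the origin. The crucial input is the centering of $\zeta$, which kills the unwanted derivative of $\pft_{\zeta}$ and leaves only the contribution from $\fft$.

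First I would record the diagonal values. Writing $\pft_{\zeta}(\omega) = \E[e^{-i\omega\cdot\zeta}]$, the compact support of $p_\zeta$ justifies differentiating under the integral, giving $\pft_{\zeta}(0)=1$ and $\nabla\pft_{\zeta}(0) = -i\,\E[\zeta] = 0$ since $\zeta$ is centered. Differentiating \eqref{eq:Kotlarski_second_mom} in its first argument and evaluating on the diagonal $\omega_1=\omega_2=\omega$, the term carrying $\nabla\pft_{\zeta}(0)$ drops out, so that $\Psi(\omega,\omega) = |\fft(\omega)|^2$ and $\nabla_1\Psi(\omega,\omega) = \nabla\fft(\omega)\,\overline{\fft(\omega)}$. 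The nowhere-vanishing hypothesis on the diagonal forces $\fft(\omega)\neq 0$ for every $\omega\in\R^d$, so the ratio appearing in \eqref{eq:formulaft} is well-defined and collapses to
\begin{equation*}
\frac{\nabla_1\Psi(\omega,\omega)}{\Psi(\omega,\omega)} = \frac{\nabla\fft(\omega)\,\overline{\fft(\omega)}}{\fft(\omega)\,\overline{\fft(\omega)}} = \frac{\nabla\fft(\omega)}{\fft(\omega)}.
\end{equation*}

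Next I would integrate this identity along the segment $\alpha\mapsto\alpha\omega$, $\alpha\in[0,1]$. Setting $\gamma(\alpha):=\fft(\alpha\omega)$, the chain rule gives $\gamma'(\alpha)=\nabla\fft(\alpha\omega)\cdot\omega$, so the integrand of \eqref{eq:formulaft} equals $\gamma'(\alpha)/\gamma(\alpha)$; moreover $\gamma(0)=\fft(0)=1$ and $\gamma$ is a smooth nonvanishing curve in $\C$.

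The main obstacle is purely that $\fft$ is complex-valued, so one cannot simply declare the integral to be $\log\fft(\omega)-\log\fft(0)$ without first fixing a branch of the complex logarithm along the ray. I would avoid this by the standard device of setting $h(\alpha):=\gamma(\alpha)\exp\bigl(-\int_0^\alpha \gamma'(s)/\gamma(s)\,ds\bigr)$ and verifying $h'\equiv 0$, so that $h$ is constant equal to $h(0)=\gamma(0)=1$. Evaluating at $\alpha=1$ then yields $\gamma(1)=\exp\bigl(\int_0^1 \gamma'(\alpha)/\gamma(\alpha)\,d\alpha\bigr)$, which is exactly $\fft(\omega)=\exp\bigl(\int_0^1 \frac{\nabla_1\Psi(\alpha\omega,\alpha\omega)}{\Psi(\alpha\omega,\alpha\omega)}\cdot\omega\,d\alpha\bigr)$, establishing \eqref{eq:formulaft}.
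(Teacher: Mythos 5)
Your proof is correct, and while it shares the paper's overall architecture (use the centering of $\zeta$ to reduce the integrand to the logarithmic derivative of $\fft$, then integrate along the ray $\alpha \mapsto \alpha\omega$), it replaces the paper's key technical lemma with a more elementary device. The paper isolates the ray integration in a separate path-wise exponential representation lemma, which it proves by constructing a continuously chosen branch of the complex logarithm along the path, writing $\log_c \xi = \log|\xi| + i\arg(\xi)$ and verifying the complex chain rule through a real/imaginary-part computation. You sidestep the branch-selection issue entirely with the standard integrating-factor trick: setting $h(\alpha) = \gamma(\alpha)\exp\bigl(-\int_0^\alpha \gamma'(s)/\gamma(s)\,ds\bigr)$ and checking $h' \equiv 0$, which requires only that $\gamma = \fft(\alpha\,\cdot\,\omega)$ is $C^1$ and nonvanishing (the latter guaranteed on the whole ray because the diagonal hypothesis holds for \emph{every} $\omega$, hence at every $\alpha\omega$). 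Your version is shorter and avoids any discussion of the argument function; the paper's lemma, on the other hand, is stated for an arbitrary straight-line path between two points $\omega_1,\omega_2$ rather than a ray from the origin, and that extra generality is reused later (e.g., in the proof of Lemma \ref{lem:error_from_zeros}, where ratios $\fft(\xi_2)/\fft(\xi_1)$ over segments away from the origin are needed). A second, cosmetic difference: you substitute the centering condition $\nabla \pft_\zeta(0) = 0$ directly into $\nabla_1\Psi$ on the diagonal before integrating, whereas the paper first establishes the exponential formula for $\fft$ and then identifies $\nabla\fft/\fft$ with $\nabla_1\Psi/\Psi - \nabla\pft_\zeta(0)/\pft_\zeta(0)$; the algebra is the same either way.
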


\begin{proof}
     The assumption that $\Psi$ is nowhere vanishing on its diagonal implies that $\fft$ is nowhere vanishing. Note that since $f \in L^2(D)$ and $D \subset \R^d$ is compact, we have that $\fft \in C^\infty(\R^d).$ Further, $\fft(0)=1$ by assumption. Therefore, Lemma~\ref{lem:technicalKotlarski} implies
    \begin{align}\label{eq:phi1lemma}
        \fft(\omega)= 
        \exp \inparen{
        \int_0^1  \frac{\nabla \fft(\alpha \omega)}{\fft(\alpha \omega)}  \cdot \omega \, d\alpha
        }, \quad \omega \in \R^d.
    \end{align}
    Noting that $\pft_\zeta \in C^\infty(\R^d),$ direct calculation shows that 
 \begin{align*}
     \frac{\nabla \fft (\alpha \omega)}{\fft (\alpha \omega)}
     =
     \frac{\nabla_1 \Psi(\alpha \omega, \alpha \omega)}{\Psi(\alpha \omega, \alpha \omega)}
     -
     \frac{\nabla \pft_{\zeta}(0)}{\pft_{\zeta}(0)}.
 \end{align*}
 Finally,  
 observe that $\nabla \pft_\zeta(0) = 0$ since $\zeta$ is centered by assumption.
\end{proof}
\begin{remark}\label{remarkFT1}
    The setting where $\fft$ is allowed to vanish (allowing $\Psi(\omega,\omega)$ to vanish as well) will be considered in  Section \ref{sec:vanishing}.
    We assume $\fft(0)=1$ only for ease of presentation: If $\fft(0)=c \neq 0,$ the constant $c$ can be estimated by $N^{-1} \sum_{n=1}^N \yft_n(0)$, and the data can then be rescaled accordingly. 
\end{remark}

\subsection{Main algorithm}\label{ssec:mainalgorithm}
We introduce a three-step approach to estimate the signal $f$:
\begin{enumerate}
    \item Estimate $\Psi(\omega,\omega),$ for instance by   $\hatPsi(\omega, \omega)
        := 
        \frac{1}{N} \sum_{n=1}^N \yft_n(\omega) \overline{\yft_n(\omega)}
        -
        \kft_\eta (\omega, \omega);$
    \item Estimate $\fft$ by plugging in the estimate of $\Psi$ into \eqref{eq:formulaft};
    \item Deconvolve the estimate of $\fft$ to obtain an estimate of $f$.
\end{enumerate}

There is significant flexibility in the implementation of each step. Algorithm \ref{alg:mainalgorithm} outlines the method that we will analyze in Section \ref{sec:theory} and test numerically in Section \ref{sec:numerics}. First, we estimate $\Psi$ using the sample moment estimator $\hatPsi,$ and also introduce a regularized estimator $\tildePsi.$   Second, we estimate $\fft$ by plugging in $\hatPsi$ and $\tildePsi$ in the numerator and denominator in \eqref{eq:formulaft}, respectively. Third, as is standard in the deconvolution literature, we estimate $f$ by regularizing the inverse Fourier transform of $\hatfft$ with a user-chosen kernel function.

\begin{algorithm}[htp]
\caption{\label{alg:mainalgorithm}MRA via Deconvolution}
\begin{algorithmic}[1]
     \STATE {\bf Input:} Data $\{y_n\}_{n=1}^N;$ Covariance function $k_{\eta};$ Kernel $K;$ Bandwidth parameter $h>0.$ 
    \STATE {\bf Estimate $\Psi$:} Set \vspace{-0.45cm}
    \begin{align*}
        \hatPsi(\omega, \omega)
        := 
        \frac{1}{N} \sum_{n=1}^N \yft_n(\omega) \overline{\yft_n(\omega)}
        -
        \kft_\eta (\omega, \omega), \qquad 
        \tildePsi(\omega,\omega):=  \frac{\hatPsi(\omega,\omega)}{1 \land \sqrt{N}|\hatPsi(\omega,\omega)|}. \vspace{-0.45cm}
    \end{align*}
    \STATE  \vspace{-0.45cm} {\bf Estimate $\fft$:} Set  \vspace{-0.45cm}
    \begin{align*}
       \hatfft(\omega)
        :=
        \exp \left ( 
            \int_0^1 \frac{\nabla_1 \hatPsi(\alpha \omega, \alpha \omega)}{\tildePsi(\alpha \omega,\alpha \omega)} \cdot \omega ~d\alpha
        \right) .
        \end{align*}
        \STATE  \vspace{-0.45cm}{\bf Deconvolve:} Set  \vspace{-0.45cm}
      \begin{align*}
        \hatf(t) := \frac{1}{(2\pi)^d} \int_{\R^d} e^{i \omega \cdot  t} \hatfft(\omega) \Kft(h \omega) \, d\omega.
        \end{align*}\vspace{-0.45cm}
    \STATE {\bf Output}: Approximation $\hatf$ to the hidden signal $f.$
\end{algorithmic}
\end{algorithm}

\subsection{Deconvolution perspective and Kotlarski's identity}\label{ssec:perspective}
Theorem~\ref{thm:kotlarski} reveals a fundamental connection between functional MRA and deconvolution. In this subsection, we elaborate on this connection by framing the theorem as a multivariate generalization of Kotlarski’s lemma~\cite{kotlarski1967characterizing, rao1992identifiability}. We begin by reviewing the \emph{replicated measurements framework}~\cite[Section 2.6.3]{meister2007Deconvolution}, a generalization of classical density deconvolution widely studied in econometrics.

In standard density deconvolution, the goal is to recover the density of a real-valued latent variable $X_0 \in \R$ from i.i.d. observations of $Z = X_0 + X_1$, where $X_0$ and $X_1$ are independent. Without knowledge of the distribution of $X_1$, there is lack of identifiability. This issue disappears in the presence of \emph{replicated measurements}, where we observe $N$ independent copies $\{(Z_{n,1}, Z_{n,2})\}_{n=1}^N$ of the random vector $(Z_1,Z_2)$ defined by
\begin{align*}
Z_{1} := X_{0} + X_{1}, \quad Z_{2} := X_{0} + X_{2},
\end{align*}
with $X_1$ and $X_2$ independent, identically distributed, and independent of $X_0$. The joint characteristic function of $(Z_1, Z_2)$ is then
\begin{align}\label{eq:psifunction}
\psi(\omega_1, \omega_2) 
= \E [\exp(i\omega_1 Z_1+ i\omega_2 Z_2)]
= \phi_1(\omega_1)\phi_1(\omega_2)\phi_0(\omega_1 + \omega_2),
\end{align}
where $\phi_0$ and $\phi_1$ denote the characteristic functions of $X_0$ and $X_1$, respectively. The key insight that allows us to connect MRA and deconvolution is that the function $\Psi$ defined in~\eqref{eq:Kotlarski_second_mom} has the same structure (up to a sign flip) as the function $\psi$ in ~\eqref{eq:psifunction}, with $\fft$ playing the role of $\phi_1$ and $\pft_\zeta$ playing the role of $\phi_0.$ 

Kotlarski’s identity~\cite[Lemma 1]{rao1992identifiability} provides a powerful identification result in the deconvolution setting (leveraged for instance in~\cite{li1998nonparametric, comte2015density,kurisu2022uniform}): if $\psi$ is nowhere vanishing, then the joint distribution of $(Z_1, Z_2)$ identifies the distributions of $X_0, X_1, X_2$ up to location shifts. Moreover, an explicit formula (see \cite[Remarks 2.1.11]{rao1992identifiability}) for the characteristic function of $X_0$ is given by
\begin{align}\label{eq:kotlarski1dA}
\phi_0(\omega) 
= 
\exp\left(i \omega \, \E[X_1] + \int_0^\omega \frac{\partial_1 \psi(0,\omega')}{\psi(0,\omega')} \, d\omega'\right), \quad \omega \in \R,
\end{align}
with $\phi_1$ recovered via
\begin{align}\label{eq:kotlarski1dB}
\phi_1(\omega) = \frac{\psi(\omega, 0)}{\phi_0(\omega)}. 
\end{align}
In settings where the target of estimation is $\phi_1$, \cite{evdokimov2010identification} proposes to use the direct expression
\begin{align}\label{eq:kotlarski1dEvdokimov}
\phi_1(\omega) 
= 
\exp\left(i \omega \, \E[X_0] + \int_0^\omega \frac{\partial_1 \psi(\omega',-\omega')}{\psi(\omega,-\omega')} \, d\omega'\right), \quad \omega \in \R.
\end{align}
It is natural to assume that $X_1,X_2$ are centered as they are most often interpreted as random noise. We can then estimate the density of $X_0$ (resp. $X_1$) as follows:
\begin{enumerate}
    \item Estimate the joint characteristic function $\psi,$ e.g.~by the sample version $\hatpsi(\omega_1,\omega_2) := N^{-1}\sum_{n=1}^N \exp( i \omega_1 Z_{n,1}+i \omega_2 Z_{n,2} ).$
    \item Estimate $\phi_0$ (resp. $\phi_1$) by plugging in $\hatpsi$ into \eqref{eq:kotlarski1dA} (resp. \eqref{eq:kotlarski1dB} or \eqref{eq:kotlarski1dEvdokimov}).
    \item Deconvolve the estimate of $\phi_0$ (resp. $\phi_1$) to  estimate the density of $X_0$ (resp. $X_1$).
\end{enumerate}
Notice that this three-step procedure is completely analogous to our deconvolution approach to MRA in Subsection \ref{ssec:mainalgorithm}.
 
Deconvolution via Kotlarki's identity in the multivariate setting is, however, less well developed. 
\cite[Theorem 3.1.1]{rao1992identifiability} extends  Kotlarski's identification result to random elements in a separable Hilbert space, but without providing an explicit estimator. In the context of errors-in-variables regression,~\cite{schennach2004estimation} derives several Kotlarski-type results in the univariate case and notes ---without proof--- that multivariate extensions may be obtained via path integrals~\cite[Footnote 11, p. 49]{schennach2004estimation}.

Theorem~\ref{thm:kotlarski} and its proof address this gap, delivering a constructive multivariate generalization of Kotlarski’s lemma. A slight adaptation of the proof yields the following corollary:

\begin{corollary}[Multivariate Kotlarski]
\label{cor:MultivariateKotlarskiDeconvolution}
Let $X_0, X_1, X_2$ be independent integrable random vectors in $\R^d$, and define
\begin{align*}
Z_1 := X_0 + X_1, \quad Z_2 := X_0 + X_2.
\end{align*}
Let $\phi_k(\omega) := \E[e^{i \omega \cdot X_k}]$ for $k = 0, 1, 2$, and define
\begin{align*}
\psi(\omega_1, \omega_2) := \E[e^{i \omega_1 \cdot Z_1 + i \omega_2 \cdot Z_2}] 
= \phi_0(\omega_1 + \omega_2)\phi_1(\omega_1)\phi_2(\omega_2), \quad \omega_1, \omega_2 \in \R^d.
\end{align*}
If $\psi$ is nowhere vanishing, then
\begin{align}\label{eq:multviarateKotlarski1}
\phi_0(\omega) 
= 
\exp\left( -i \omega \cdot \E[X_1] + \int_0^1 \frac{\nabla_1 \psi(0, \alpha \omega)}{\psi(0, \alpha \omega)} \cdot \omega \, d\alpha \right), \quad \omega \in \R^d.
\end{align}
\end{corollary}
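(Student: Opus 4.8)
The plan is to run the proof of Theorem~\ref{thm:kotlarski} essentially verbatim, with two bookkeeping changes dictated by the different structure of $\psi$: I now probe $\psi$ along the slice in which the \emph{first} argument is pinned at the origin while the \emph{second} travels along the ray $\alpha\mapsto\alpha\omega$, and I retain a first-moment correction term that vanished in Theorem~\ref{thm:kotlarski} because $\zeta$ was centered there, whereas $X_1$ is only assumed integrable here. First I would record the regularity and non-vanishing facts I need. Integrability of $X_0,X_1,X_2$ gives $\phi_k\in C^1(\R^d)$ with $\phi_k(0)=1$ and, by differentiation under the expectation, $\nabla\phi_k(0)=i\E[X_k]$; in particular $\nabla\phi_1(0)=i\E[X_1]$. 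Taking $\omega_1=\omega,\omega_2=0$ in the factorization gives $\psi(\omega,0)=\phi_0(\omega)\phi_1(\omega)$, so the hypothesis that $\psi$ never vanishes forces $\phi_0$ (and $\phi_1$) to be nowhere vanishing. Thus $\phi_0$ is a nowhere-vanishing $C^1$ function with $\phi_0(0)=1$, which is exactly the hypothesis of Lemma~\ref{lem:technicalKotlarski}, yielding
$$
\phi_0(\omega)=\exp\inparen{\int_0^1 \frac{\nabla\phi_0(\alpha\omega)}{\phi_0(\alpha\omega)}\cdot\omega\,d\alpha}.
$$

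Next I would compute the logarithmic derivative along the probed slice. Differentiating $\psi(\omega_1,\omega_2)=\phi_0(\omega_1+\omega_2)\phi_1(\omega_1)\phi_2(\omega_2)$ in the first argument and evaluating at $\omega_1=0,\ \omega_2=\alpha\omega$, then using $\phi_1(0)=1$ and $\nabla\phi_1(0)=i\E[X_1]$, gives
$$
\nabla_1\psi(0,\alpha\omega)=\nabla\phi_0(\alpha\omega)\,\phi_2(\alpha\omega)+\phi_0(\alpha\omega)\,\phi_2(\alpha\omega)\,i\E[X_1],
$$
while $\psi(0,\alpha\omega)=\phi_0(\alpha\omega)\phi_2(\alpha\omega)$. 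Dividing and simplifying yields the key identity
$$
\frac{\nabla_1\psi(0,\alpha\omega)}{\psi(0,\alpha\omega)}=\frac{\nabla\phi_0(\alpha\omega)}{\phi_0(\alpha\omega)}+i\E[X_1].
$$
Dotting with $\omega$, integrating over $\alpha\in[0,1]$, and pulling the constant term $i\E[X_1]\cdot\omega$ out of the integral then gives $\int_0^1 \tfrac{\nabla\phi_0(\alpha\omega)}{\phi_0(\alpha\omega)}\cdot\omega\,d\alpha=\int_0^1 \tfrac{\nabla_1\psi(0,\alpha\omega)}{\psi(0,\alpha\omega)}\cdot\omega\,d\alpha-i\omega\cdot\E[X_1]$; substituting into the displayed representation of $\phi_0$ and exponentiating produces exactly \eqref{eq:multviarateKotlarski1}.

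The hard part will be ensuring that Lemma~\ref{lem:technicalKotlarski} still applies in this weaker regularity regime. In Theorem~\ref{thm:kotlarski}, compact support forced $\fft\in C^\infty$, whereas here mere integrability gives only $\phi_0\in C^1$; I must therefore confirm that the genuinely delicate ingredient behind the path-integral formula---the existence of a single-valued continuous branch of $\log\phi_0$ along each ray, anchored by $\phi_0(0)=1$---goes through assuming only continuity, nowhere-vanishing, and $C^1$ differentiability of $\phi_0$. This is precisely where the content beyond Theorem~\ref{thm:kotlarski} lives; by contrast, the remaining steps (the product-rule differentiation, the evaluation $\nabla\phi_1(0)=i\E[X_1]$ justified by dominated convergence, and extracting the constant from the integral) are routine.
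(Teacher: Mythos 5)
Your proposal is correct and is precisely the ``slight adaptation'' of the proof of Theorem~\ref{thm:kotlarski} that the paper invokes for Corollary~\ref{cor:MultivariateKotlarskiDeconvolution}: apply Lemma~\ref{lem:technicalKotlarski} to $\phi_0$ along the ray from $0$ to $\omega$, compute $\nabla_1\psi(0,\alpha\omega)/\psi(0,\alpha\omega)=\nabla\phi_0(\alpha\omega)/\phi_0(\alpha\omega)+i\,\E[X_1]$ via the product rule, and pull out the first-moment correction, with non-vanishing of $\phi_0$ read off from $\psi(\omega,0)=\phi_0(\omega)\phi_1(\omega)$. The regularity issue you flag as ``the hard part'' is in fact immediate: Lemma~\ref{lem:technicalKotlarski} is stated for merely continuously differentiable, non-vanishing $\xi$, and integrability of $X_0$ gives $\phi_0\in C^1$ by dominated convergence, so the lemma applies verbatim.
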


In the univariate case ($d = 1$), the change of variable $\omega' = \alpha \omega$ in the integral component of \eqref{eq:multviarateKotlarski1} recovers the classical identity in \eqref{eq:kotlarski1dA}. An identical argument can also be used to establish the multivariate version of \eqref{eq:kotlarski1dEvdokimov}, \begin{align}\label{eq:kotlarskiGenDwithCF}
    \phi_1(\omega) 
    &= 
    \exp\left(-i \omega \cdot \E X_0 +\int_0^1 \frac{ \nabla_1 \psi(\alpha \omega,-\alpha \omega) }{\psi(\alpha \omega,-\alpha \omega)} \cdot \omega \, d\alpha\right)\, ,\qquad \omega \in \R^d.
    \end{align}

\section{Theoretical guarantees}\label{sec:theory}
This section analyzes Algorithm \ref{alg:mainalgorithm}. We show a general error bound in Subsection \ref{sec:analysisgeneral} and discuss connections with the MRA literature in Subsection \ref{subsec:snr}. We utilize the notation $\lesssim (\asymp)$ to indicate inequality (resp. equality) up to constants, independent of the parameters of interest.

\subsection{Error bound for Algorithm \ref{alg:mainalgorithm}}\label{sec:analysisgeneral}
In this subsection, we derive a high-probability bound on the deviation $\|\hatf - f\|_\infty.$ To prove our main result, Theorem \ref{thm:ErrorBoundDeconvolutionEstimator}, we analyze the error in the estimation of $\Psi$ and $\nabla_1\Psi$ in Lemma \ref{lem:psiestimates}, and the signal recovery in Fourier domain in Theorem \ref{thm:FTDeviation}. These intermediate results are proved in the Supplementary Material \ref{sec:appendixtheory}. 

Throughout, we will work with the following standing assumption:
\begin{assumption}[Standing assumption]\label{assumption:psiestimates} 
     Let $ H^s(D)$ be a Sobolev space of order $ s > d/2$ on the domain $D = [-1,1]^d.$ 
    \begin{enumerate}[label=(\roman*)]
        \item The signal $f$ satisfies $f \in H^s(D)$ 
        and is supported on $[-1/2,1/2]^d$ with $\fft(0)=1$.
        \item The noise $\eta$ is a centered Gaussian process taking values in  $H^s(D).$ 
        \item The shift $\zeta$ is  a centered random variable supported on $[-1/2,1/2]^d.$
    \end{enumerate} 
\end{assumption}
As noted in Remark \ref{remarkFT1}, the assumption that $\fft(0)=1$ is only made for ease of presentation. 

\begin{lemma}[Concentration of $\hatPsi$]\label{lem:psiestimates}
Let 
    \begin{align*}
        \hatPsi(\omega,\omega) 
        := \frac{1}{N} \sum_{n=1}^N \yft_n(\omega) \overline{\yft_n(\omega)} - k_{\etaft}(\omega, \omega) \quad \text{and} \quad   \widehat{\nabla_1\Psi}(\omega,\omega):=\nabla_1 \hatPsi(\omega,\omega).
    \end{align*}
     There exist positive universal constants $c_1,c_2$ such that, for any $\tau \ge 1,$ it holds with probability at least $1-c_1 e^{-c_2 \tau}$ that
    \begin{align*}
        \sup_{\omega \in \R^d} 
        |\hatPsi(\omega, \omega) -\Psi(\omega, \omega) | 
        &\lesssim \rho_N \quad \text{and} \quad 
        \sup_{\omega \in \R^d}
        \|\nabla_1\hatPsi(\omega, \omega) 
        - \nabla_1 \Psi(\omega, \omega) \|_2
        \lesssim \rho_N,
    \end{align*}
    where, with $r_{y}(t,t') := \E \bigl[ y(t) y(t') \bigr] $ denoting the autocorrelation function of $y$, we have
    \begin{align}
        \label{equ:def_of_rhoN}
        \rho_N :=
        \|r_{y}\|_\infty
        \left (
        \sqrt{\frac{\tau}{N}}
        \lor 
        \frac{\tau}{N}
        \lor 
        \frac{\Gamma(y)}{\sqrt{N}}
        \lor 
        \frac{\Gamma(y)^2}{N}
        \right),
        \qquad \Gamma(y) := \frac{\E[\sup_{t\in D} y(t)]}{\|r_{y}\|_\infty^{1/2}}.
    \end{align}
\end{lemma}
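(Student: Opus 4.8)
The plan is to first trade both suprema over $\omega$ for a single deviation of the empirical autocorrelation. Writing $\yft_n(\omega)=\int_D y_n(t)e^{-it\cdot\omega}\,dt$ and expanding the product, one sees that $\hatPsi(\omega,\omega)$ and $\Psi(\omega,\omega)$ are Fourier transforms, in the variable $t-t'$, of an autocorrelation. Introducing the empirical autocorrelation $\hatr_y(t,t'):=\frac1N\sum_{n=1}^N y_n(t)y_n(t')$, the subtracted covariance terms $\kft_\eta$ and $k_\eta$ cancel exactly, yielding
\begin{align*}
\hatPsi(\omega,\omega)-\Psi(\omega,\omega) &= \int_D\int_D\bigl(\hatr_y(t,t')-r_y(t,t')\bigr)\,e^{-i(t-t')\cdot\omega}\,dt\,dt',\\
\nabla_1\hatPsi(\omega,\omega)-\nabla_1\Psi(\omega,\omega) &= \int_D\int_D\bigl(\hatr_y(t,t')-r_y(t,t')\bigr)(-it)\,e^{-i(t-t')\cdot\omega}\,dt\,dt'.
\end{align*}
Since $|e^{-i(t-t')\cdot\omega}|=1$ and $\|t\|_2\le\sqrt d$ on $D$, both suprema over $\omega\in\R^d$ are dominated, up to a dimensional constant, by $\|\hatr_y-r_y\|_{L^1(D\times D)}\le |D|^2\,\|\hatr_y-r_y\|_{L^\infty(D\times D)}$. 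It therefore suffices to bound $\sup_{t,t'\in D}|G_N(t,t')|$ by $\rho_N$ with the stated probability, where $G_N(t,t'):=\frac1N\sum_{n=1}^N\bigl(y_n(t)y_n(t')-r_y(t,t')\bigr)$; this single bound then controls $\hatPsi$ and $\nabla_1\hatPsi$ simultaneously on one event.

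\textbf{Step 2: tail structure of the increments.} Next I would record the relevant tail parameters of the process $(t,t')\mapsto y_n(t)y_n(t')$. Under Assumption \ref{assumption:psiestimates}, $f\in H^s(D)$ with $s>d/2$, so the Sobolev embedding $H^s(D)\hookrightarrow C(D)$ gives $\|f\|_\infty<\infty$; hence $y(t)=f(t-\zeta)+\eta(t)$ is the sum of a bounded term and the Gaussian $\eta(t)$, and is sub-Gaussian with parameter of order $\|r_y\|_\infty^{1/2}$ uniformly in $t$. Consequently the centered products $y(t)y(t')-r_y(t,t')$ are sub-exponential, their weak variance obeys $\sup_{t,t'}\Var\bigl(y(t)y(t')\bigr)\lesssim\|r_y\|_\infty^2$, and their sub-exponential scale is $\lesssim\|r_y\|_\infty$.

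\textbf{Step 3: concentration around the mean.} I would then invoke a Talagrand-type concentration inequality for suprema of \emph{unbounded} (sub-exponential) empirical processes (e.g.\ Adamczak's inequality) for $G_N$. Feeding in the weak variance $\asymp\|r_y\|_\infty^2$ and the sub-exponential envelope scale $\asymp\|r_y\|_\infty$ from Step 2 gives, with probability at least $1-c_1e^{-c_2\tau}$,
\begin{align*}
\sup_{t,t'\in D}|G_N(t,t')|\ \lesssim\ \E\Bigl[\sup_{t,t'\in D}|G_N(t,t')|\Bigr]\ +\ \|r_y\|_\infty\sqrt{\tfrac{\tau}{N}}\ +\ \|r_y\|_\infty\tfrac{\tau}{N},
\end{align*}
which supplies exactly the first two terms of $\rho_N$.

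\textbf{Step 4: the expected supremum (the main obstacle).} What remains, and where I expect the real difficulty to lie, is showing $\E\sup_{t,t'}|G_N|\lesssim\|r_y\|_\infty\Gamma(y)/\sqrt N+\|r_y\|_\infty\Gamma(y)^2/N$. After symmetrization, the increments of the \emph{quadratic} process $\frac1N\sum_n\epsilon_n y_n(t)y_n(t')$ must be split into a sub-Gaussian part, whose metric scales like $\|r_y\|_\infty^{1/2}$ times the increments of $y$, and a sub-exponential part, whose metric scales like the square of those increments. A two-parameter generic chaining bound then yields $\E\sup|G_N|\lesssim \gamma_2/\sqrt N+\gamma_1/N$, and the majorizing-measure theorem identifies these complexity functionals, up to constants, with $\|r_y\|_\infty^{1/2}\,\E[\sup_t y(t)]$ and $(\E[\sup_t y(t)])^2$, i.e.\ with $\|r_y\|_\infty\Gamma(y)$ and $\|r_y\|_\infty\Gamma(y)^2$. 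Combined with Step 3 this produces the bound of order $\rho_N$. The crux is precisely this last step: controlling the complexity of a degree-two empirical process by the first-order expected supremum $\E[\sup_t y(t)]$, while simultaneously tracking the sub-Gaussian and sub-exponential regimes so that exactly the four terms of $\rho_N$ emerge.
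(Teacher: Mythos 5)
Your Step 1 is exactly the paper's reduction: the covariance subtraction cancels, the suprema over $\omega$ are dominated by $\sup_{t,t'\in D}|\hatr_y(t,t')-r_y(t,t')|$ (with the extra factor $\|t\|_2\le\sqrt d$ for the gradient), and everything hinges on a uniform bound for the quadratic empirical process. Where you diverge is in how that bound is obtained: the paper does not go through Adamczak plus a hand-built two-parameter chaining argument, but instead indexes the process by the evaluation functionals $\mcF=\{\mcE_t\}_{t\in D}$ and invokes an off-the-shelf product empirical process theorem \cite[Theorem 1.13]{mendelson2016upper}, which delivers all four terms of $\rho_N$ in a single high-probability statement, together with \cite[Proposition 3.1]{al2025covariance} to convert $d_{\psi_2}(\mcF)$ and $\gamma_2(\mcF,\psi_2)$ into $\|r_y\|_\infty^{1/2}$ and $\E[\sup_t y(t)]$; it also handles the fact that $y$ is \emph{not centered} ($\E y(t)=\E f(t-\zeta)\neq 0$) by an explicit decomposition $f(X)g(X)=\barf\barg+\barg\,\E f+\barf\,\E g+\E f\,\E g$, controlling the linear piece with \cite[Theorem 2.5]{mendelson2010empirical} --- a step your proposal does not address at all.

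The genuine gap is in your Step 4, and you correctly sensed that this is the crux. You assert that ``the majorizing-measure theorem identifies'' $\gamma_2(\mcF,\psi_2)$ with $\|r_y\|_\infty^{1/2}\,\E[\sup_t y(t)]$. The majorizing measure theorem is a two-sided equivalence only for \emph{Gaussian} processes; for a merely sub-Gaussian process one has the upper direction $\E\sup_t y(t)\lesssim\gamma_2$, but the lower direction (Talagrand's minoration), which is what you need to bound the complexity functional \emph{by} the expected supremum, can fail badly --- Bernoulli-type processes are the standard counterexample. Here $y(t)=f(t-\zeta)+\eta(t)$ is not Gaussian because of the random shift, so the identification requires a structure-specific argument: one must compare the $\psi_2$-metric of $y$ with the intrinsic $L_2$-metric of the Gaussian component $\eta$, absorb the bounded shifted-signal part, and only then apply Gaussian minoration; this is precisely what the cited \cite[Proposition 3.1]{al2025covariance} supplies and what your proposal leaves unproved. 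A secondary inaccuracy: Adamczak's inequality in Step 3 does not produce the clean term $\|r_y\|_\infty\,\tau/N$; its $\tau/N$ coefficient is the $\psi_1$-norm of $\max_{n\le N}\sup_{t,t'}|y_n(t)y_n(t')|$, which for the unbounded envelope $(\sup_t|y|)^2$ carries an extra $\log N$ factor and a $\Gamma(y)^2$-type contribution, so your stated four terms do not emerge in the advertised form from that route without further work. The Mendelson product-process theorem used in the paper is what avoids both of these obstacles.
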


Next, we analyze the recovery of the signal in Fourier domain.  
The following assumption generalizes to a multivariate setting the commonly used \emph{ordinary smoothness} assumption in the deconvolution literature; see for example \cite{Delaigle2008repeated,fan1991optimal,li1998nonparametric, comte2015density,kurisu2022uniform}. We make explicit the dependence on $\|f\|_\infty,$ as this quantity will appear in our notion of signal-to-noise ratio in Subsection \ref{subsec:snr}.

\begin{assumption}[Smoothness]\label{assumption:smoothness}
There exists a constant $\beta>0$ and positive universal constants $C\ge c$ such that, for all $\omega\in\R^d,$
\[
c\|f\|_\infty (1+\|\omega\|_2)^{-\beta}
\le
|\fft(\omega)|
\le
C\|f\|_\infty (1+\|\omega\|_2)^{-\beta}.
\]
\end{assumption}

In Section~\ref{sec:vanishing}, we will show that 
the non-vanishing assumption for $\fft$ can be relaxed, but doing so requires a more intricate procedure and theoretical analysis. We remark further that, in principle, all our results extend to alternative formulations of \textit{super smoothness} considered in the aforementioned literature; however, we do not investigate this extension in this work.

\begin{theorem}[Recovery in Fourier domain]\label{thm:FTDeviation}
Under Assumption~\ref{assumption:smoothness} with smoothness parameter $\beta$, let $h\in(0,1)$ satisfy $\rho_N \le c_0 \|f\|_\infty^2 h^{3\beta+1}$ for a sufficiently small universal constant $c_0>0$. Then, there exist positive universal constants $c_1,c_2$ such that for any $\tau\ge 1$,
with probability at least $1-c_1e^{-c_2\tau}$,
\[
\sup_{\omega\in[-h^{-1},h^{-1}]^d}
\bigl|\hatfft(\omega)-\fft(\omega)\bigr|
\lesssim
\frac{\rho_N h^{-2\beta-1}}{\|f\|_\infty}.
\]
If in addition $\sup_{\omega \in [-h^{-1},h^{-1}]^d} \frac{\|\nabla \fft(\omega) \|_2}{|\fft(\omega)|}\lesssim 1$ and $\rho_N \le c_0 \|f\|_\infty^2 h^{2\beta+1},$ then the same conclusion holds with the improved bound
\[
\sup_{\omega\in[-h^{-1},h^{-1}]^d}
\bigl|\hatfft(\omega)-\fft(\omega)\bigr|
\lesssim
\frac{\rho_N h^{-\beta-1}}{\|f\|_\infty}.
\]
\end{theorem}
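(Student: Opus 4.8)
The plan is to compare $\hatfft$ and $\fft$ through their logarithmic representations and reduce everything to a uniform bound on the estimation error of the integrand. Writing
\begin{equation*}
g(\omega) := \frac{\nabla_1 \Psi(\omega,\omega)}{\Psi(\omega,\omega)}, \qquad \hat g(\omega) := \frac{\nabla_1 \hatPsi(\omega,\omega)}{\tildePsi(\omega,\omega)},
\end{equation*}
and setting $A(\omega) := \int_0^1 g(\alpha\omega)\cdot\omega\,d\alpha$ and $\hat A(\omega) := \int_0^1 \hat g(\alpha\omega)\cdot\omega\,d\alpha$, Theorem \ref{thm:kotlarski} gives $\fft = e^{A}$ and the algorithm gives $\hatfft = e^{\hat A}$. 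The elementary bound $|e^z - 1| \le |z|e^{|z|}$ applied to $z = \hat A(\omega) - A(\omega)$ then yields
\begin{equation*}
\bigl|\hatfft(\omega) - \fft(\omega)\bigr| = |\fft(\omega)|\,\bigl|e^{\hat A(\omega)-A(\omega)}-1\bigr| \le |\fft(\omega)|\,|\hat A(\omega) - A(\omega)|\,e^{|\hat A(\omega) - A(\omega)|}.
\end{equation*}
So the whole proof reduces to showing $|\hat A - A| \lesssim 1$ on the box (so the exponential factor is harmless) together with a sharp bound on $|\hat A - A|$.

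Since $|\hat A(\omega) - A(\omega)| \le \|\omega\|_2 \sup_{\alpha\in[0,1]}\|\hat g(\alpha\omega) - g(\alpha\omega)\|_2$ and $\|\omega\|_2 \lesssim h^{-1}$ on $[-h^{-1},h^{-1}]^d$, it suffices to bound $\sup_{\omega}\|\hat g(\omega) - g(\omega)\|_2$ over the box. First I would record the deterministic sizes: on the diagonal $\Psi(\omega,\omega) = |\fft(\omega)|^2$ and $\nabla_1\Psi(\omega,\omega) = \overline{\fft(\omega)}\,\nabla\fft(\omega)$, so $g = \nabla\fft/\fft$ exactly as in the proof of Theorem \ref{thm:kotlarski}. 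Assumption \ref{assumption:smoothness} together with continuity and non-vanishing of $\fft$ near the origin gives $|\fft(\omega)| \gtrsim \|f\|_\infty\,\|\omega\|_2^{-\beta} \gtrsim \|f\|_\infty h^\beta$ on the box, hence $|\Psi(\omega,\omega)| \gtrsim \|f\|_\infty^2 h^{2\beta}$. For the regularized denominator I would show $|\tildePsi - \hatPsi| \le N^{-1/2} \lesssim \rho_N$ (the last step using $\|r_y\|_\infty \gtrsim 1$, a consequence of $\fft(0)=1$, whence $\|f\|_1\ge 1$ and $\|f\|_2\ge 1$), so that Lemma \ref{lem:psiestimates} and the assumed smallness of $\rho_N$ force $|\tildePsi| \gtrsim |\Psi|$ and $|\tildePsi - \Psi| \lesssim \rho_N$ uniformly on the box.

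With these in hand I would split
\begin{equation*}
\hat g - g = \frac{\nabla_1\hatPsi - \nabla_1\Psi}{\tildePsi} + \nabla_1\Psi\,\frac{\Psi - \tildePsi}{\tildePsi\,\Psi}.
\end{equation*}
The first term is $\lesssim \rho_N/|\Psi| \lesssim \rho_N/(\|f\|_\infty^2 h^{2\beta})$ by Lemma \ref{lem:psiestimates}. For the second I would use $\|\nabla_1\Psi(\omega,\omega)\|_2 = |\fft(\omega)|\,\|\nabla\fft(\omega)\|_2$. In the first regime I bound $\|\nabla\fft\|_2 \lesssim \|f\|_1 \lesssim \|f\|_\infty$ crudely from the compact support of $f$, giving a second term $\lesssim \|f\|_\infty\rho_N/|\fft(\omega)|^3 \lesssim \rho_N/(\|f\|_\infty^2 h^{3\beta})$, which dominates; multiplying by $h^{-1}$ gives $|\hat A - A| \lesssim \rho_N/(\|f\|_\infty^2 h^{3\beta+1}) \lesssim 1$ under the hypothesis, so that $|\hatfft - \fft| \lesssim |\fft(\omega)|\,\rho_N/(\|f\|_\infty^2 h^{3\beta+1}) \le \rho_N h^{-3\beta-1}/\|f\|_\infty$ since $|\fft| \le \|f\|_\infty$. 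In the second regime, the extra hypothesis $\|\nabla\fft\|_2/|\fft| \lesssim 1$ upgrades $\|\nabla_1\Psi\|_2 \lesssim |\fft(\omega)|^2$, so the second term improves to $\lesssim \rho_N/|\fft(\omega)|^2 \lesssim \rho_N/(\|f\|_\infty^2 h^{2\beta})$, matching the first; this removes a factor $h^{-\beta}$ and yields the sharper $\rho_N h^{-2\beta-1}/\|f\|_\infty$ under the weaker condition on $\rho_N$.

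The main obstacle I anticipate is the careful handling of the thresholded denominator $\tildePsi$: one must show it remains comparable to $\Psi$ across the \emph{entire} box despite the worst-case smallness $|\Psi| \gtrsim \|f\|_\infty^2 h^{2\beta}$, and this is exactly what the bandwidth/sample-size condition on $\rho_N$ buys. The second delicate point is the bookkeeping of the powers of $|\fft(\omega)| \asymp \|f\|_\infty \|\omega\|_2^{-\beta}$ through the ratio: bounding $|\fft|$ by $\|f\|_\infty$ prematurely in the numerator $\|\nabla_1\Psi\|_2 = |\fft|\,\|\nabla\fft\|_2$ of the denominator-error term would cost an extra $h^{-\beta}$ (producing $h^{-4\beta}$ rather than $h^{-3\beta}$), so keeping $|\fft(\omega)|$ exact until the final step is what produces precisely $h^{-3\beta-1}$ (resp.\ $h^{-2\beta-1}$).
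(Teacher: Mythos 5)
Your proposal is correct and follows essentially the same route as the paper's proof: both compare $\hatfft = e^{\hat A}$ with $\fft = e^{A}$, reduce to a uniform bound on the difference of integrands via Lemma \ref{lem:psiestimates}, lower-bound $\Psi$ and the regularized $\tildePsi$ on the box $[-h^{-1},h^{-1}]^d$ using Assumption \ref{assumption:smoothness} together with the hypothesis $\rho_N \lesssim \|f\|_\infty^2 h^{3\beta+1}$ (including the bound $|\tildePsi - \hatPsi| \lesssim N^{-1/2} \lesssim \rho_N$, exactly as the paper does), and conclude with the elementary estimate $|e^z - 1| \lesssim |z|$ for $|z| \lesssim 1$. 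The only differences are cosmetic: your two-term split of $\hat g - g$ with $\tildePsi$ in the first denominator absorbs the paper's separate cross term $\Delta_3$, and your identity $\nabla_1\Psi(\omega,\omega) = \overline{\fft(\omega)}\,\nabla\fft(\omega)$ (using $\E[\zeta]=0$) is the clean form of the expression the paper bounds the same way.
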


Finally, to analyze the recovery error in the spatial domain, we place the following assumption on the kernel:
\begin{assumption}[Deconvolution kernel]\label{assumption:kernel}
For a positive even integer $p$, the kernel $K:\R^d \to \R$ satisfies:
\begin{enumerate}[label=(\roman*)]
    \item  $  \int_{\R^d} K(t) \, dt = 1.$
    \item $ \int_{\R^d} t^\nu K(t) \, dt = 0 $ for all multi-indices $\nu \in \N^p$ with $1 \le |\nu| \le p-1.$
    \item $t \mapsto t^\nu K(t) \in L^1(\R^d)$ for all multi-indices $\nu \in \N^p$ with $ |\nu| = p.$
    \item $  \int_{\R^d} \|t\|_2^p\, |K(t)| \, dt \le C_K $ where $C_K$ if a finite constant allowed to depend on $K, d$, and $p.$ 
    \item $ \Kft(\omega) = 0 $ for all $ \|\omega\|_\infty > 1 .$  
\end{enumerate}
\end{assumption}
We give an example of a family of kernels satisfying Assumption \ref{assumption:kernel} in the Supplementary Material \ref{sec:appendixtheory}. We are now ready to state our main result concerning recovery in space. Our proof technique borrows from the deconvolution literature, see e.g. \cite[Lemma A.3]{kurisu2022uniform} and \cite{meister2007Deconvolution} for related one-dimensional analyses, which achieve similar rates for deconvolution problems.

\begin{theorem}[Recovery in spatial domain]\label{thm:ErrorBoundDeconvolutionEstimator}
    Assume the conditions of Theorem~\ref{thm:FTDeviation} hold. In addition, suppose that $f$ satisfies Assumption~\ref{assumption:smoothness} with smoothness parameter $\beta > d$ and let $K$ be a kernel satisfying Assumption \ref{assumption:kernel} with $p> \beta-d$ and bandwidth parameter $h$. Then, there exist positive universal constants $c_1,c_2$ such that, for all $\tau \ge 1,$ it holds with probability at least $1 - c_1 e^{-c_2 \tau}$ that the estimator $\hatf(t)$ in Algorithm \ref{alg:mainalgorithm} satisfies 
    \begin{align*}
        \|\hatf - f\|_\infty 
        \lesssim
        \rho_N \|f\|_\infty^{-1} h^{-2\beta - d - 1} + \|f\|_\infty h^{\beta-d},
    \end{align*}
    where $\rho_N$ is defined in \eqref{equ:def_of_rhoN}. 
    Choosing $h= h^* \asymp \inparen{ \frac{\rho_N}{\|f\|^2_\infty}}^{\frac{1}{3\beta + 1}}$ thus yields an estimator $\hatf$ with relative error satisfying
    \begin{align*}
        \frac{\|\hatf - f\|_\infty}{\|f\|_\infty} 
        \lesssim 
        \inparen{\frac{\rho_N}{\|f\|_\infty^2}}^{\frac{\beta - d}{3\beta + 1}}. 
    \end{align*}
 \end{theorem}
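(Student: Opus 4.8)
The plan is a bias--variance decomposition of the deconvolution estimator. I would introduce the \emph{oracle} (noise-free) smoothed signal
$$
\tildef(t) := \frac{1}{(2\pi)^d}\int_{\R^d} e^{i\omega\cdot t}\,\fft(\omega)\,\Kft(h\omega)\,d\omega,
$$
which is exactly $\hatf$ with the true $\fft$ substituted for $\hatfft$, and write $\hatf - f = (\hatf - \tildef) + (\tildef - f)$. The first term is stochastic and will be controlled using the Fourier-domain guarantee of Theorem~\ref{thm:FTDeviation}; the second is a deterministic approximation (bias) term governed by the smoothness of $f$ (Assumption~\ref{assumption:smoothness}) and the order $p$ of the kernel (Assumption~\ref{assumption:kernel}). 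The two contributions will be shown to match the two terms in the displayed bound.

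For the stochastic term, Assumption~\ref{assumption:kernel}(v) forces $\Kft(h\omega)=0$ outside $[-h^{-1},h^{-1}]^d$, and $\Kft$ is bounded (indeed $\|\Kft\|_\infty \le \|K\|_{L^1}$), so I would bound
$$
\|\hatf - \tildef\|_\infty \le \frac{\|K\|_{L^1}}{(2\pi)^d}\int_{[-h^{-1},h^{-1}]^d} \bigl|\hatfft(\omega) - \fft(\omega)\bigr|\,d\omega \lesssim h^{-d}\!\!\sup_{\omega\in[-h^{-1},h^{-1}]^d}\!\!\bigl|\hatfft(\omega) - \fft(\omega)\bigr|,
$$
using $\mathrm{vol}([-h^{-1},h^{-1}]^d)=(2/h)^d$. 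Inserting the first bound of Theorem~\ref{thm:FTDeviation}, valid on an event of probability $1-c_1e^{-c_2\tau}$ under the hypothesis $\rho_N \lesssim \|f\|_\infty^2 h^{3\beta+1}$, yields $\|\hatf - \tildef\|_\infty \lesssim \rho_N\|f\|_\infty^{-1}h^{-3\beta-d-1}$, the first term of the claim.

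For the bias term, I would use Fourier inversion $f(t)=(2\pi)^{-d}\int e^{i\omega\cdot t}\fft(\omega)\,d\omega$ to write $\tildef(t)-f(t) = (2\pi)^{-d}\int_{\R^d} e^{i\omega\cdot t}\fft(\omega)[\Kft(h\omega)-1]\,d\omega$ and split at $\|\omega\|_\infty=h^{-1}$. On the tail $\|\omega\|_\infty>h^{-1}$ the kernel vanishes, so the contribution is $(2\pi)^{-d}\int_{\|\omega\|_\infty>h^{-1}}|\fft(\omega)|\,d\omega$; using $|\fft(\omega)|\lesssim\|f\|_\infty\|\omega\|_2^{-\beta}$ and polar coordinates, the radial integral $\int_{h^{-1}}^\infty r^{d-1-\beta}\,dr$ converges precisely because $\beta>d$ and scales like $h^{\beta-d}$, giving $\lesssim\|f\|_\infty h^{\beta-d}$. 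On the core $\|\omega\|_\infty\le h^{-1}$, the vanishing-moment conditions (ii)--(iv) and a Taylor expansion of $e^{-ih\omega\cdot t}$ give $|\Kft(h\omega)-1|\lesssim C_K(h\|\omega\|_2)^p$, so this contribution is $\lesssim h^p\|f\|_\infty\int_{\|\omega\|_\infty\le h^{-1}}\|\omega\|_2^{p-\beta}\,d\omega\lesssim h^p\|f\|_\infty h^{\beta-d-p}=\|f\|_\infty h^{\beta-d}$, where $\int_0^{h^{-1}}r^{p-\beta+d-1}\,dr$ converges at the origin precisely because $p>\beta-d$. Hence $\|\tildef-f\|_\infty\lesssim\|f\|_\infty h^{\beta-d}$, the second term.

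Summing the two bounds gives the displayed estimate. To optimize, I balance $\rho_N\|f\|_\infty^{-1}h^{-3\beta-d-1}$ against $\|f\|_\infty h^{\beta-d}$, yielding $h^\star\asymp(\rho_N/\|f\|_\infty^2)^{1/(4\beta+1)}$; substituting back produces the relative-error rate $(\rho_N/\|f\|_\infty^2)^{(\beta-d)/(4\beta+1)}$. I would then verify that this choice respects the hypothesis of Theorem~\ref{thm:FTDeviation}: since $\rho_N=\|f\|_\infty^2(h^\star)^{4\beta+1}\le\|f\|_\infty^2(h^\star)^{3\beta+1}$ for $h^\star\le1$, the condition $\rho_N\lesssim\|f\|_\infty^2 h^{3\beta+1}$ holds. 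Granting Theorem~\ref{thm:FTDeviation}, the variance step is routine; the \textbf{main obstacle} is the bias analysis, namely reconciling the two qualitatively different contributions---high-frequency truncation and low-frequency kernel smoothing---so that both scale as $h^{\beta-d}$, and checking that $\beta>d$ and $p>\beta-d$ are exactly the integrability thresholds making the respective radial integrals finite.
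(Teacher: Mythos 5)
Your proof is correct and follows essentially the same route as the paper's: the identical decomposition into a stochastic term supported on $[-h^{-1},h^{-1}]^d$ and controlled by Theorem~\ref{thm:FTDeviation}, plus a bias term split into a Taylor-expanded core (where $p>\beta-d$ makes the radial integral finite) and a vanishing-kernel tail (where $\beta>d$ does), followed by the same bandwidth balancing. The only cosmetic deviations --- phrasing the split through the oracle $\tildef$, Taylor-expanding the exponential inside $\Kft(h\omega)$ rather than $\Kft$ itself, and explicitly verifying that $h^*$ satisfies the hypothesis of Theorem~\ref{thm:FTDeviation} (a point the paper leaves implicit) --- do not change the argument.
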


\begin{proof}
Theorem~\ref{thm:FTDeviation} shows that the  event $\mathsf{E}_{h,\tau}$ on which $\sup_{\omega \in [-h^{-1}, h^{-1}]^d} |\hatfft(\omega)-\fft(\omega) | \lesssim \rho_N \|f\|_\infty^{-1} h^{-2\beta-1}$ satisfies $\P(\mathsf{E}_{h, \tau}) \ge 1 - c_1e^{-c_2 \tau}.$ Now, note that 
\begin{align*}
    \|\hatf - f\|_\infty 
    &=
    \sup_{t \in D}
    \left | 
    \frac{1}{(2\pi)^d} \int_{\R^d} e^{i\omega \cdot t} \hatfft(\omega) \Kft(h\omega) \,d\omega
    -
    \frac{1}{(2\pi)^d} \int_{\R^d} e^{i\omega \cdot t} \fft(\omega)  \,d\omega
    \right|\\
    &\hspace{-0.5cm}\le 
    \frac{1}{(2\pi)^d} \int_{\R^d}  |\Kft(h\omega)| |\hatfft(\omega)  - \fft(\omega)| 
    \,d\omega
    + 
    \frac{1}{(2\pi)^d} \int_{\R^d} |\fft(\omega)| |\Kft(h\omega)-1|  \, d\omega\\
    & \hspace{-0.5cm}=: \textbf{(I)} + \textbf{(II)}.
\end{align*}
Conditional on $\mathsf{E}_{h,\tau},$ we have
\begin{align*}
    \textbf{(I)} 
    &=
    \frac{1}{(2\pi)^d} \int_{[-h^{-1}, h^{-1}]^d}  |\Kft(h\omega)| |\hatfft(\omega)  - \fft(\omega)| d\omega
    \le \frac{C_{\Kft}}{(2\pi)^d}  \int_{[-h^{-1}, h^{-1}]^d}  |\hatfft(\omega)  - \fft(\omega)| d\omega\\
    &\le \frac{C_{\Kft}}{(2\pi)^d}  \times 2h^{-d} \sup_{\omega \in [-h^{-1},h^{-1}]^d}  |\hatfft(\omega)  - \fft(\omega)| 
    \lesssim \rho_N \|f\|_\infty^{-1} h^{-2\beta-d-1},
\end{align*}
where the first equality holds since
 $\Kft$ is supported on $[-1,1]^d$ by Assumption~\ref{assumption:kernel} (v),
 for the first inequality we use that  $\sup_{\omega\in \R^d}\Kft(\omega)\le C_{\Kft}$ for some positive finite constant $C_{\Kft}$ since $K$ has compact support.
 
 Next, for $\textbf{(II)}$, we have by Assumption~\ref{assumption:smoothness} that
\begin{align*}
    \textbf{(II)} 
    &\lesssim \frac{\|f\|_\infty}{(2\pi)^d} \int (1+\| \omega\|_2)^{-\beta} |\Kft(h\omega)-1| d\omega\\
    &=\frac{\|f\|_\infty}{(2\pi)^d} 
    \inparen{ 
    \int_{[-h^{-1}, h^{-1}]^d}
    +
    \int_{\R^d \setminus [-h^{-1}, h^{-1}]^d}
    }
    (1+\| \omega\|_2)^{-\beta} |\Kft(h\omega)-1| d\omega 
    \\
    &\le \frac{\|f\|_\infty}{(2\pi)^d} 
    \int_{[-h^{-1}, h^{-1}]^d} (1+\| \omega\|_2)^{-\beta} |\Kft(h\omega)-1| d\omega + 
    \frac{\|f\|_\infty}{(2\pi)^d} 
    \int_{ \|\omega\|_2 > h^{-1} 
    }(1+\| \omega\|_2)^{-\beta} d\omega\\ 
    &=: \textbf{(A)} + \textbf{(B)}.
\end{align*}
For the integral over the complement, since $\omega \in \R^d \setminus [-h^{-1}, h^{-1}]^d$, we have $\|h\omega\|_\infty = h\|\omega \|_\infty > h h^{-1}=1,$ and so by Assumption~\ref{assumption:kernel} (v) $|\Kft(h\omega) -1|=|0-1|=1.$ The second inequality follows since $\R^d \setminus [-h^{-1}, h^{-1}]^d \subset \{\omega: \|\omega\|_2 > h^{-1} \}.$ To control $\textbf{(A)},$ we note that the $p$-th order remainder version of Taylor's expansion of $\Kft(t)$ about $t=0$ yields
\begin{align*}
    \Kft(h\omega) = 1 + \sum_{|\nu|=p}\frac{D^\nu \Kft(\xi)}{\nu !}  (h\omega)^\nu,
\end{align*}
where $\xi$ is a point on the line segment connecting the origin and $h\omega,$ and we have used that all partial derivatives up to order $p-1$ vanish by Assumption~\ref{assumption:kernel} (ii). Note that for any $\omega$, $D^\nu \Kft(\omega) = (-i)^{|\nu|} \int_{\R^d} t^\nu K(t) e^{-it \cdot \omega} dt$, which by Assumption~\ref{assumption:kernel} (iii) exists and is finite, and further is continuous in $\omega.$ Therefore, $D^\nu \Kft$ must be bounded on the compact set $[-1,1]^d.$ Further, for $|\nu|=p$, 
\begin{align*}
    |(h\omega)^\nu| 
    \le h^p \prod_{j=1}^d|\omega_j|^{\nu_j}
    \le h^p \prod_{j=1}^d\|\omega\|_2^{\nu_j}
    = h^p\|\omega\|_2^p.
\end{align*}
Therefore 
\begin{align*}
    \textbf{(A)} \lesssim  
    \|f\|_\infty h^p \int_{[-h^{-1}, h^{-1}]^d}  
    \|\omega\|_2^{p} (1+ \|\omega\|_2)^{-\beta} 
    d\omega.
\end{align*}
Using that $[-h^{-1}, h^{-1}]^d \subset B_d(\sqrt{d}h^{-1})$ where $B_d(R)$ is the $d$-dimensional Euclidean ball of radius $R>0,$ and switching to polar coordinates, we obtain
\begin{align*}
    \textbf{(A)} \lesssim  
    \|f\|_\infty h^p \int_{0}^{\sqrt{d}h^{-1}}  
    r^{p+d-1} (1+r)^{-\beta} dr.
\end{align*}
For $h\in(0,1)$, the integral satisfies
\begin{align*}
    \int_0^{\sqrt d\,h^{-1}}
    r^{p+d-1}(1+r)^{-\beta}\,dr
    \lesssim
    \begin{cases}
    h^{-(p+d-\beta)}, & \beta < p+d,\\
    \log(1/h), & \beta = p+d,\\
    1, & \beta > p+d.
    \end{cases}
\end{align*}
Since $p>\beta-d$ by assumption, we obtain $\textbf{(A)}
    \lesssim
    \|f\|_\infty h^p h^{-(p+d-\beta)}
    =
    \|f\|_\infty h^{\beta-d}.
    $

For $\textbf{(B)},$ provided that $\beta >d$, by switching to polar coordinates we have
\begin{align*}
    \textbf{(B)} 
    &\lesssim
    \|f\|_\infty 
    \int_{\|\omega\|_2 > h^{-1}} (1 + \|\omega\|_2)^{-\beta} d\omega
    \lesssim
    \|f\|_\infty 
    \int_{h^{-1}}^\infty r^{d-1 } (1+r)^{-\beta}dr
    \\
    &\lesssim
    \|f\|_\infty 
    \int_{h^{-1}}^\infty r^{d-\beta-1 } dr
    \lesssim \|f\|_\infty h^{\beta-d},
\end{align*}
where we used that since $h\in(0,1),$ $r \ge h^{-1} \ge 1$ on the domain of integration, and therefore $(1+r)^{-\beta}\lesssim r^{-\beta}.$

It follows that $\textbf{(II)} \lesssim\|f\|_\infty h^{\beta - d}.$ Combining the estimates for \(\mathbf{(I)}\) and \(\mathbf{(II)}\) proves the claim. 
\end{proof}

\subsection{Signal-to-noise ratio and model parameters}\label{subsec:snr}
    In much of the existing literature on MRA, results are formulated in terms of a \textit{signal-to-noise ratio}; see, for example, \cite{perry2019sample, abbe2018multireference, ghosh2023sparse}. While, as described in Subsection \ref{ssec:comparisons},  the functional MRA framework considered in this work differs significantly from these prior settings, we introduce a notion of signal-to-noise ratio $(\mathsf{snr})$ that is tailored to the functional version of the MRA problem addressed here. Namely, we define  
    \begin{align}
    \label{equ:snr}
        \mathsf{snr} := \frac{\|f\|_\infty^2 }{\|k_{\eta}\|_\infty^{1/2} \E[\sup_{t\in D} \eta(t)]}.
    \end{align}
  We now illustrate our results in the challenging \textit{high-noise} regime, where $ \|k_\eta\|_\infty^{1/2} \ge \|f\|_\infty $. Throughout this subsection, we adopt the framework of Theorem~\ref{thm:ErrorBoundDeconvolutionEstimator} and assume we are in the high-noise setting. Proofs for all results stated in this subsection are deferred to the Supplementary Material~\ref{app:highNoiseRegime}.

    \begin{corollary}[High-noise regime] \label{corr:highNoiseRegime} 
    Under the assumptions of Theorem \ref{thm:ErrorBoundDeconvolutionEstimator}, suppose that $\|k_\eta\|^{1/2}_\infty \ge \|f\|_\infty$ and $N \gtrsim \Gamma(y)^2 \lor \tau$; then, it holds with probability at least $1-c_1e^{-c_2 \tau}$ that
    \begin{align*}
        \frac{\|\hatf - f\|_\infty}{\|f\|_\infty}
        \lesssim 
        \inparen{
        \|k_\eta\|_{\infty}
        \sqrt{\frac{\tau}{N}}
        \lor 
        \frac{\mathsf{snr}^{-1}}{\sqrt{N}}
        }^{\frac{\beta - d}{3\beta + 1}}.
    \end{align*}
    \end{corollary}
    \begin{remark}\label{rmk:snr}
        Corollary~\ref{corr:highNoiseRegime} implies that, in the high-noise regime, it suffices to take $N \gtrsim \mathsf{snr}^{-2}$ samples to ensure recovery of the signal function in relative supremum norm. Note further that since $\beta > d$, then as $\beta \to \infty$ (increasing smoothness of the underlying signal) the bound becomes $(\mathsf{snr}^{-1}/\sqrt{N})^{1/3}.$ Finally, we remark that in the high-noise regime, it is necessarily true that $\mathsf{snr} < 1$ since $\E[\sup_{t \in D} \eta(t)] \ge \|f\|_\infty$ as shown in the proof of Corollary~\ref{corr:highNoiseRegime}. 
    \end{remark}
    If the covariance function $k_\eta$ of the noise process $\eta$ belongs to a parametric family, then, for a fixed signal $f,$ we may study the scaling of $\mathsf{snr}$ with the parameters of $k_\eta.$ For example,  suppose that $\eta$ has squared-exponential covariance function
    \begin{align}\label{eq:SEkernel}
        k_{\eta}^{\text{SE}}(t,t') = \sigma^2 \exp \inparen{-\frac{\|t-t'\|_2^2}{2 \lambda^2}}, \qquad t, t' \in D,
    \end{align}
    which satisfies Assumption \ref{assumption:psiestimates} with arbitrary Sobolev order $s.$  Then, the noise becomes stronger as the scale parameter $\sigma>0$ increases, and more spatially uncorrelated as the lengthscale parameter $\lambda>0$ decreases; if all other quantities are fixed, the high-noise regime holds provided that $\sigma$ is sufficiently large, or that $\lambda$ is sufficiently small.  The following two corollaries quantify the individual effect of large $\sigma$ and small $\lambda$ in the sample complexity.  In Corollary \ref{corr:largesigma} (resp. Corollary \ref{corr:smallLengthscale}) we treat $\|f\|_\infty$ and $\lambda$ (resp. $\|f\|_\infty$ and $\sigma$) as constants. Their proofs follow from Corollary \ref{corr:highNoiseRegime}, using the scalings of $\mathsf{snr}$ with $\sigma$ and $\lambda:$ namely, $\mathsf{snr}^{-1} \asymp \sigma^{2}$ and  $\mathsf{snr}^{-1} \asymp \sqrt{\log \lambda^{-d}}$ for sufficiently small $\lambda.$

        \begin{corollary}[High-noise, large-$\sigma$ regime]\label{corr:largesigma}
      For all $\sigma$ sufficiently large, the following holds.
        For any $\tau \ge 1$, if $N \ge \tau$, it holds with probability at least $1-c_1e^{-c_2 \tau}$ that
        \begin{align*}
        \frac{\|\hatf - f\|_\infty}{\|f\|_\infty}
        \lesssim 
        \inparen{
        \sigma^2 \sqrt{\frac{\tau}{N}}
        }^{\frac{\beta - d}{3\beta + 1}}.
    \end{align*}
    \end{corollary}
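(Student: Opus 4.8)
The plan is to specialize Corollary~\ref{corr:highNoiseRegime} to the squared-exponential kernel \eqref{eq:SEkernel}, by tracking how each quantity appearing there scales with $\sigma$ while holding $\|f\|_\infty$ and $\lambda$ fixed. First I would record that $\|k_\eta\|_\infty = \sup_{t,t'\in D} k_\eta^{\text{SE}}(t,t') = \sigma^2$, attained on the diagonal $t=t'$; in particular $\|k_\eta\|_\infty^{1/2}=\sigma \ge \|f\|_\infty$ for all $\sigma$ sufficiently large, so the high-noise hypothesis $\|k_\eta\|_\infty^{1/2}\ge\|f\|_\infty$ of Corollary~\ref{corr:highNoiseRegime} is met.

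The crux is to pin down the two $\sigma$-dependent ingredients: the expected supremum $\E[\sup_{t\in D}\eta(t)]$ (which enters both $\mathsf{snr}$ and $\Gamma(y)$) and the autocorrelation norm $\|r_y\|_\infty$. I would exploit the exact distributional scaling $\eta \equald \sigma\,\tilde\eta$, where $\tilde\eta$ is the centered Gaussian process with unit-variance covariance $\exp(-\|t-t'\|_2^2/(2\lambda^2))$; by linearity $\E[\sup_{t}\eta(t)] = \sigma\,\E[\sup_t\tilde\eta(t)]$, and the latter is a finite positive constant depending only on $\lambda$ and $D$ (finiteness follows from sample continuity, since under Assumption~\ref{assumption:psiestimates} the process lies in $H^s(D)$ with $s>d/2$, hence in $C(D)$). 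Thus $\E[\sup_t\eta(t)]\asymp\sigma$, and from the definition \eqref{equ:snr} together with $\|k_\eta\|_\infty^{1/2}=\sigma$ I obtain $\mathsf{snr}^{-1} = \|k_\eta\|_\infty^{1/2}\,\E[\sup_t\eta(t)]/\|f\|_\infty^2 \asymp \sigma^2$, confirming the scaling quoted before the corollary.

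Next I would check that the sample-size requirement $N\gtrsim\Gamma(y)^2\lor\tau$ collapses to $N\gtrsim\tau$. Writing $r_y(t,t')=\E[f(t-\zeta)f(t'-\zeta)]+k_\eta(t,t')$ (the cross terms vanish since $\eta$ is centered and independent of $\zeta$), the first summand is bounded by $\|f\|_\infty^2$ while the second equals $\sigma^2$ on the diagonal, so $\sigma^2 \le \|r_y\|_\infty \le \|f\|_\infty^2+\sigma^2 \asymp \sigma^2$. Combining this with $\E[\sup_t y(t)]\asymp\sigma$ (from $\sup_t\eta(t)-\|f\|_\infty \le \sup_t y(t)\le \|f\|_\infty+\sup_t\eta(t)$ and the previous paragraph) gives $\Gamma(y)=\E[\sup_t y(t)]/\|r_y\|_\infty^{1/2}\asymp 1$. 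Hence for $\sigma$ large $\Gamma(y)^2$ is bounded by a universal constant, and the hypothesis $N\gtrsim\Gamma(y)^2\lor\tau$ is implied by $N\ge\tau\ge 1$.

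Finally I would substitute $\|k_\eta\|_\infty=\sigma^2$ and $\mathsf{snr}^{-1}\asymp\sigma^2$ into the bound of Corollary~\ref{corr:highNoiseRegime}. Since $\tau\ge 1$ forces $\sqrt{\tau/N}\ge 1/\sqrt N$, the first term in the maximum dominates, $\sigma^2\sqrt{\tau/N}\gtrsim \sigma^2/\sqrt N \asymp \mathsf{snr}^{-1}/\sqrt N$, so the max equals $\sigma^2\sqrt{\tau/N}$ up to constants and the claimed bound follows. I expect the only genuinely substantive step to be the expected-supremum scaling $\E[\sup_t\eta(t)]\asymp\sigma$; everything else is bookkeeping. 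That step is clean here precisely because the squared-exponential process rescales exactly as $\sigma\tilde\eta$, so no Dudley/Sudakov-type chaining estimate is needed—one only needs that the standardized process has a finite expected supremum, which the standing smoothness assumption guarantees.
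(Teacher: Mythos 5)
Your proposal is correct and takes essentially the same route as the paper's proof: both specialize Corollary~\ref{corr:highNoiseRegime} via the exact rescaling $\eta = \sigma z$ to obtain $\E[\sup_{t\in D}\eta(t)]\asymp\sigma$, hence $\mathsf{snr}^{-1}\asymp\sigma^2$, together with $\|k_\eta\|_\infty=\sigma^2$ and $\Gamma(y)\asymp 1$. Your write-up only makes explicit some bookkeeping the paper leaves implicit, namely the bound $\sigma^2 \le \|r_y\|_\infty \le \|f\|_\infty^2+\sigma^2$ behind $\Gamma(y)\asymp 1$ and the observation that $\tau\ge 1$ makes the $\sigma^2\sqrt{\tau/N}$ term dominate the maximum.
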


        \begin{corollary}[High-noise, small-$\lambda$ regime]\label{corr:smallLengthscale}
        For all $\lambda$ sufficiently small, the following holds.
        If $N \gtrsim \log \lambda^{-d},$ it holds with probability at least $1-\lambda^{d}$ that
        \begin{align*}
        \frac{\|\hatf - f\|_\infty}{\|f\|_\infty}
        \lesssim 
        \inparen{ \frac{ \log \lambda^{-d}}{N}}^{\frac{\beta - d}{6\beta + 2}}.
    \end{align*}
    \end{corollary}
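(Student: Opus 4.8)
The plan is to reduce Corollary~\ref{corr:smallLengthscale} to Corollary~\ref{corr:highNoiseRegime} by pinning down how $\mathsf{snr}^{-1}$, the confidence level, and the regularity conditions $N\gtrsim\Gamma(y)^2\lor\tau$ all scale with the lengthscale $\lambda$ as $\lambda\to0$. Since $\|f\|_\infty$ and $\sigma$ are fixed and $\|k_\eta\|_\infty=\sigma^2$ for the squared-exponential kernel \eqref{eq:SEkernel}, the definition \eqref{equ:snr} gives $\mathsf{snr}^{-1}=\sigma\,\E[\sup_{t\in D}\eta(t)]/\|f\|_\infty^2\asymp\E[\sup_{t\in D}\eta(t)]$. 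Thus the whole statement hinges on the asserted scaling $\E[\sup_{t\in D}\eta(t)]\asymp\sqrt{\log\lambda^{-d}}$, equivalently $\mathsf{snr}^{-1}\asymp\sqrt{\log\lambda^{-d}}$.

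The technical heart, which I expect to be the main obstacle, is this two-sided bound on the expected supremum of the Gaussian process $\eta$ with squared-exponential covariance over the fixed domain $D=[-1,1]^d$. I would work with the canonical pseudometric $d_\eta(t,t')^2=\E[(\eta(t)-\eta(t'))^2]=2\sigma^2\bigl(1-\exp(-\|t-t'\|_2^2/(2\lambda^2))\bigr)$, which has diameter $\asymp\sigma$ and for which a $d_\eta$-ball of radius $\epsilon\lesssim\sigma$ corresponds to a Euclidean ball of radius $\asymp\epsilon\lambda/\sigma$; hence the $\epsilon$-covering number obeys $\Nc(\epsilon)\asymp(\sigma/(\epsilon\lambda))^d$. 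For the upper bound I would feed this into Dudley's entropy integral $\E[\sup_t\eta(t)]\lesssim\int_0^{c\sigma}\sqrt{\log\Nc(\epsilon)}\,d\epsilon$, and after substituting $u=\epsilon/\sigma$ obtain $\sigma\sqrt d\int_0^{c}\sqrt{\log\lambda^{-1}-\log u}\,du\asymp\sigma\sqrt{d\log\lambda^{-1}}$, the $-\log u$ term being an $O(1)$ correction because $\int_0^{c}\sqrt{-\log u}\,du<\infty$. For the matching lower bound I would invoke Sudakov minoration at a fixed scale $\epsilon\asymp\sigma$, at which $\asymp\lambda^{-d}$ points can be packed into $D$, yielding $\E[\sup_t\eta(t)]\gtrsim\sigma\sqrt{\log\lambda^{-d}}$; alternatively a direct Slepian comparison against independent Gaussians on a grid of $\asymp\lambda^{-d}$ points with $\lambda$-spacing works. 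Together these give $\E[\sup_t\eta(t)]\asymp\sqrt{\log\lambda^{-d}}$.

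With this scaling established, the remainder is bookkeeping. To upgrade the $1-c_1e^{-c_2\tau}$ guarantee of Corollary~\ref{corr:highNoiseRegime} to the stated $1-\lambda^d$, I would take $\tau\asymp\log\lambda^{-d}$, so that $c_1e^{-c_2\tau}\le\lambda^d$ for $\lambda$ small enough. I then verify the hypothesis $N\gtrsim\Gamma(y)^2\lor\tau$: from $\sup_t y(t)\le\|f\|_\infty+\sup_t\eta(t)$ we get $\E[\sup_t y(t)]\lesssim\sqrt{\log\lambda^{-d}}$, while $\|r_y\|_\infty\asymp1$ because $r_y(t,t)=\E[f(t-\zeta)^2]+\sigma^2$ lies between $\sigma^2$ and $\|f\|_\infty^2+\sigma^2$ (the cross term vanishes as $\eta$ is centered and independent) and the off-diagonal is controlled by Cauchy--Schwarz. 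Hence $\Gamma(y)$ from \eqref{equ:def_of_rhoN} satisfies $\Gamma(y)^2\lesssim\log\lambda^{-d}$, so $\Gamma(y)^2\lor\tau\asymp\log\lambda^{-d}$ and the required condition collapses exactly to the assumed $N\gtrsim\log\lambda^{-d}$.

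Finally I would substitute into the bound of Corollary~\ref{corr:highNoiseRegime}. With $\|k_\eta\|_\infty=\sigma^2\asymp1$ and $\tau\asymp\log\lambda^{-d}$ the first term in the maximum is $\|k_\eta\|_\infty\sqrt{\tau/N}\asymp\sqrt{\log\lambda^{-d}/N}$, and with $\mathsf{snr}^{-1}\asymp\sqrt{\log\lambda^{-d}}$ the second term is $\mathsf{snr}^{-1}/\sqrt N\asymp\sqrt{\log\lambda^{-d}/N}$ as well, so the two terms coincide up to constants. The bound therefore becomes $\bigl(\sqrt{\log\lambda^{-d}/N}\bigr)^{\frac{\beta-d}{4\beta+1}}=\bigl(\log\lambda^{-d}/N\bigr)^{\frac{\beta-d}{8\beta+2}}$, which is precisely the claim; the high-noise hypothesis $\sigma\ge\|f\|_\infty$ needed to invoke Corollary~\ref{corr:highNoiseRegime} is already in force as the standing assumption of this subsection.
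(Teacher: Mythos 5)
Your proposal is correct and follows essentially the same route as the paper: reduce to Corollary~\ref{corr:highNoiseRegime}, choose $\tau \asymp \log \lambda^{-d}$ (the paper sets $\tau = \Gamma(y)^2$, which agrees up to constants since $\Gamma(y) \asymp \sqrt{\log \lambda^{-d}}$ in this regime), verify $\Gamma(y)^2 \lesssim \log\lambda^{-d}$ via $\E[\sup_{t\in D} y(t)] \le \|f\|_\infty + \E[\sup_{t\in D}\eta(t)]$ and $\|r_y\|_\infty \asymp 1$, and substitute to get the exponent $\frac{\beta-d}{8\beta+2}$. The one point of divergence is the key scaling $\E[\sup_{t\in D}\eta(t)] \asymp \sigma\sqrt{\log\lambda^{-d}}$: the paper simply cites \cite[Lemma 4.3]{al2025covariance}, whereas you prove it from scratch, with a Dudley entropy-integral upper bound using covering numbers $\asymp (\sigma/(\epsilon\lambda))^d$ in the canonical metric (the $-\log u$ term correctly absorbed since $\int_0^c \sqrt{-\log u}\, du < \infty$) and a Sudakov minoration lower bound from a $\lambda$-spaced packing at scale $\epsilon \asymp \sigma$; this is a valid, self-contained alternative to the citation, at the cost of reproducing standard Gaussian-process machinery that the cited lemma packages for a broader class of covariance functions.
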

    
    \begin{remark}
        Corollary~\ref{corr:largesigma} establishes that it is sufficient to choose $N\gtrsim \sigma^4$ 
        to guarantee accurate recovery of the underlying signal;
         this sample complexity matches that of discrete MRA with an \textit{aperiodic} shift distribution (note $p_\zeta$ is aperiodic due to the assumption of compact support). 
         However,  it is important to note that our analysis pertains to the continuous variant of the problem and employs a distinct error metric, namely the relative supremum norm error. From a modeling viewpoint, the  \textit{small-lengthscale} regime is of particular interest and not captured by previous finite-dimensional analyses of MRA: in the limit as $\lambda \to 0,$ $k_{\eta}^{\text{SE}}$ formally approximates a white noise model with no spatial correlations.
        Relevant to Corollary~\ref{corr:smallLengthscale}, the papers \cite{al2024optimal,al2025covariance} show that the lengthscale fundamentally governs the difficulty of estimating the covariance operator of a Gaussian process. In a similar spirit, Corollary~\ref{corr:smallLengthscale} shows that accurate recovery can be achieved with $N \gtrsim \log \lambda^{-d}$ samples. Similar conclusions hold for a more general family of isotropic, positive, differentiable, and strictly decreasing covariance functions that decay to zero at infinity, including Matérn covariance functions ---see \cite[Assumption 2.7]{al2025covariance}.
    \end{remark}

\section{Extension for vanishing Fourier transform}\label{sec:vanishing}
This section extends the procedure in Section \ref{sec:settingandalgorithm} and the theory in Section \ref{sec:theory} to the case where $\fft$ may vanish. 

\subsection{Algorithm: vanishing Fourier transform}
Motivated by Algorithm \ref{alg:mainalgorithm}, here we propose a four-step procedure with the following outline:
(a) Estimate $\Psi.$
    (b) Estimate the zeros of $\fft.$
    (c) Estimate $\fft$ by plugging the estimates obtained in (a) and (b) into a generalized Kotlarski's formula.
    (d) Deconvolve the estimate of $\fft$ to obtain an estimate of $f$.

We focus on the one-dimensional setting ($d=1$) and leverage a generalization of Kotlarski's formula derived in \cite{evdokimov2012some}. 
Specifically, let $h > 0$ be a bandwidth parameter and let $ k(\hinv)$ denote the number of zeros of $\fft$ on the interval $[0,\hinv]$, which we denote by $\xi_1, \ldots, \xi_{k(\hinv)}$.  Supposing that $k(\hinv) < \infty$ and that the zeros are simple (i.e. of multiplicity one), \cite{evdokimov2012some} shows that, for any fixed $\omega\in [0,\hinv]$,
\begin{align}\label{equ:GeneralizedKotlarski}
	\fft(\omega) &= \lim_{\epsilon \rightarrow 0}\, (-1)^{k(\omega)} \prod_{i=1}^{k(\omega)} \exp\left(\int_{\xi_{i-1}+\epsilon}^{\xi_i-\epsilon}\frac{\partial_{1} \Psi(\xi,\xi)}{\Psi(\xi,\xi)}d\xi \right)\exp\left(\int_{\xi_{k(\omega)}+\epsilon}^{\omega}\frac{\partial_{1} \Psi(\xi,\xi)}{\Psi(\xi,\xi)}d\xi \right) \nonumber \\
	&= \lim_{\epsilon \rightarrow 0}\,(-1)^{k(\omega)}\exp\left(\int_{\omega_\epsilon}\frac{\partial_{1} \Psi(\xi,\xi)}{\Psi(\xi,\xi)} d\xi \right),
\end{align}
where  $k(\omega)$ denotes the number of zeros on $[0,\omega],$ $\xi_0 = 0$ for convenience, and
\begin{align}
\label{equ:omega_tilde_eps}
 \omega_\epsilon &= [0,\omega] \setminus \bigcup_{i=1}^{k(\hinv)} (\xi_i-\epsilon, \xi_i + \epsilon) \, .  
\end{align}

Fixing a small $\epsilon$ in the right-hand side of Equation \ref{equ:GeneralizedKotlarski} would in principle yield a good approximation to $\fft(\omega)$ for $\omega \in [0, \hinv]$. However, two key challenges arise:

\begin{enumerate}
    \item The number $k(\omega)$ of zeros and their locations $\xi_i$ are unknown; and
    \item The values of the integrand $\frac{\partial_{1} \Psi(\xi,\xi)}{\Psi(\xi,\xi)}$ are not directly observable.
\end{enumerate}

\vspace{0.15cm} To address these challenges, we will estimate the zeros and the integrand from data. The idea is to  define, for a small fixed $\epsilon$ and $\omega \in [0, \hinv],$  the estimator
\begin{align}
\label{equ:GeneralizedKotlarskiEst}
\hatfft_\epsilon(\omega) &= (-1)^{\hatk(\omega)}\exp\left(\int_{\hatomega_\epsilon}\frac{\partial_{1} \hatPsi(\xi,\xi)}{\tildePsi(\xi,\xi)} d\xi \right),
\end{align}
where $\hatk(\omega)$ and $\hatomega_\epsilon$ are estimators of $k(\omega), \omega_\epsilon$, respectively, and the functions $\hatPsi$ and $\tildePsi$ are as defined in Algorithm~\ref{alg:mainalgorithm}.
We then extend the definition of $\hatfft_\epsilon(\omega)$ to $\omega \in [-\hinv,0]$ using conjugate symmetry, and set $\hatfft_\epsilon(\omega) = 0$ for $\omega \notin [-\hinv, \hinv]$.
Given the estimator $\hatfft_\epsilon$, we estimate the underlying signal via deconvolution as in Algorithm~\ref{alg:mainalgorithm}. We now turn to the problem of estimating the zeros of $\fft$, which are needed to compute both $\hatk(\omega)$ and $\hatomega_\epsilon.$ To this end, we let $Z_g$ denote the set of zeros of a function $g$, and $Z_g^{\omega} = Z_g \cap [0, \omega]$. Additionally, we let $P(\omega) = \Psi(\omega, \omega)=|\fft(\omega)|^2$ denote the power spectrum of the hidden signal, and let $\hatP(\omega) = \Re\hatPsi(\omega, \omega)$ be the real part of its empirical approximation as defined in Algorithm~\ref{alg:mainalgorithm}.
To construct \eqref{equ:GeneralizedKotlarskiEst}, we need to estimate $Z_{\fft}^{\hinv}$, the zeros of $\fft$ on the interval $[0, h^{-1}]$, or, equivalently, $Z_{P}^{\hinv}$. We exploit the observation that when the zeros of $\fft$ are isolated and simple, the corresponding zeros of $P'$ exhibit linear crossings, in contrast to the quadratic behavior of the zeros of $P$. This motivates a practical approach for estimating $Z_{\fft}^{\hinv}$: we identify the zeros of $\hatP'$ and retain only those for which the value of $\hatP$ is sufficiently small. Specifically, for a threshold $\thresh > 0$, we define the approximate zero set as
\begin{align}
\label{equ:approx_zeros_via_thresholding}
 \mathcal{Z}_\thresh^{\hinv} &:= \Bigl\{ \hatxi \in Z_{\hatP'} \, : \, \hatP(\hatxi) < \thresh \ , \hatxi \in [0, \hinv] \Bigr\} 
\end{align}
and let $\hatxi_1, \ldots, \hatxi_{\hatk(\hinv)}$ 
denote the elements of $\mathcal{Z}_\thresh^{\hinv}$ arranged in increasing order, with $\\\hatk(h^{-1}) := |\mathcal{Z}_\thresh^{\hinv}|$.
We then use the estimated zeros to define $\hatk(\omega)$ and $\hatomega_\epsilon$.  
The full procedure is summarized in Algorithm \ref{alg:zerosalgorithm}.

\begin{algorithm}[htp]
\caption{\label{alg:zerosalgorithm}MRA via Deconvolution (Vanishing Case, $d=1$)}
\begin{algorithmic}[1]
     \STATE {\bf Input:} Data $\{y_n\}_{n=1}^N;$ Covariance function $k_{\eta};$ Kernel $K$; Bandwidth parameter $h>0$; Windowing parameter $\epsilon$; Threshold $\thresh$.
    \STATE {\bf Estimate $\Psi$:} Set \vspace{-0.45cm}
    \begin{align*}
        \hatPsi(\omega, \omega)
        := 
        \frac{1}{N} \sum_{n=1}^N \yft_n(\omega) \overline{\yft_n(\omega)}
        -
        \kft_\eta (\omega, \omega), \qquad 
        \tildePsi(\omega,\omega):=  \frac{\hatPsi(\omega,\omega)}{1 \land \sqrt{N}|\hatPsi(\omega,\omega)|}.
    \end{align*}
    \STATE \vspace{-0.45cm} {\bf Estimate zeros:} Set $\hatP(\omega):= \Re \hatPsi(\omega, \omega)$,
    \begin{align*}
    \mathcal{Z}_\thresh^{h^{-1}} &:= \Bigl\{ \hatxi \in Z_{\hatP'} \, : \, \hatP(\hatxi) < \thresh \ , \hatxi \in [0, h^{-1}] \Bigr\} \, ,
    \qquad 
    \hatk(h^{-1}) := |\mathcal{Z}_\thresh^{\hinv}|,
    \end{align*}
    and let $\hatxi_1, \ldots, \hatxi_{\hatk(\hinv)}$ be the increasingly ordered elements of $\mathcal{Z}_\thresh^{h^{-1}}.$ 
    \STATE  {\bf Estimate $\fft$:} For $\omega \in [0,h^{-1}]$, set \vspace{-0.35cm}
  \begin{align*}
        \hatfft_\epsilon(\omega) &= (-1)^{\hatk(\omega)}\exp\left(\int_{\hatomega_\epsilon}\frac{\partial_{1} \hatPsi(\xi,\xi)}{\tildePsi(\xi,\xi)} d\xi \right), \, \quad
        \hatomega_\epsilon := [0,\omega] 
        \setminus \bigcup_{i=1}^{\hatk(\hinv)}(\hatxi_i-\epsilon,\hatxi_i+\epsilon) \, .
    \end{align*}
     For $\omega \in [-h^{-1},0]$, extend via conjugate symmetry. For $\omega \notin [-h^{-1},h^{-1}],$ set $\hatfft_\epsilon(\omega) = 0.$ 
        \STATE {\bf Deconvolve:} Set \vspace{-0.45cm}
      \begin{align*}
        \hatf(t) := \frac{1}{2\pi} \int_{\R} e^{i \omega  t} \hatfft_\epsilon(\omega) \Kft(h \omega) \, d\omega.
        \end{align*} \vspace{-0.45cm}
    \STATE {\bf Output}:  Approximation $\hatf$ to the hidden signal $f.$
\end{algorithmic}
\end{algorithm}

\subsection{Theoretical guarantees: vanishing Fourier transform}
 Here, we analyze Algorithm \ref{alg:zerosalgorithm}  under our standing Assumption \ref{assumption:psiestimates}  and the following  relaxed version of Assumption~\ref{assumption:smoothness}.  For notational brevity we will henceforth omit the superscript on the zero sets, with the assumption that all zero sets of functions are intersected with $[0,\hinv]$.

\begin{assumption}[Vanishing model]\label{ass:vanishing_model}
\begin{enumerate} 
\item\label{assumption:FT_decay} 
  (Decay with oscillation) 
  There exist constants $\beta\geq 1, C > 0$ and a potentially oscillating function $o_f$ such that, for all $\omega\in\R,$
	\begin{align*}
		\fft(\omega) &= o_f(\omega)\cdot (1+|\omega|)^{-\beta}\, \quad \text{with}\quad |o^{(\ell)}_f(\omega)| \le  C\|f\|_\infty \quad \text{for} \quad \ell =0, 1, 2.
	\end{align*}
    
  \item\label{assumption:zeros}(Isolated, simple zeros) 	
    The zeros of $\fft$ are isolated and simple; in particular, $\xi_1 \geq 1$, $|\xi_i - \xi_j|\geq 1$ for $\xi_i, \xi_j\in Z_{\fft}$ with $\xi_i \ne\xi_j$, and the zeros of $\fft$ and $(\fft)'$ are disjoint.
  \item\label{assumption:nice_oscillations}(Regularity of oscillations) 
  Let $|\omega-Z_{\fft}|$ be the distance from $\omega$ to the closest point in $Z_{\fft}$. Then, there exist constants $L$, $B$, and $\epsilon_{\max}$ such that, for all $\checkep \in [0, \epsilon_{\max}],$
	\begin{align*}
		|\omega - Z_{\fft} | \geq \checkep \quad &\implies \quad |o_f(\omega)| \geq L \|f\|_\infty \checkep\, ,\\
		|\omega - Z_{\fft} | \leq \checkep \quad &\implies \quad |o_f'(\omega)| \geq B \|f\|_\infty\, .
	\end{align*}
    Furthermore, the estimator $\hatfft$ in \eqref{equ:GeneralizedKotlarskiEst} is constructed with $\epsilon \leq\epsilon_{\max}$.
   \item\label{assumption:thresholding}(Thresholding) We fix an interval $[-\hinv, \hinv]$ satisfying $h^{2\beta}\leq \epsilon_{\max} \vee 1$, and assume the threshold $\thresh$ defining $\mathcal{Z}_\thresh^{\hinv}$ satisfies $\rho_N h^{-2\beta} \lesssim \thresh \lesssim \|f\|_\infty^2h^{6\beta}$ for $\rho_N$ defined in \eqref{equ:def_of_rhoN}. In addition, the endpoint $h^{-1}$ is not too close to a zero, i.e. $\xi_{k(\hinv)}+h^{-\beta}\|f\|_\infty^{-1}\sqrt{\thresh} \leq \hinv$. 
\end{enumerate}
\end{assumption}

Assumption \ref{assumption:FT_decay} mirrors Assumption \ref{assumption:smoothness}, i.e. $\fft$ decays like $(1+|\omega|)^{-\beta}$, but is more flexible, allowing $\fft$ to vanish.
Assumption \ref{assumption:zeros} ensures that we can identify individual zeros and estimate them in a stable manner; since the zeros are simple, small perturbations in our estimation do not result in duplicate estimated zeros. We note that in higher dimensions Assumption \ref{assumption:zeros} would generalize to $ Z_{\fft} $ being isolated and simple along any ray out of the origin. Assumption \ref{assumption:nice_oscillations} ensures that $ o_f $ and $ o_f' $ are sufficiently regular, i.e. away from the zeros the oscillatory component cannot be too small and close to the zeros the derivative of the oscillatory component cannot be too small. These assumptions ensure that the zeros are identifiable. Namely, $ |o_f| $ being sufficiently large away from $ Z_{\fft} $ ensures that we do not incorrectly identify false zeros in estimation, while $ |o_f'| $ being sufficiently large in a small neighborhood of $ Z_{\fft} $ ensures that $ \fft $ crosses zero at an acute angle as opposed to flattening out, making the estimation of zeros more stable.

Under the above assumptions, we are able to begin our analysis. 
Three key technical lemmas, proved in the Supplementary Material \ref{app:vanishing}, are needed to control the error of the estimator defined in \eqref{equ:GeneralizedKotlarskiEst}: Lemma \ref{zeros_location_control} bounds the perturbation of the zeros, Lemma \ref{lem:error_from_zeros} bounds the impact of using approximate zeros, and Lemma~\ref{lem:error_from_integrand} bounds the impact of using an approximation of the integrand. To isolate the impact of each of these challenges (i.e. approximation of zeros and approximation of the integrand), it is convenient to define the following intermediate estimator:
\begin{align}
	\label{equ:GeneralizedKotlarskiIntermediate}
	\fft_\epsilon(\omega) &= (-1)^{\hatk(\omega)}\exp\left(\int_{\hatomega_\epsilon}\frac{\partial_{1} \Psi(\xi,\xi)}{\Psi(\xi,\xi)} d\xi \right) \, 
\end{align}
for $\hatomega_\epsilon$ defined in \eqref{equ:omega_tilde_eps}. Lemma \ref{lem:error_from_zeros} guarantees that $\fft \approx \fft_\epsilon$ while Lemma~\ref{lem:error_from_integrand} guarantees that $\fft_\epsilon\approx\hatfft_\epsilon$. 

We begin by showing that, with control over the approximation of the power spectrum and its derivatives ---which holds with high probability by a slight modification of Lemma \ref{lem:psiestimates}--- we can estimate the correct number of zeros of $ \fft $ and accurately approximate their location. 

\begin{lemma}[Perturbation of zeros]\label{zeros_location_control}
	   Under Assumption \ref{ass:vanishing_model}, there exist positive universal constants $c_1,c_2$ such that, for any $\tau \ge 1$, it holds with probability at least $1-c_1 e^{-c_2 \tau}$ that there is a one-to-one correspondence between the true zeros $Z_{\fft}$ and the empirical approximations $\mathcal{Z}_\thresh$ defined in \eqref{equ:approx_zeros_via_thresholding}, and the error of approximating the zeros is bounded by:
	\begin{align*}
		\max_{i=1,\ldots, k(\hinv)} \left| \hatxi_i- \xi_i \right|&:= \delta \lesssim \frac{\rho_N h^{-2\beta}}{\|f\|_\infty^2} \, 
	\end{align*}
    for $\rho_N$ defined in \eqref{equ:def_of_rhoN}.
\end{lemma}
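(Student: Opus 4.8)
The plan is to transfer the uniform concentration of $\hatPsi$ from Lemma~\ref{lem:psiestimates} to the power spectrum $P(\omega)=\Psi(\omega,\omega)=|\fft(\omega)|^2$ and its derivatives, and then analyze the zeros of $\hatP'$ by a perturbative argument built on the quadratic behavior of $P$ at its (double) zeros. First I would record that, on a high-probability event $\mathsf{E}$ furnished by the stated slight modification of Lemma~\ref{lem:psiestimates} (which in addition to $\hatPsi$ controls $\partial_1\hatPsi$ and its diagonal derivative), one has the uniform bounds $\sup_\omega|\hatP(\omega)-P(\omega)|\lesssim\rho_N$, $\sup_\omega|\hatP'(\omega)-P'(\omega)|\lesssim\rho_N$, and $\sup_\omega|\hatP''(\omega)-P''(\omega)|\lesssim\rho_N$. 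Throughout I restrict to $[0,\hinv]$ as in the statement. The threshold conditions in Assumption~\ref{assumption:thresholding} give $\rho_N\lesssim\|f\|_\infty^2 h^{8\beta}\ll\|f\|_\infty^2 h^{2\beta}$, so $\rho_N$ will be negligible against every local scale below.

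Next I would establish the local geometry of $P$ near each true zero $\xi_i\in Z_{\fft}$. Since the zeros of $\fft$ are simple (Assumption~\ref{assumption:zeros}), $P(\xi_i)=P'(\xi_i)=0$ and $P''(\xi_i)=2|(\fft)'(\xi_i)|^2$. Writing $\fft(\omega)=o_f(\omega)\omega^{-\beta}$ and using $o_f(\xi_i)=0$ together with $|o_f'|\gtrsim\|f\|_\infty$ near zeros (Assumption~\ref{assumption:nice_oscillations}), I would show $|(\fft)'(\xi_i)|\gtrsim\|f\|_\infty\,\xi_i^{-\beta}\gtrsim\|f\|_\infty h^{\beta}$ because $\xi_i\le\hinv$; hence $P''(\omega)\asymp\|f\|_\infty^2 h^{2\beta}$ on a neighborhood of $\xi_i$, i.e.\ $P'$ crosses zero linearly with slope of this order. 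Symmetrically, the lower bound $|o_f(\omega)|\gtrsim\|f\|_\infty|\omega-Z_{\fft}|$ away from the zero set yields $P(\omega)\gtrsim\|f\|_\infty^2 h^{2\beta}\,|\omega-Z_{\fft}|^2$ for $|\omega-Z_{\fft}|\le\epsilon_{\max}$ (and $P$ stays bounded below for larger distances by taking $\checkep=\epsilon_{\max}$).

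With these two facts I would prove the correspondence in three parts and read off the error bound. \emph{Existence (no missed zeros):} applying the intermediate value theorem to $\hatP'$ on $[\xi_i-\delta',\xi_i+\delta']$ with $\delta'\asymp\rho_N h^{-2\beta}/\|f\|_\infty^2$, at the endpoints $|P'|\gtrsim\|f\|_\infty^2 h^{2\beta}\delta'\asymp\rho_N$ dominates the perturbation $\lesssim\rho_N$, so $\hatP'$ changes sign and has a root $\hatxi_i$ with $|\hatxi_i-\xi_i|\le\delta'$; the threshold test passes since $\hatP(\hatxi_i)\le P(\hatxi_i)+\rho_N\lesssim\|f\|_\infty^2(\delta')^2+\rho_N\lesssim\rho_N h^{-2\beta}\lesssim\thresh$ by the lower bound on $\thresh$. \emph{No false zeros:} any retained candidate $\hatxi\in\mathcal{Z}_\thresh$ obeys $\hatP(\hatxi)<\thresh\lesssim\|f\|_\infty^2 h^{6\beta}$, so $P(\hatxi)\le\thresh+\rho_N$; the away-from-zeros lower bound on $P$ then forces $|\hatxi-Z_{\fft}|$ to be small, placing $\hatxi$ inside the linear-crossing neighborhood of some $\xi_i$. \emph{Uniqueness:} on each such neighborhood $\hatP''\ge\tfrac12 P''\gtrsim\|f\|_\infty^2 h^{2\beta}>0$ (the $O(\rho_N)$ perturbation being negligible), so $\hatP'$ is strictly monotone and has exactly one root there, while the separation $|\xi_i-\xi_j|\ge 1$ (Assumption~\ref{assumption:zeros}) keeps the neighborhoods disjoint. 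Combining the three parts yields the one-to-one correspondence, and writing $0=\hatP'(\hatxi_i)=P''(\tilde\omega)(\hatxi_i-\xi_i)+O(\rho_N)$ with $|P''|\asymp\|f\|_\infty^2 h^{2\beta}$ gives $\delta=\max_i|\hatxi_i-\xi_i|\lesssim\rho_N h^{-2\beta}/\|f\|_\infty^2$.

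I expect the main obstacle to be the interplay between the two threshold inequalities in Assumption~\ref{assumption:thresholding} and the oscillation regularity in Assumption~\ref{assumption:nice_oscillations}: one must choose the neighborhood scale so that simultaneously (i) every genuine zero produces a candidate passing the threshold and (ii) no spurious critical point of $\hatP'$, created where noise makes $\hatP'$ wiggle through zero without $P'$ doing so, survives the threshold. The thresholding is engineered so that such spurious zeros occur only where $P$ is not small, which is precisely ruled out by the lower bound on $|o_f|$ away from $Z_{\fft}$; making this quantitative while respecting $\rho_N h^{-2\beta}\lesssim\thresh\lesssim\|f\|_\infty^2 h^{6\beta}$ and keeping the windowing radius within $[0,\epsilon_{\max}]$ is where the delicate bookkeeping lies.
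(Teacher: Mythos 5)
Your proposal is correct and follows essentially the same route as the paper's proof: uniform concentration of $\hatP,\hatP',\hatP''$ at level $\rho_N$, the lower bound $|P''|\gtrsim \|f\|_\infty^2 h^{2\beta}$ near each simple zero, existence/uniqueness of a nearby root of $\hatP'$, classification of candidates via the threshold using Assumption \ref{assumption:nice_oscillations}, and a mean-value-theorem step yielding $\delta\lesssim \rho_N h^{-2\beta}/\|f\|_\infty^2$. The only differences are cosmetic (you use the intermediate value theorem at scale $\delta'$ where the paper argues by contradiction at scale $\checkep$, and monotonicity of $\hatP'$ where the paper uses a Rolle-type contradiction), which amount to the same perturbative argument.
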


With control over the approximation of $Z_{\fft}$, we quantify the impact of using approximate zeros in the generalized Kotlarski formula.

\begin{lemma}[Error from approximate zeros]\label{lem:error_from_zeros}
	Let $\delta$ bound the perturbation of the zeros and assume $\delta \leq \frac{\epsilon}{4}$. Under Assumption \ref{ass:vanishing_model}, for all $0\leq\omega\leq \hinv$, we have for $\epsilon + \delta\epsilon^{-1} \lesssim h$:
	\begin{align}
    \label{equ:error_intermediate}
		\left|\fft(\omega)-\fft_\epsilon(\omega) \right| &\lesssim \hinv \|f\|_\infty \left(\delta \epsilon^{-1}+\epsilon\right) \, ,
	\end{align}
where $\fft_\epsilon$ is the intermediate estimator  defined in \eqref{equ:GeneralizedKotlarskiIntermediate}. 
\end{lemma}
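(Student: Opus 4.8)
Proof of Lemma~\ref{lem:error_from_zeros} (plan).
The plan is to reduce the comparison of $\fft(\omega)$ with the intermediate estimator $\fft_\epsilon$ in \eqref{equ:GeneralizedKotlarskiIntermediate} to an explicit telescoping product over the zeros, to estimate each factor, and then to aggregate. First I would simplify the integrand. Since $\Psi(\omega_1,\omega_2)=\fft(\omega_1)\overline{\fft(\omega_2)}\pft_\zeta(\omega_1-\omega_2)$ and $\zeta$ is centered, one has $\pft_\zeta(0)=1$ and $(\pft_\zeta)'(0)=0$, so on the diagonal
\begin{align*}
\frac{\partial_1 \Psi(\xi,\xi)}{\Psi(\xi,\xi)}
=\frac{(\fft)'(\xi)\overline{\fft(\xi)}}{|\fft(\xi)|^2}
=\frac{(\fft)'(\xi)}{\fft(\xi)}
=\frac{d}{d\xi}\log\fft(\xi),
\end{align*}
the logarithmic derivative of $\fft$, whose only singularities on $[0,\hinv]$ are simple poles of residue $1$ at the true zeros $\xi_i$. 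Consequently, on any subinterval $[a,b]$ free of zeros, $\exp\inparen{\int_a^b \tfrac{\partial_1\Psi}{\Psi}\,d\xi}=\fft(b)/\fft(a)$, independently of the chosen branch of $\log\fft$.

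Next I would exploit this telescoping on the generic configuration, namely $\omega$ at distance larger than $\epsilon$ from every $\hatxi_i$. By Lemma~\ref{zeros_location_control}, on the high-probability event each true zero $\xi_i$ lies within $\delta\le\epsilon/4$ of $\hatxi_i$, hence inside the excised window $(\hatxi_i-\epsilon,\hatxi_i+\epsilon)$; thus every piece of $\hatomega_\epsilon$ is zero-free and the identity above applies piece by piece. Using $\fft(0)=1$, this yields the exact product representation
\begin{align*}
\fft_\epsilon(\omega)
=(-1)^{\hatk(\omega)}\,\fft(\omega)\prod_{i=1}^{\hatk(\omega)}\frac{\fft(\hatxi_i-\epsilon)}{\fft(\hatxi_i+\epsilon)},
\qquad\text{so}\qquad
\fft_\epsilon(\omega)-\fft(\omega)
=\fft(\omega)\Bigl[\,\textstyle\prod_{i}\bigl(-\tfrac{\fft(\hatxi_i-\epsilon)}{\fft(\hatxi_i+\epsilon)}\bigr)-1\Bigr].
\end{align*}
It then remains to show that each factor $-\fft(\hatxi_i-\epsilon)/\fft(\hatxi_i+\epsilon)$ is close to $1$.

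The core of the argument is this per-factor estimate. Writing $\hatxi_i=\xi_i+s_i$ with $|s_i|\le\delta$ and using $\fft(\omega)=o_f(\omega)\omega^{-\beta}$ with $o_f(\xi_i)=0$, I would split
\begin{align*}
\frac{\fft(\hatxi_i-\epsilon)}{\fft(\hatxi_i+\epsilon)}
=\frac{o_f(\xi_i+s_i-\epsilon)}{o_f(\xi_i+s_i+\epsilon)}\inparen{\frac{\hatxi_i+\epsilon}{\hatxi_i-\epsilon}}^{\beta}.
\end{align*}
Since $\hatxi_i-\epsilon\gtrsim 1$ (as $\xi_i\ge 1$ and $\epsilon,\delta$ are small), the decay factor equals $1+O(\epsilon)$. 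A Taylor expansion of $o_f$ about $\xi_i$ with Lagrange remainder (valid as $o_f\in C^2$ by Assumption~\ref{assumption:FT_decay}) gives $o_f(\xi_i+u)=o_f'(\xi_i)u\,(1+O(\epsilon))$ for $|u|\lesssim\epsilon$, where the lower bound $|o_f'(\xi_i)|\ge B\|f\|_\infty$ of Assumption~\ref{assumption:nice_oscillations} keeps $o_f''/o_f'=O(1)$ and the denominator bounded away from zero; hence the oscillatory ratio is $\tfrac{s_i-\epsilon}{s_i+\epsilon}(1+O(\epsilon))$. Finally $|s_i+\epsilon|\ge 3\epsilon/4$ yields $\tfrac{s_i-\epsilon}{s_i+\epsilon}=-1+O(\delta/\epsilon)$, so each factor equals $1+O(\delta\epsilon^{-1}+\epsilon)$ with absolute constants.

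To conclude I would aggregate and treat the boundary case. The separation $|\xi_i-\xi_j|\ge 1$ (Assumption~\ref{assumption:zeros}) forces at most $\hatk(\omega)\le k(\hinv)\le \hinv$ factors, and the hypothesis $\epsilon+\delta\epsilon^{-1}\lesssim h$ makes $\hinv(\delta\epsilon^{-1}+\epsilon)\lesssim 1$; the elementary bound $|\prod_i(1+a_i)-1|\le e^{\sum_i|a_i|}-1$ then gives $\prod_i(-\fft(\hatxi_i-\epsilon)/\fft(\hatxi_i+\epsilon))-1=O\bigl(\hinv(\delta\epsilon^{-1}+\epsilon)\bigr)$. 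Multiplying by $|\fft(\omega)|\le\|f\|_{L^1}\le\|f\|_\infty$ (as $f$ is supported on a unit-volume cube) delivers \eqref{equ:error_intermediate}. The remaining case, where $\omega$ lies within $\epsilon$ of some $\hatxi_i$ and hence within $\epsilon+\delta$ of $\xi_i$, I would handle separately: there both $|\fft(\omega)|$ and $|\fft_\epsilon(\omega)|$ are $O(\|f\|_\infty\epsilon)$ because $\fft$ vanishes linearly at $\xi_i$ with $|(\fft)'(\xi_i)|\lesssim\|f\|_\infty$, which already fits inside the stated bound. I expect the main obstacle to be the per-factor estimate: controlling $o_f(\hatxi_i-\epsilon)/o_f(\hatxi_i+\epsilon)$ uniformly requires keeping the denominator away from zero through the oscillation regularity of Assumption~\ref{assumption:nice_oscillations}, and it is essential to expand the oscillatory part $o_f$ rather than $\fft$ itself, so that the algebraic decay $\omega^{-\beta}$ contributes only a harmless $1+O(\epsilon)$ factor instead of an $\hinv^{\beta}$ blow-up for zeros near the endpoint.
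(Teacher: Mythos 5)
Your proof is correct and follows essentially the same route as the paper's: the same telescoping of $\exp\bigl(\int_{\hatomega_\epsilon}\partial_1\Psi/\Psi\,d\xi\bigr)$ into ratios $\fft(\hatxi_i^-)/\fft(\hatxi_i^+)$ via the path-wise exponential representation (Lemma~\ref{lem:technicalKotlarski}), the same per-factor Taylor estimate at the true zero showing each ratio equals $-1+O(\delta\epsilon^{-1}+\epsilon)$, the same aggregation over the at most $\hinv$ well-separated zeros, and the same separate (easy) treatment of $\omega$ lying inside an excised window. The only local difference is that you expand the oscillatory part $o_f$ and peel off the algebraic factor $\bigl((\hatxi_i+\epsilon)/(\hatxi_i-\epsilon)\bigr)^{\beta}=1+O(\epsilon)$, whereas the paper Taylor-expands $\fft$ itself to third order and bounds $|c_2/c_1|\le \beta+C/(2B)$; the two are equivalent here, with your variant having the mild advantage of invoking only the bounds on $o_f^{(\ell)}$, $\ell\le 2$, that Assumption~\ref{ass:vanishing_model} actually supplies, rather than a third-order remainder of $\fft$.
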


Finally, we quantify the impact of using an approximate integrand in the generalized Kotlarski formula.

\begin{lemma}[Error from approximate integrand]\label{lem:error_from_integrand}
Suppose Assumption \ref{ass:vanishing_model} holds and that $\rho_N \lesssim \epsilon^3 \|f\|^2_\infty h^{3\beta+1}$
for $\rho_N$ defined in \eqref{equ:def_of_rhoN}.  Then, there exist positive universal constants $c_1,c_2$ such that, for any $\tau \ge 1$, it holds with probability at least $1-c_1 e^{-c_2 \tau}$ that for all $0\leq\omega\leq \hinv$:
	\begin{align*}
    \left |\fft_\epsilon(\omega)-\hatfft_\epsilon(\omega) \right| &\lesssim  \frac{\rho_N h^{-2\beta-1} }{ \epsilon^3 \|f\|_\infty}\, ,
	\end{align*}
where $\fft_\epsilon$ is the intermediate estimator  defined in \eqref{equ:GeneralizedKotlarskiIntermediate}. 
\end{lemma}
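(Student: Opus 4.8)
The plan is to exploit that $\fft_\epsilon$ and $\hatfft_\epsilon$ differ \emph{only} through their integrands: by construction both carry the same sign prefactor $(-1)^{\hatk(\omega)}$ and are integrated over the same data-dependent set $\hatomega_\epsilon$. Writing $z:=\int_{\hatomega_\epsilon}\frac{\partial_1\Psi(\xi,\xi)}{\Psi(\xi,\xi)}\,d\xi$ and $w:=\int_{\hatomega_\epsilon}\frac{\partial_1\hatPsi(\xi,\xi)}{\tildePsi(\xi,\xi)}\,d\xi$, the prefactors cancel in the difference, so that $|\fft_\epsilon(\omega)-\hatfft_\epsilon(\omega)|=|e^{z}-e^{w}|$. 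Using the elementary inequality $|e^{z}-e^{w}|\le|z-w|\,e^{\max(\Re z,\Re w)}$ for $z,w\in\C$, and noting that $e^{\Re z}=|\fft_\epsilon(\omega)|$ and $e^{\Re w}=|\hatfft_\epsilon(\omega)|$, the task reduces to (i) bounding the scalar $|z-w|$ and (ii) bounding $\max(|\fft_\epsilon(\omega)|,|\hatfft_\epsilon(\omega)|)$.

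Before estimating $|z-w|$ I would record the geometry of $\hatomega_\epsilon$. Lemma~\ref{zeros_location_control} provides a one-to-one correspondence between the true zeros $\xi_i$ and the estimates $\hatxi_i$ with $\max_i|\hatxi_i-\xi_i|=\delta\lesssim\rho_N h^{-2\beta}\|f\|_\infty^{-2}$; the hypothesis $\rho_N\lesssim\epsilon^{3}\|f\|_\infty^{2}h^{3\beta+1}$ forces $\delta\lesssim\epsilon^{3}h^{\beta+1}\le\epsilon/4$, so that every $\xi\in\hatomega_\epsilon$, being at distance $\ge\epsilon$ from each $\hatxi_i$, lies at distance $\gtrsim\epsilon$ from the \emph{true} zero set $Z_{\fft}$. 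The oscillation-regularity conditions in Assumption~\ref{ass:vanishing_model} then yield the simple-zero lower bound $|\fft(\xi)|\gtrsim\|f\|_\infty\,\xi_i^{-\beta}\,|\xi-\xi_i|$ in a neighborhood of each $\xi_i$ and the uniform bound $|\fft(\xi)|\gtrsim\|f\|_\infty\,\epsilon\,\xi^{-\beta}$ on $\hatomega_\epsilon$; in particular $\Psi(\xi,\xi)=|\fft(\xi)|^{2}\gtrsim\|f\|_\infty^{2}\epsilon^{2}h^{2\beta}$ there. The same hypothesis makes $\rho_N$ negligible compared with this lower bound, so on $\hatomega_\epsilon$ the diagonal concentration estimates from Lemma~\ref{lem:psiestimates} give $|\hatPsi-\Psi|\lesssim\rho_N$ and $\|\partial_1\hatPsi-\partial_1\Psi\|_2\lesssim\rho_N$ while keeping $\sqrt{N}\,|\hatPsi(\xi,\xi)|\gg1$; the regularization is therefore inactive, $\tildePsi(\xi,\xi)=\hatPsi(\xi,\xi)$, and the denominator stays comparable to $\Psi(\xi,\xi)$ (both being real on the diagonal).

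To bound $|z-w|$ I would control the integrand difference pointwise through the decomposition $\frac{\partial_1\Psi}{\Psi}-\frac{\partial_1\hatPsi}{\tildePsi}=\frac{\partial_1\Psi-\partial_1\hatPsi}{\Psi}+\partial_1\hatPsi\bigl(\tfrac{1}{\Psi}-\tfrac{1}{\tildePsi}\bigr)$, bounding the first summand by $\rho_N/|\Psi|$ and the second by $|\partial_1\hatPsi|\,\rho_N/|\Psi|^{2}$, where $|\partial_1\hatPsi(\xi,\xi)|\lesssim\|f\|_\infty^{2}\xi^{-2\beta}$ by the decay condition in Assumption~\ref{ass:vanishing_model} and concentration. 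The delicate point is the behaviour near each zero: there $|\Psi(\xi,\xi)|\asymp\|f\|_\infty^{2}\xi_i^{-2\beta}(\xi-\xi_i)^{2}$, so the second summand is of order $\rho_N\,\xi_i^{2\beta}\|f\|_\infty^{-2}(\xi-\xi_i)^{-4}$; integrating over $\hatomega_\epsilon$, which excises $(\hatxi_i-\epsilon,\hatxi_i+\epsilon)$ and hence keeps $|\xi-\xi_i|\gtrsim\epsilon$, produces the factor $\int_{\epsilon}^{\infty}u^{-4}\,du\asymp\epsilon^{-3}$. Summing the resulting contribution $\rho_N\,\xi_i^{2\beta}\|f\|_\infty^{-2}\epsilon^{-3}$ over the $\lesssim h^{-1}$ zeros in $[0,h^{-1}]$ (equivalently integrating the weight $\xi^{2\beta}$ up to $h^{-1}$) gives $|z-w|\lesssim\rho_N h^{-2\beta-1}\epsilon^{-3}\|f\|_\infty^{-2}$; the first summand only contributes an $\epsilon^{-1}$ singularity and is dominated.

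Finally I would close the magnitude estimate by a short bootstrap. The bound for $|z-w|$ above together with $\rho_N\lesssim\epsilon^{3}\|f\|_\infty^{2}h^{3\beta+1}$ gives $|z-w|\lesssim h^{\beta}\le1$; since $\fft_\epsilon\approx\fft$ by Lemma~\ref{lem:error_from_zeros} we have $|\fft_\epsilon(\omega)|\lesssim|\fft(\omega)|\le\|f\|_\infty$, and then $|\hatfft_\epsilon(\omega)|=|e^{w}|\le|e^{z}|\,e^{|z-w|}\lesssim\|f\|_\infty$, so $\max(|\fft_\epsilon(\omega)|,|\hatfft_\epsilon(\omega)|)\lesssim\|f\|_\infty$. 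Substituting into the reduction of the first paragraph yields $|\fft_\epsilon(\omega)-\hatfft_\epsilon(\omega)|\lesssim\rho_N h^{-2\beta-1}\epsilon^{-3}\|f\|_\infty^{-1}\le\rho_N h^{-3\beta-1}\epsilon^{-3}\|f\|_\infty^{-1}$ since $h\le1$, which is the claim; all estimates hold on the intersection of the high-probability events of Lemmas~\ref{lem:psiestimates} and~\ref{zeros_location_control}, of probability at least $1-c_1e^{-c_2\tau}$. I expect the main obstacle to be the near-zero analysis of the third paragraph: one must combine the localization error $\delta$ with the $\epsilon$-windowing so that excising neighborhoods of the \emph{estimated} zeros genuinely keeps the integration a distance $\gtrsim\epsilon$ from the \emph{true} zeros, and then track how the simple-zero lower bound interacts with the integration to produce precisely the $\epsilon^{-3}$ rate.
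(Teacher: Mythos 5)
Your proof is correct, and while its outer skeleton coincides with the paper's, the key technical step is carried out by a genuinely different route. The paper's proof is short: it asserts, as a ``straightforward modification of Theorem~\ref{thm:FTDeviation}'' using the windowed lower bound $|\Psi(\xi,\xi)|\gtrsim \epsilon^2\|f\|_\infty^2 h^{2\beta}$ on $\hatomega_\epsilon$, the \emph{uniform} estimate $\sup_{\xi\in\hatomega_\epsilon}\bigl|\partial_1\Psi(\xi,\xi)/\Psi(\xi,\xi)-\partial_1\hatPsi(\xi,\xi)/\tildePsi(\xi,\xi)\bigr|\lesssim \rho_N h^{-3\beta}\epsilon^{-3}\|f\|_\infty^{-2}$, then multiplies by the length $|\hatomega_\epsilon|\le \hinv$ and concludes via $\hatfft_\epsilon=\fft_\epsilon\exp\bigl(\int_{\hatomega_\epsilon}E\bigr)$, $|e^z-1|\le 2|z|$, and $|\fft_\epsilon|\lesssim\|f\|_\infty$ --- precisely your exponential-comparison reduction. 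You replace the sup-times-length step by a localized, zone-by-zone estimate: exploiting the quadratic vanishing $|\Psi(\xi,\xi)|\asymp\|f\|_\infty^2\xi_i^{-2\beta}(\xi-\xi_i)^2$ near each simple zero (Assumption~\ref{assumption:nice_oscillations}) and the decay $|\partial_1\hatPsi(\xi,\xi)|\lesssim\|f\|_\infty^2\xi^{-2\beta}$, you integrate $(\xi-\xi_i)^{-4}$ over the complement of the excised windows so that the factor $\epsilon^{-3}$ is harvested from the integration rather than from a pointwise supremum, and then sum over the $\lesssim \hinv$ unit-separated zeros. This buys a sharper exponent, $\rho_N h^{-2\beta-1}\epsilon^{-3}\|f\|_\infty^{-1}$, which implies the stated $h^{-3\beta-1}$ bound since $h\le 1$; the cost is exactly the bookkeeping you identify and handle correctly, namely that the windows are centered at the estimated zeros $\hatxi_i$, so Lemma~\ref{zeros_location_control} and the hypothesis on $\rho_N$ are needed to keep the integration at distance $\gtrsim\epsilon$ from the true zeros. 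Two minor remarks: your magnitude bound $|\fft_\epsilon(\omega)|\lesssim\|f\|_\infty$ via Lemma~\ref{lem:error_from_zeros} tacitly requires $\epsilon+\delta\epsilon^{-1}\lesssim h$, of which only $\delta\epsilon^{-1}\lesssim h$ follows from the present hypotheses ($\epsilon\lesssim h$ is imposed only later, in Theorem~\ref{thm:finalzerosest}) --- but the paper's own proof makes the same silent use of $|\fft_\epsilon|\lesssim\|f\|_\infty$, so this is a shared wrinkle rather than a defect of your argument; and your claim that the regularization is inactive ($\tildePsi=\hatPsi$ on $\hatomega_\epsilon$, i.e.\ $\sqrt{N}|\hatPsi|\ge 1$ there) is not literally guaranteed, but it is dispensable, since as in the paper's Theorem~\ref{thm:FTDeviation} one has $|\tildePsi-\hatPsi|\le 2/\sqrt{N}\lesssim\rho_N$ and $|\tildePsi|\ge|\hatPsi|\gtrsim|\Psi|$ on the event of Lemma~\ref{lem:psiestimates}, which yields the same denominator control.
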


With all of these pieces assembled, we now extend Theorem \ref{thm:FTDeviation}. Notice that we presented the preceding technical lemmas for $\omega\in [0, \hinv],$ but for real-valued signals the statements can be trivially extended to $ [-\hinv,0] $ via conjugate symmetry.
We state the final result, Theorem \ref{thm:finalzerosest}, for $\omega\in [-\hinv,\hinv]  $.
\begin{theorem}[Recovery in Fourier domain, vanishing case]\label{thm:finalzerosest}
	Suppose Assumption \ref{ass:vanishing_model} holds and that $\epsilon \lesssim h$, $\rho_N \lesssim \epsilon^3 \|f\|^2_\infty h^{3\beta+1}$ for $\rho_N$ defined in \eqref{equ:def_of_rhoN}.
    Then, there exist positive universal constants $c_1,c_2$ such that, for any $\tau \ge 1$, it holds with probability at least $1-c_1 e^{-c_2 \tau}$ 
	\begin{align*}
		\sup_{\omega\in[-h^{-1},h^{-1}]} \left| \fft(\omega) - \hatfft(\omega)\right| &\lesssim \frac{\rho_N h^{-2\beta-1} }{ \epsilon^3 \|f\|_\infty} +\hinv \|f\|_\infty\epsilon .
	\end{align*}
\end{theorem}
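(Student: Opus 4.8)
The plan is to assemble the three technical lemmas through the intermediate estimator $\fft_\epsilon$ of \eqref{equ:GeneralizedKotlarskiIntermediate}, which shares the \emph{approximate} zero set and window $\hatomega_\epsilon$ with the algorithmic estimator $\hatfft_\epsilon$ of \eqref{equ:GeneralizedKotlarskiEst} but retains the \emph{exact} integrand $\partial_1\Psi/\Psi$. Recalling that $\hatfft$ is by construction $\hatfft_\epsilon$ for the fixed $\epsilon \le \epsilon_{\max}$, I would first prove the bound for $\omega \in [0,\hinv]$ and then extend it to $[-\hinv,\hinv]$ by the conjugate symmetry used to define $\hatfft_\epsilon$. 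Fixing $\omega \in [0,\hinv]$, the starting point is the triangle inequality
\begin{equation*}
\left|\fft(\omega) - \hatfft_\epsilon(\omega)\right| \le \underbrace{\left|\fft(\omega) - \fft_\epsilon(\omega)\right|}_{\text{zero error}} + \underbrace{\left|\fft_\epsilon(\omega) - \hatfft_\epsilon(\omega)\right|}_{\text{integrand error}},
\end{equation*}
which separates exactly the two error sources flagged after \eqref{equ:GeneralizedKotlarskiEst}: using estimated zeros and using an estimated integrand.

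The integrand error is immediate: Lemma~\ref{lem:error_from_integrand} applies verbatim under the standing hypothesis $\rho_N \lesssim \epsilon^3 \|f\|_\infty^2 h^{3\beta+1}$ and yields, on an event of probability at least $1-c_1 e^{-c_2\tau}$,
\begin{equation*}
\left|\fft_\epsilon(\omega) - \hatfft_\epsilon(\omega)\right| \lesssim \frac{\rho_N h^{-3\beta-1}}{\epsilon^3 \|f\|_\infty},
\end{equation*}
which is already the first term of the claimed bound.

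For the zero error I would invoke Lemma~\ref{zeros_location_control} to obtain, on an event of the same form, the one-to-one correspondence between $Z_{\fft}$ and $\mathcal{Z}_\thresh$ together with $\delta := \max_i |\hatxi_i - \xi_i| \lesssim \rho_N h^{-2\beta}/\|f\|_\infty^2$. The key bookkeeping step is to verify that Lemma~\ref{lem:error_from_zeros} is applicable: substituting the standing bound $\rho_N \lesssim \epsilon^3 \|f\|_\infty^2 h^{3\beta+1}$ gives $\delta \lesssim \epsilon^3 h^{\beta+1}$, so that (using $\epsilon \lesssim h \le 1$ and $\beta \ge 1$) both the requirement $\delta \le \epsilon/4$ and the requirement $\epsilon + \delta\epsilon^{-1} \lesssim h$ are met after possibly shrinking constants. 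Lemma~\ref{lem:error_from_zeros} then yields $|\fft(\omega) - \fft_\epsilon(\omega)| \lesssim \hinv \|f\|_\infty (\delta\epsilon^{-1} + \epsilon)$. The piece $\hinv\|f\|_\infty \epsilon$ is precisely the second term of the theorem, so it remains to absorb $\hinv\|f\|_\infty \delta\epsilon^{-1}$ into the integrand term. Inserting $\delta \lesssim \rho_N h^{-2\beta}/\|f\|_\infty^2$ gives $\hinv\|f\|_\infty\delta\epsilon^{-1} \lesssim \rho_N h^{-2\beta-1}/(\|f\|_\infty\epsilon)$, and its ratio to the integrand term is $h^{\beta}\epsilon^2 \le 1$, again by $\epsilon \lesssim h \le 1$ and $\beta \ge 1$. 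Hence this contribution is dominated, the two surviving terms combine to give the bound on $[0,\hinv]$, conjugate symmetry extends it to $[-\hinv,\hinv]$, and intersecting the finitely many high-probability events preserves the probability $1 - c_1 e^{-c_2\tau}$ after relabelling constants.

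The theorem is thus essentially a corollary of the three lemmas, so the genuine difficulty lives in the appendix --- in particular the stable recovery of the zeros in Lemma~\ref{zeros_location_control} and the sharp $\epsilon^{-3}$ dependence in Lemma~\ref{lem:error_from_integrand}. Within the proof of the theorem itself, the only real subtlety, and the one place an error could slip in, is confirming that the $\delta\epsilon^{-1}$ term is genuinely lower order: this is exactly where the precise exponents in $\rho_N \lesssim \epsilon^3\|f\|_\infty^2 h^{3\beta+1}$ and the coupling $\epsilon \lesssim h$ are consumed, and I would double-check that no larger negative power of $\epsilon$ survives the comparison.
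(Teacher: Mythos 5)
Your proposal is correct and follows essentially the same route as the paper's proof: the triangle inequality through the intermediate estimator $\fft_\epsilon$ of \eqref{equ:GeneralizedKotlarskiIntermediate}, Lemma~\ref{zeros_location_control} feeding $\delta$ into Lemma~\ref{lem:error_from_zeros}, Lemma~\ref{lem:error_from_integrand} for the integrand error, and the observation that $\rho_N h^{-2\beta-1}/(\epsilon\|f\|_\infty)$ is dominated by $\rho_N h^{-3\beta-1}/(\epsilon^3\|f\|_\infty)$. Your bookkeeping is in fact slightly more explicit than the paper's (you verify $\delta \le \epsilon/4$ and compute the domination ratio $h^\beta\epsilon^2 \le 1$, where the paper only asserts these), so there is nothing to correct.
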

\begin{proof}
    By the triangle inequality, we have 
    \begin{align*}
        \left| \fft(\omega) - \hatfft(\omega)\right| &\leq  \left| \fft(\omega) -\fft_\epsilon(\omega)\right|+\left|\fft_\epsilon(\omega)- \hatfft(\omega)\right| .
    \end{align*}
    By Lemma \ref{zeros_location_control}, it holds with probability at least $1-c_1 e^{-c_2 \tau}$ that
    \begin{align*}
        \max_{i=1,\ldots, k(\hinv)} \left|\xi_i - \hatxi_i \right|= \delta \lesssim \frac{\rho_N h^{-2\beta}}{\|f\|_\infty^2} \, . 
    \end{align*}
    Thus, by Lemma \ref{lem:error_from_zeros} (note $\delta\epsilon^{-1}\lesssim h$ holds with high probability under the assumption on $\rho_N$), 
    \[ \left| \fft(\omega) -\fft_\epsilon(\omega)\right| \lesssim \hinv \|f\|_\infty \left(\delta \epsilon^{-1}+\epsilon\right) 
    \lesssim \frac{\rho_N h^{-2\beta-1}}{\epsilon \|f\|_\infty}+\hinv \|f\|_\infty\epsilon\]
    and by Lemma \ref{lem:error_from_integrand} we have
    \[ \left|\fft_\epsilon(\omega)- \hatfft(\omega)\right| \lesssim \frac{\rho_N h^{-2\beta-1} }{ \epsilon^3 \|f\|_\infty}.\]
    Observing that $\frac{\rho_N h^{-2\beta-1} }{ \epsilon^3 \|f\|_\infty}$ dominates $\frac{\rho_N h^{-2\beta-1}}{\epsilon \|f\|_\infty}$ and that the above bounds hold for all $\omega\in[-\hinv,\hinv]$ proves the theorem. 
\end{proof}

\begin{remark}
The $\epsilon$ minimizing the upper bound in Theorem \ref{thm:finalzerosest} is $\epsilon^* \asymp \left(\frac{\rho_N h^{-2\beta}}{\|f\|_\infty^2}\right)^{\frac{1}{4}}$, which yields an error of
	\begin{align*}
		\sup_{\omega\in[-\hinv,\hinv]} \left| \fft(\omega) - \hatfft(\omega)\right| &\lesssim \rho_N^{\frac{1}{4}}h^{-\frac{\beta}{2}-1}\|f\|_\infty^{\frac{1}{2}}.
	\end{align*}
\end{remark}

With our estimator of $\fft$ in hand, we now define an estimator of the hidden signal $f$ in space via deconvolution (see last step of Algorithm \ref{alg:zerosalgorithm}) and bound its error in the $L^{\infty}$ norm. The proof of the next result is identical to that of Theorem \ref{thm:ErrorBoundDeconvolutionEstimator} and is omitted for brevity.

\begin{theorem}[Recovery in spatial domain, vanishing case]\label{thm:ErrorBoundDeconvolutionEstimatorVanishingFT}
    Suppose Assumption \ref{ass:vanishing_model} holds and that $\epsilon \lesssim h$, $\rho_N \lesssim \|f\|^2_\infty h^{3\beta+4}$ for $\rho_N$ defined in \eqref{equ:def_of_rhoN}. Let $\epsilon = \epsilon^* \asymp \left(\frac{\rho_N h^{-2\beta}}{\|f\|_\infty^2}\right)^{\frac{1}{4}}$. Let $K$ be a kernel satisfying Assumption \ref{assumption:kernel} with $p> \beta-1$. Then, there exist positive universal constants $c_1,c_2$ such that, for all $\tau \ge 1,$ it holds with probability at least $1 - c_1 e^{-c_2 \tau}$ that the estimator $\hatf(t)$ in Algorithm \ref{alg:zerosalgorithm} satisfies 
    \begin{align*}
         \|\hatf - f \|_\infty 
        \lesssim
        \rho_N^{\frac{1}{4}} \|f\|_\infty^{\frac{1}{2}} h^{-\frac{\beta}{2} - 2} + \|f\|_\infty h^{\beta-1} \, .
    \end{align*}
    Choosing $h = h^* \asymp \left(\frac{\rho_N}{\|f\|_\infty^2}\right)^{\frac{1}{6\beta+4}}$ thus yields an estimator $\hatf$ with relative error satisfying 
    \begin{align*}
        \frac{\|\hatf - f\|_\infty}{\|f\|_\infty}
        \lesssim
        \left(\frac{\rho_N}{\|f\|_\infty^2}\right)^{\frac{\beta-1}{6\beta+4}} \, .
    \end{align*}
 \end{theorem}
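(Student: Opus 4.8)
The plan is to mirror the proof of Theorem~\ref{thm:ErrorBoundDeconvolutionEstimator} verbatim, substituting the vanishing-case Fourier-domain guarantee of Theorem~\ref{thm:finalzerosest} for that of Theorem~\ref{thm:FTDeviation}. First I would record the consequence of Theorem~\ref{thm:finalzerosest} at the prescribed scale $\epsilon = \epsilon^* \asymp (\rho_N h^{-3\beta}/\|f\|_\infty^2)^{1/4}$. A short computation confirms that the single standing hypothesis $\rho_N \lesssim \|f\|_\infty^2 h^{3\beta+4}$ licenses the application of Theorem~\ref{thm:finalzerosest} for this $\epsilon^*$: the inequality $\rho_N \lesssim \|f\|_\infty^2 h^{3\beta+4}$ is exactly equivalent to $\epsilon^* \lesssim h$, and substituting $\epsilon^*$ into the second hypothesis $\rho_N \lesssim (\epsilon^*)^3\|f\|_\infty^2 h^{3\beta+1}$ of Theorem~\ref{thm:finalzerosest} reduces, after simplification, to the same inequality. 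Hence on a high-probability event with probability at least $1-c_1 e^{-c_2\tau}$, the optimized bound stated in the remark following Theorem~\ref{thm:finalzerosest} gives $\sup_{\omega\in[-h^{-1},h^{-1}]}|\hatfft(\omega)-\fft(\omega)| \lesssim \rho_N^{1/4}\|f\|_\infty^{1/2}h^{-3\beta/4-1}$.

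Next I would perform the identical spatial decomposition as in Theorem~\ref{thm:ErrorBoundDeconvolutionEstimator}, writing $\|\hatf-f\|_\infty \le \textbf{(I)} + \textbf{(II)}$ with estimation term $\textbf{(I)} = (2\pi)^{-d}\int_{\R^d}|\Kft(h\omega)|\,|\hatfft(\omega)-\fft(\omega)|\,d\omega$ and bias term $\textbf{(II)} = (2\pi)^{-d}\int_{\R^d}|\fft(\omega)|\,|\Kft(h\omega)-1|\,d\omega$. For $\textbf{(I)}$, the compact support of $\Kft$ on $[-1,1]^d$ (Assumption~\ref{assumption:kernel}(v)) restricts the integral to $[-h^{-1},h^{-1}]^d$, whose volume is $\asymp h^{-d}$; combining this with $\sup_\omega|\Kft(\omega)|<\infty$ and the Fourier-domain bound above yields $\textbf{(I)} \lesssim h^{-d}\cdot\rho_N^{1/4}\|f\|_\infty^{1/2}h^{-3\beta/4-1} = \rho_N^{1/4}\|f\|_\infty^{1/2}h^{-3\beta/4-d-1}$, which is the first term in the claimed bound.

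For the bias term $\textbf{(II)}$ the crucial observation is that only an \emph{upper} bound on $|\fft|$ is required, and Assumption~\ref{assumption:FT_decay} supplies $|\fft(\omega)|\le C\|f\|_\infty\|\omega\|_2^{-\beta}$ for $\|\omega\|_2\ge 1$, which is exactly the bound used under Assumption~\ref{assumption:smoothness}. Consequently the computation is word-for-word that of Theorem~\ref{thm:ErrorBoundDeconvolutionEstimator}: one splits $\textbf{(II)}$ into a contribution over $[-h^{-1},h^{-1}]^d$ bounded via a $p$-th order Taylor expansion of $\Kft$ about the origin (using Assumption~\ref{assumption:kernel}(i)--(iv) together with $p>\beta-d$, i.e. $p>\beta-1$ since $d=1$) and a tail contribution over the complement where $|\Kft(h\omega)-1|=1$; both pieces are $\lesssim \|f\|_\infty h^{\beta-d}$. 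Summing gives $\|\hatf-f\|_\infty \lesssim \rho_N^{1/4}\|f\|_\infty^{1/2}h^{-3\beta/4-d-1} + \|f\|_\infty h^{\beta-d}$.

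Finally I would optimize the bandwidth by balancing the two terms. Setting $\rho_N^{1/4}\|f\|_\infty^{1/2}h^{-3\beta/4-d-1}\asymp\|f\|_\infty h^{\beta-d}$ and solving gives $h^{7\beta/4+1}\asymp(\rho_N/\|f\|_\infty^2)^{1/4}$, hence $h^*\asymp(\rho_N/\|f\|_\infty^2)^{1/(7\beta+4)}$; inserting this into the bias term produces the relative error $(\rho_N/\|f\|_\infty^2)^{(\beta-d)/(7\beta+4)}$. I would also verify admissibility of $h^*$: at $h=h^*$ one has $\rho_N/\|f\|_\infty^2 = (h^*)^{7\beta+4}$, so the feasibility condition $\rho_N \lesssim \|f\|_\infty^2 h^{3\beta+4}$ reduces to $(h^*)^{4\beta}\lesssim 1$, valid for $h^*\le 1$ since $\beta\ge 1$, and likewise $\epsilon^*\asymp(h^*)^{\beta+1}\lesssim h^*$. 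The work here is bookkeeping rather than conceptual: the only real obstacle is to track the altered Fourier-domain rate (now $\rho_N^{1/4}$ in place of the $\rho_N$ appearing in the non-vanishing case) consistently through all exponents, and to confirm that the single hypothesis $\rho_N \lesssim \|f\|_\infty^2 h^{3\beta+4}$ simultaneously justifies invoking Theorem~\ref{thm:finalzerosest} at $\epsilon^*$ and guarantees that the optimal $h^*$ lies in the admissible range.
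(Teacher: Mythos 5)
Your proposal is correct and follows exactly the route the paper intends: the paper omits this proof, stating it is identical to that of Theorem~\ref{thm:ErrorBoundDeconvolutionEstimator}, and your argument is precisely that substitution, replacing the Fourier-domain input of Theorem~\ref{thm:FTDeviation} with the optimized bound $\rho_N^{1/4}\|f\|_\infty^{1/2}h^{-3\beta/4-1}$ from Theorem~\ref{thm:finalzerosest} at $\epsilon=\epsilon^*$ and rerunning the $\textbf{(I)}+\textbf{(II)}$ decomposition and bandwidth balancing. Your additional bookkeeping --- checking that $\rho_N\lesssim\|f\|_\infty^2 h^{3\beta+4}$ is equivalent to both $\epsilon^*\lesssim h$ and the hypothesis $\rho_N\lesssim(\epsilon^*)^3\|f\|_\infty^2 h^{3\beta+1}$, and that $h^*$ and $\epsilon^*\asymp (h^*)^{\beta+1}$ remain admissible --- is accurate and in fact makes explicit verifications the paper leaves implicit.
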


As in Section \ref{sec:theory}, we can deduce the following corollary in the high-noise regime. 

\begin{corollary}[High-noise regime] \label{corr:highNoiseRegimeVanishing} 
        Under the assumptions of Theorem \ref{thm:ErrorBoundDeconvolutionEstimatorVanishingFT}, suppose that $\|k_\eta\|^{1/2}_\infty \ge \|f\|_\infty$ and $N \gtrsim \Gamma(y)^2 \lor \tau$; then, it holds with probability at least $1-c_1e^{-c_2 \tau}$ that
       \begin{align*}
        \frac{\|\hatf - f\|_\infty}{\|f\|_\infty}
        \lesssim 
        \inparen{
        \|k_\eta\|_{\infty}
        \sqrt{\frac{\tau}{N}}
        \lor 
        \frac{\mathsf{snr}^{-1}}{\sqrt{N}}
        }^{\frac{\beta - 1}{6\beta + 4}}.
    \end{align*}     
\end{corollary}
Note that analogs of Corollaries \ref{corr:largesigma} and \ref{corr:smallLengthscale} also hold with a smaller exponent: namely, $\frac{\beta-1}{3\beta+1}$ becomes $\frac{\beta-1}{6\beta+4}$ in Corollary \ref{corr:largesigma} and $\frac{\beta-1}{6\beta+2}$ becomes $\frac{\beta-1}{12\beta+8}$ in Corollary \ref{corr:smallLengthscale}.

\section{Numerical results}\label{sec:numerics}
In this section, we study the numerical performance of Algorithms \ref{alg:mainalgorithm} and \ref{alg:zerosalgorithm}. We apply Algorithm \ref{alg:mainalgorithm} to estimate three signals with non-vanishing Fourier transform and Algorithm \ref{alg:zerosalgorithm} to estimate a signal with vanishing Fourier transform. In addition to considering signals satisfying Assumptions \ref{assumption:smoothness} and \ref{ass:vanishing_model}, we also consider super smooth signals that exhibit at least exponentially fast decay in frequency. Matlab code to reproduce all numerical experiments is publicly available at \url{https://github.com/msween11/fMRA/}.

\subsection{Setup}
\label{sec:numerics_setup}

\begin{table}[b!]
\setlength{\tabcolsep}{10pt} 
\centering
\begin{tabular}{|c|c|c||c|c|c|}
\hline
\multicolumn{3}{|c||}{\textbf{Ordinary Smooth}} & \multicolumn{3}{c|}{\textbf{Super Smooth}} \\ \hline
\textbf{$f_1$} & \multicolumn{2}{c||}{
  $\renewcommand{\arraystretch}{1.5}\begin{array}{@{}l@{}}
  g_1(x) = \mathbf{1}_{[-.5,.5]}^{\circledast2}(x) \\
  f_1(x) = g_1(x) + g_1(\pi x)
  \end{array}$
} & \textbf{$f_3$} & \multicolumn{2}{c|}{
  $\renewcommand{\arraystretch}{1.5}\begin{array}{@{}l@{}}
  g_3(x) = e^{-20x^2}\cos(8x) \\
  f_3(x) = g_3(x-0.3)
  \end{array}$
} \\ \hline
\textbf{$f_2$} & \multicolumn{2}{c||}{
  $\renewcommand{\arraystretch}{1.5}\begin{array}{@{}l@{}}
  g_2(x) = \mathbf{1}_{[-.25,.25]}^{\circledast 4}(x) \\
  f_2(x) = g_2(x) + g_2(\pi x)
  \end{array}$
} & \textbf{$f_4$} & \multicolumn{2}{c|}{
  $\renewcommand{\arraystretch}{1.5}\begin{array}{@{}l@{}}
  \mu_1 = \mathcal{N}(-.3, .02), \mu_2 = \mathcal{N}(.5, .02) \\
  f_4(x) = \text{pdf of } \mu_1,\mu_2 \text{ mixture}
  \end{array}$
} \\ \hline
\end{tabular}
   \\
\caption{ 
Signals in our experiments. $\mathbf{1}^{\circledast \mathcal{D}}_{[a,b]}$ denotes $\mathcal{D}$-fold convolution of the indicator  $\mathbf{1}_{[a,b]}.$}\label{tab:functions}
\end{table}

 All hidden signals are defined on the domain $[-2,2]$ but are supported on the smaller interval $[-1,1]$. We assume that the shift distribution is also supported on $[-1,1]$, so that every translated and noise-corrupted observation remains supported within $[-2,2]$.  
  We sample in space at a rate of $ 2^{-5} $, unless otherwise stated. Our samples are then zero-padded for increased fidelity in frequency, see Remark \ref{rmk:numerics_padding}.  We choose the infinite order deconvolution kernel $K(x) = \text{sinc}(x)$. Additionally, a regularization constant $r$ was introduced to improve algorithm performance, see Remark \ref{rmk:numerics_regconst}.

 In dimension $d$, suppose each coordinate is discretized to a grid of size $J$, so that the total number of grid points is $J^d$. Computing the Fourier transforms of all $N$ samples then costs $O\left(N\cdot J^d \log(J^d)\right)$. Constructing the matrix $\widehat{\Psi}$ requires $O\left(N\cdot J^{2d}\right)$ operations, since the second-moment matrix is of size $J^d \times J^d$. Consequently, the overall computational complexity of Algorithm~\ref{alg:mainalgorithm} is $O\left(N\cdot J^{2d}\right)$, with the dominant cost arising in Step~2. Notice that in most applications that motivate this work, the signal represents an object in physical space $\R^d$ with $d = 1,2,3.$

As in Subsection~\ref{subsec:snr}, the noise was chosen to have a squared-exponential covariance function with lengthscale parameter $ \lambda $ and scale parameter $ \sigma$ (see \eqref{eq:SEkernel}).
We work with the four signals shown in Figure \ref{fig:signalsintro} and summarized in Table \ref{tab:functions}. The first two, the linear spline $f_1$ and cubic spline $f_2$, are chosen because they explicitly satisfy Assumption \ref{assumption:smoothness}. These signals, by their symmetry, have real-valued Fourier transforms and are ordinary smooth with $\beta = 2$ for $f_1$ and $\beta = 4$ for $f_2$. They are compactly supported on $ [-1,1] $ in space without truncation. The third signal, $f_3$, is a Gabor pulse shifted to the right so that it has a non-trivial complex component in frequency, unlike the first two signals. Additionally, the signal is super smooth. Finally, we let $f_4$ be the pdf of a mixture model with even weighting of two Gaussian random variables. This signal has a finite number of zeros in frequency, satisfying Assumptions \ref{assumption:FT_decay}, \ref{assumption:zeros}, and \ref{assumption:nice_oscillations}, but $\fft_4$ decays squared-exponentially fast.
All four signals are subsequently $ \ell^\infty $ normalized so that $ \max_{[-1,1]}|f_j(x)| = 1,\, j=1,2,3,4 $. Thus, the signals lead to comparable signal-to-noise ratios in the context of Corollary \ref{corr:smallLengthscale}. See Figure \ref{fig:signals} and Remark \ref{rmk:numerics_signalrec} in the Supplementary Material \ref{appendix:numerics} for zoomed-in plots of the four signals, their Fourier transforms, their recoveries, and a discussion of the recovery process. We consider two shift distributions: $\zeta_1$, the uniform on $ [-1,1] $ and $\zeta_2$, for which we take $f_1 $, but normalized to integrate to 1.  Note that neither $\zeta_1$ nor $\zeta_2$ are periodic, as our shifted signals must stay supported in their domain of $[-2,2]$.  We will show that numerically, as in the theory, Algorithms \ref{alg:mainalgorithm} and \ref{alg:zerosalgorithm} perform similarly for both shift distributions.  All simulations in this section and in the Supplementary Material \ref{appendix:numerics} were conducted 20 times and averaged. 

\setlength{\belowcaptionskip}{-15pt }
\begin{figure}
    \centering
    \includegraphics[width=1.0\linewidth]{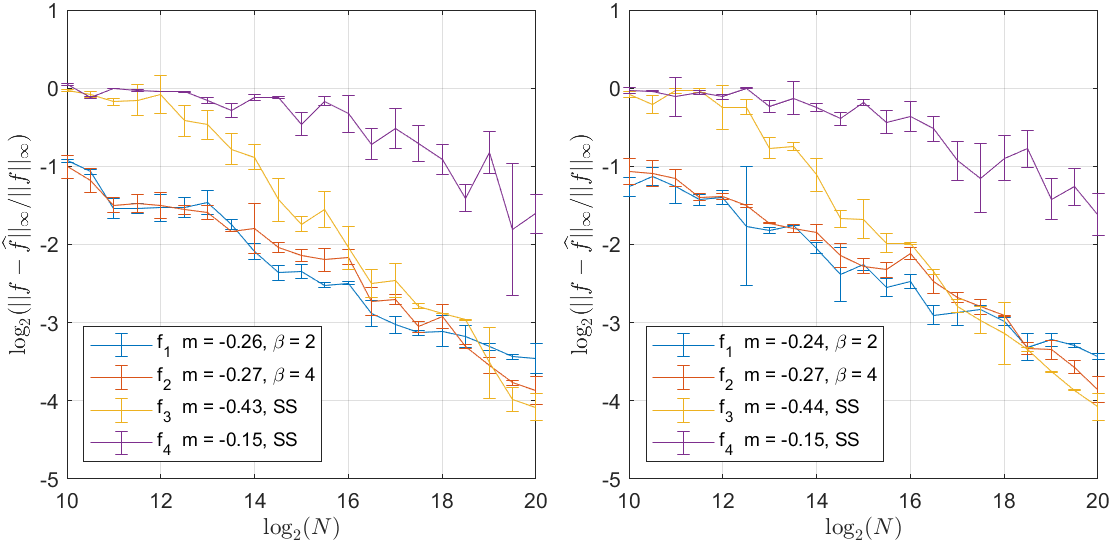}
    \caption{Error decay with varying sample size for fixed $\sigma =1$, $\lambda = 0.1$; slopes $m$ for least-squares fit shown in legend. Left: shifts from $\zeta_1$ (uniform). Right: shifts from $\zeta_2$.}
    \label{fig:nerr_sigma1}
\end{figure}

\subsection{Error response to model parameters}
 All errors are relative $ \ell^\infty $ errors in space. Figure \ref{fig:nerr_sigma1} shows relative $\ell^\infty$ error over an evenly $\log_2$ spaced range of sample sizes from $2^{10} = 1024$ to $2^{20} = 1,048,576$. The noise parameters $ \sigma = 1, \lambda = .1 $ were held constant for all experiments. Thus, all simulations were performed in the high-noise regime of Corollaries \ref{corr:highNoiseRegime} and \ref{corr:highNoiseRegimeVanishing}. The left plot in Figure \ref{fig:nerr_sigma1} shows recovery from samples shifted by $\zeta_1$ while the right plot shows recovery from samples shifted by $\zeta_2$. All experiments are deconvolved with a bandwidth parameter $h$ chosen optimally at each sample size to minimize relative $\ell^\infty$ spatial recovery error; note that choosing $h$ optimally allows us to investigate the sharpness of the $\beta$-dependent convergence rate in Theorem \ref{thm:ErrorBoundDeconvolutionEstimator}. However, the situation is largely the same when $h$ is fixed, as shown in Remark \ref{rmk:numerics_hfix}. 

As the smoothness parameter $\beta$ increases from $\beta =2$ for the linear spline to $\beta = 4$ for the cubic spline, and, essentially, $\beta= \infty$ for the super smooth Gabor wavelet, the average slope for the error decay increases. This demonstrates the dependence on $\beta$ in Theorem \ref{thm:ErrorBoundDeconvolutionEstimator}: as $\beta$ increases, the rate of error decay increases. This pattern is more noticeable at a lower noise level, see Figure \ref{fig:nerr_sigma.5} in the Supplementary Material \ref{appendix:numerics} and the discussion in Remark \ref{rmk:numerics_sig.5} for the same simulation at a noise intensity of $\sigma = 0.5$. Again, although our theory does not cover the super smooth case, the recovery is, as expected, better than with finite $\beta$. Note also that the average slope for the error decay in the case of $f_4$, the Gaussian mixture model, is worse than that of the Gabor wavelet. As expected per Theorem \ref{thm:ErrorBoundDeconvolutionEstimatorVanishingFT}, the error decay rate is slower than in Theorem \ref{thm:ErrorBoundDeconvolutionEstimator} for an otherwise comparable signal. Additionally, one can note that our empirical decay rates are better than the theoretical ones obtained in Theorems \ref{thm:ErrorBoundDeconvolutionEstimator} and \ref{thm:ErrorBoundDeconvolutionEstimatorVanishingFT}, suggesting that in these computed examples our theoretical rates ---which our results inherit from the deconvolution literature--- may be overly pessimistic.

\setlength{\belowcaptionskip}{-1pt }
\begin{figure}[tb]
    \centering
    \begin{subfigure}[b]{0.47\textwidth}
        \includegraphics[width=\linewidth]{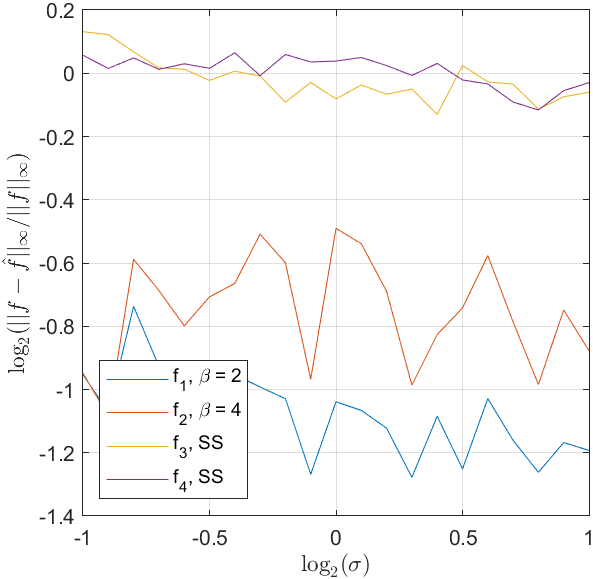}
        \caption{Varying $\sigma$, with $N = 200\cdot\sigma^4$. }
        \label{fig:sigma_nchange}
    \end{subfigure}    
    \hfill
    \begin{subfigure}[b]{0.47\textwidth}
        \includegraphics[width=\linewidth]{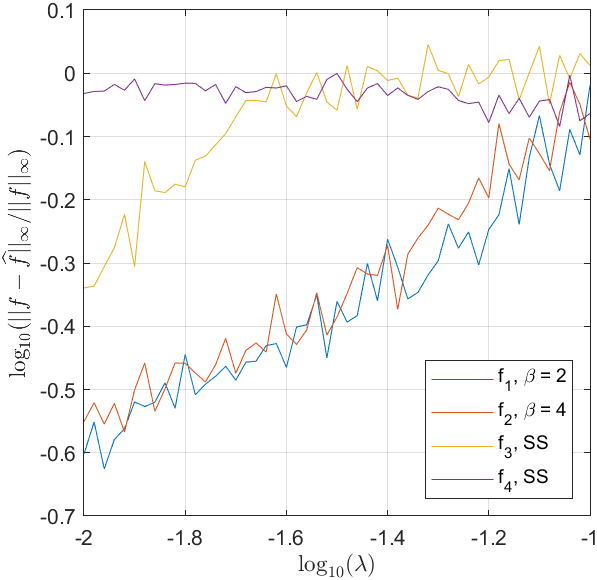}
        \caption{Varying $\lambda$, with $N = 60\log(1/\lambda)$.}
        \label{fig:lambda}
    \end{subfigure}
    \caption{Recovery error as a function of noise parameters.}
\end{figure}
In Figure \ref{fig:sigma_nchange} we investigated the dependence of the recovery error on the noise parameter $\sigma$. We fixed $\lambda = 0.1$ and varied $ \sigma $ from $2^{-1} = 0.5$ to $2^1=2$ evenly in $\log_2$ space and let $N = 200\cdot\sigma^4$. Note $\sigma = 2$ is well into the high noise regime discussed in Corollary \ref{corr:highNoiseRegime} and thus captures the behavior stated in Corollary \ref{corr:largesigma}. The bandwidth $h$ was optimized dynamically at each $\sigma$ choice to emphasize the dependence on $\sigma$. Observe that as $\sigma$ increases, with our choice of $N$, the reported error stays bounded and relatively constant, numerically verifying the stated relationship between $N$ and $\sigma$ in the large $\sigma$ regime discussed in Corollary \ref{corr:largesigma}. Moreover, this plot hints at the relationship between $\beta$ and $\sigma$ in the same corollary. Notably, Corollary \ref{corr:largesigma} states that, after taking a logarithm, the constant preceding the $\sigma^2/\sqrt{N}$ term scales like $(\beta-d)/(3\beta-1)$, so as $\beta$ increases, the error on average should increase as well. Indeed, this is exactly what we observe in our experiment. More on the interplay between $\beta$ and $\sigma$ is discussed in Remark \ref{rmk:numerics_sigma_nfix}. 

In Figure \ref{fig:lambda}, we investigated Corollary \ref{corr:smallLengthscale} by fixing $\sigma  = 1$ and varying $\lambda$ evenly in $\log_{10}$ space from $10^{-2} = 0.01$ to $10^{-1} = 0.1$. For this experiment we sampled in space at a rate of $2^{-7}$ instead of $2^{-5},$ so that the smallest lengthscale $\lambda$ was larger than our sampling rate in space. The sample size at each $\lambda$-step was taken to be $60\cdot\log(1/\lambda)$, and rounded up. Deconvolution was done with a dynamically optimized $h$. For all four signals, as $\lambda$ decreases, the error stays bounded and does not blow up, demonstrating the sample complexity's dependence on $\lambda$. Of note is that for $f_4$, the Gaussian mixture model, although the relative error stays bounded below $10^0 = 1$, it does not notably decrease as $\lambda$ decreases, like for the other signals. This is unsurprising, for as discussed in Corollary \ref{corr:highNoiseRegimeVanishing}, the dependence on $\beta$ is worse for Algorithm \ref{alg:zerosalgorithm}, and more samples are expected to be needed for a comparable recovery.

\subsection{Comparisons to existing algorithms}
\label{sec:comp_to_spectral}
\setlength{\belowcaptionskip}{-1pt }
\begin{figure}[tb]\label{fig:spec_comp}
    \centering
    \begin{subfigure}[b]{0.47\textwidth}
        \includegraphics[width=\linewidth]{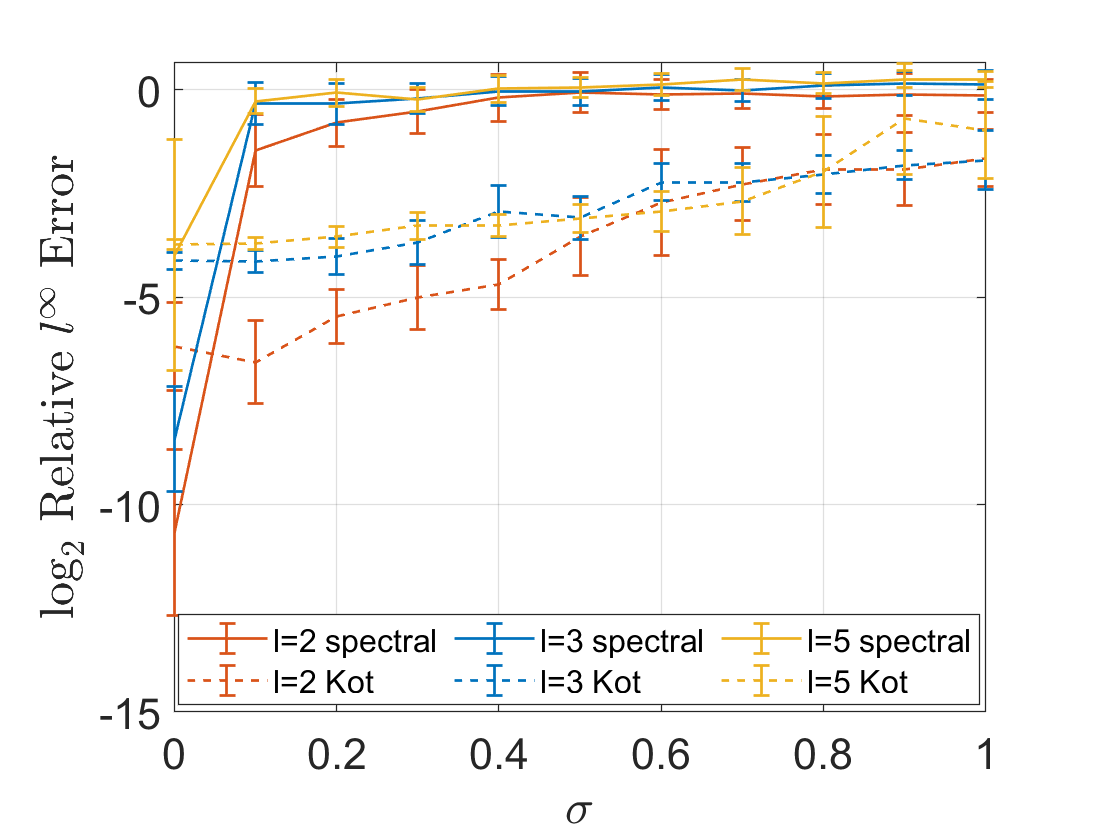}
        \caption{Isotropic noise, shifts on grid.}
        \label{fig:theirs_spec_comp}
    \end{subfigure}    
    \hfill
    \begin{subfigure}[b]{0.47\textwidth}
        \includegraphics[width=\linewidth]{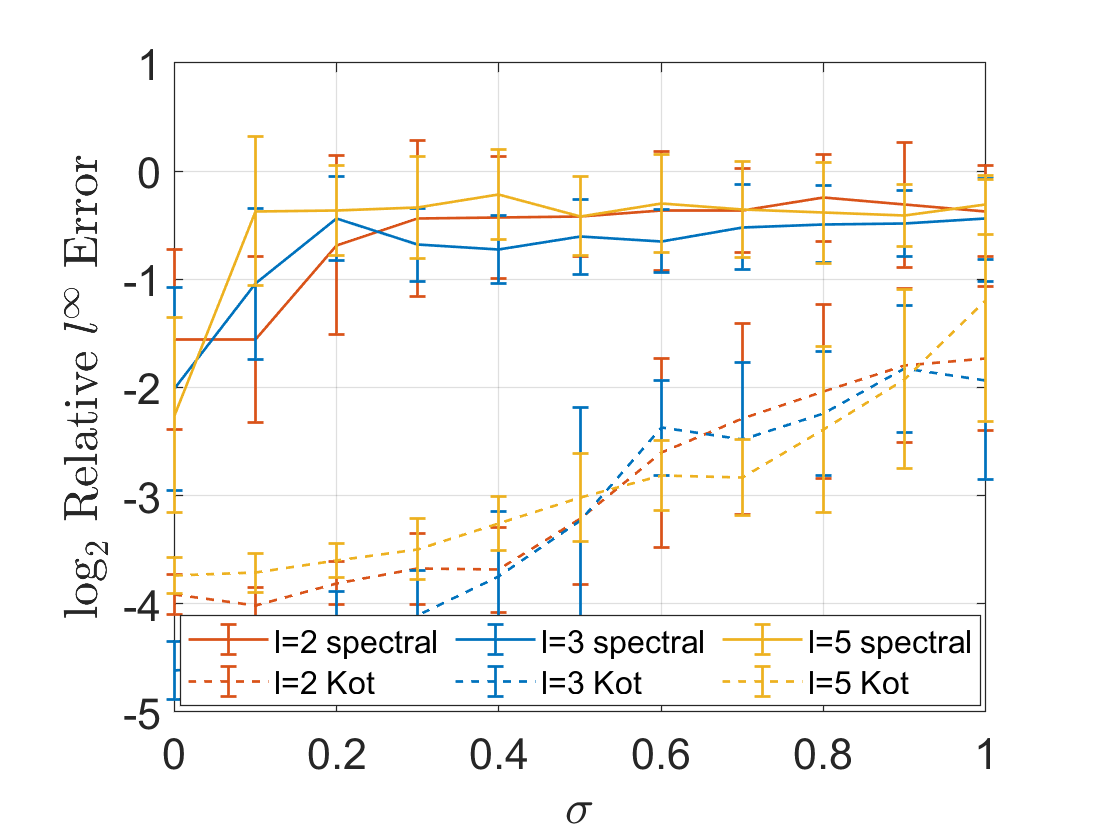}
        \caption{Squared exponential noise, continuous shifts.}
        \label{fig:ours_spec_comp}
    \end{subfigure}
    \caption{ Comparing recovery of $f_1$ from the spectral algorithm and Algorithm \ref{alg:mainalgorithm} across different sampling rates $2^{-l}$ and noise intensities. $N=10,000$ fixed. Note the differing y-axes.}
\end{figure}

In this subsection, we compare the performance of the spectral algorithm of \cite{abbe2018multireference} with that of Algorithm \ref{alg:mainalgorithm} under both the discrete MRA model \eqref{equ:classic_discrete_MRA} and our functional setting \eqref{ssec:problemsetting}. For Algorithm \ref{alg:mainalgorithm}, all experiments were conducted with $r=.1$. Both methods rely on second moment information and admit provable recovery guarantees, see Corollary IV.4 in \cite{abbe2018multireference} for the spectral algorithm. The spectral algorithm depends critically on the discrete Fourier transform of the signal staying bounded away from zero, as well as the assumption of circular shifts. Consequently, its performance can degrade  when the Fourier transform exhibits significant decay or vanishes, as occurs for compactly supported functions. This effect becomes more pronounced when the sampling rate $2^{-l}$ is taken to be fine. It is worth noting that the spectral algorithm naturally accommodates periodic shift distributions, whereas our functional setup—by requiring that shifted copies of the signal remain supported in $[-2,2]$—implicitly assumes an aperiodic shift distribution.

\subsubsection{Discrete MRA data model}
We first describe the data generation corresponding to the discrete MRA framework of \cite{abbe2018multireference}. For both algorithms, the shift distribution was supported on $[-1,1]\cap \{\mathbb{Z}\cdot 2^{-l}\}$, that is the discretized signal was shifted by a random integer number of grid points. As recommended in \cite{abbe2018multireference}, the samples used by the spectral algorithm were additionally circularly shifted according to a randomly drawn probability vector, ensuring that the shift distribution has distinct entries almost surely. See Algorithm \ref{alg:samplingdiscrete} for a detailed description of the discrete sampling procedure. The noise was chosen to be isotropic Gaussian with variance $\sigma^2$. 

Under this construction, the spectral algorithm was implemented exactly in the setting of \cite{abbe2018multireference}, while Algorithm \ref{alg:mainalgorithm} was run in the $\lambda \to 0$ regime with a discretized uniform shift distribution. We evaluate both algorithms at three sampling rates $2^{-l}$, with $l=2,3,5$, fixing the sample size at $N=10,000$ and varying the noise level $\sigma$ over the interval $[0,1]$. 

Figure \ref{fig:theirs_spec_comp} reports the relative $\ell^\infty$ error in the spatial domain for recovery of the signal $f_1.$ In the noiseless case ($\sigma = 0$), the spectral algorithm achieves slightly smaller error. However, its performance deteriorates rapdily as $\sigma$ increases, and is outperformed by Algorithm \ref{alg:mainalgorithm} at all positive noise levels considered. This suggests that Algorithm \ref{alg:mainalgorithm} provides improved robustness even in the regime of isotropic noise and discretely supported shift distributions. Moreover, the performance of the spectral algorithm degrades as the sampling rate $2^{-l}$ decreases (equivalently, as the signal length increases).

\subsubsection{Functional MRA data model}

We next consider the functional MRA setting introduced in Subsection~\ref{ssec:problemsetting}. Figure \ref{fig:ours_spec_comp} reports an experiment analogous to that of \ref{fig:theirs_spec_comp}, but with shifts drawn from a continuous uniform distribution on $[-1,1]$ and noise sampled from a centered Gaussian process with squared exponential covariance, with $\lambda = .1$ fixed. The spectral algorithm was modified to remain unbiased by incorporating the appropriate covariance matrix. 

This experiment matches the setting described in Subsection \ref{sec:numerics_setup}. In particular, the discrete MRA assumption that shifts occur exactly on the discretization grid is no longer satisfied. For each sample $y_n$, the signal and noise are evaluated on the interval $[-1+\zeta_n,1+\zeta_n]$ at resolution $2^{-l}$, so the observations do not arise from circular shifts of a fixed grid. See Algorithm \ref{alg:samplingfunctional} for further details on the functional sampling procedure. 

In this setting, the spectral algorithm performs worse than Algorithm \ref{alg:mainalgorithm} across all sampling rates considered, including the noiseless case. In Section \ref{ssec:suppspectral}, we repeat the experiments of Figures \ref{fig:theirs_spec_comp} and \ref{fig:ours_spec_comp} for the signal $f_4$, using Algorithm \ref{alg:zerosalgorithm} for its recovery, and obtain results consistent with those for $f_1$. Although the spectral algorithm is theoretically not designed to handle signals whose Fourier transform vanishes, when the sampling rate is sufficiently coarse the zeros may lie off the discretization grid, allowing the spectral method to recover $f_4$ in practice.

\section{Conclusions}\label{sec:conclusions}        
This paper has introduced a novel deconvolution framework for functional MRA, establishing a  connection to Kotlarski’s identity and extending it to multidimensional and vanishing-frequency settings. By formulating MRA in function space, our approach avoids issues arising in the discrete model as resolution increases, and enables recovery guarantees for non-bandlimited signals. Our theory provides error bounds in both Fourier and spatial domains, including in high-noise regimes, and accommodates signals with vanishing Fourier transforms ---a significant advance over prior work. Numerical experiments confirm the practical effectiveness of our estimators, demonstrating accurate estimation for a variety of signals  and improvement over existing second moment based recovery algorithms.

\subsection*{Acknowledgments}
OAG and DSA were funded by NSF DMS CAREER 2237628 and DOE DE SC0022232.  OAG was funded by the Eric and Wendy Schmidt Center at the Broad Institute of MIT and Harvard.  
AL thanks NSF DMS 2309570 and NSF DMS CAREER 2441153. MS was partly funded by NSF RTG DMS 2136198.

\bibliographystyle{siam}
\bibliography{ref.bib}

@article{bendory2025transversality,
  title={{A transversality theorem for semi-algebraic sets with application to signal recovery from the second moment and cryo-EM}},
  author={Bendory, Tamir and Dym, Nadav and Edidin, Dan and Suresh, Arun},
  journal={Foundations of Computational Mathematics},
  pages={1--28},
  year={2025},
  publisher={Springer}
}

@article{fan1991optimal,
  title={On the optimal rates of convergence for nonparametric deconvolution problems},
  author={Fan, Jianqing},
  journal={The Annals of Statistics},
  pages={1257--1272},
  year={1991},
  publisher={JSTOR}
}

@article{Delaigle2008repeated,
  author  = {Delaigle, Aurore and Hall, Peter and Meister, Alexander},
  title   = {On deconvolution with repeated measurements},
  journal = {The Annals of Statistics},
  year    = {2008},
  volume  = {36},
  number  = {2},
  pages   = {665--685},
  doi     = {10.1214/009053607000000884}
}

@article{abas2022generalized,
  title={The generalized method of moments for multi-reference alignment},
  author={Abas, Asaf and Bendory, Tamir and Sharon, Nir},
  journal={IEEE Transactions on Signal Processing},
  volume={70},
  pages={1377--1388},
  year={2022},
  publisher={IEEE}
}

@book{talagrand2022upper,
  title={Upper and Lower Bounds for Stochastic Processes: Decomposition Theorems},
  author={Talagrand, Michel},
  volume={60},
  year={2022},
  publisher={Springer Nature}
}

@article{mendelson2010empirical,
  title={Empirical processes with a bounded $\psi_1$ diameter},
  author={Mendelson, Shahar},
  journal={Geometric and Functional Analysis},
  volume={20},
  number={4},
  pages={988--1027},
  year={2010},
  publisher={Springer}
}

@article{huang2009noise,
  title={Noise correlations in cortical area MT and their potential impact on trial-by-trial variation in the direction and speed of smooth-pursuit eye movements},
  author={Huang, Xin and Lisberger, Stephen G},
  journal={Journal of Neurophysiology},
  volume={101},
  number={6},
  pages={3012--3030},
  year={2009},
  publisher={American Physiological Society}
}

@article{hazon2022noise,
  title={Noise correlations in neural ensemble activity limit the accuracy of hippocampal spatial representations},
  author={Hazon, Omer and Minces, Victor H and Tom{\`a}s, David P and Ganguli, Surya and Schnitzer, Mark J and Jercog, Pablo E},
  journal={Nature Communications},
  volume={13},
  number={1},
  pages={4276},
  year={2022},
  publisher={Nature Publishing Group UK London}
}

@article{bejjanki2017noise,
  title={Noise correlations in the human brain and their impact on pattern classification},
  author={Bejjanki, Vikranth R and Da Silveira, Rava Azeredo and Cohen, Jonathan D and Turk-Browne, Nicholas B},
  journal={PLoS Computational Biology},
  volume={13},
  number={8},
  pages={e1005674},
  year={2017},
  publisher={Public Library of Science San Francisco, CA USA}
}

@article{evdokimov2012some,
  title={{Some extensions of a lemma of Kotlarski}},
  author={Evdokimov, Kirill and White, Halbert},
  journal={Econometric Theory},
  volume={28},
  number={4},
  pages={925--932},
  year={2012},
  publisher={Cambridge University Press}
}

@article{balanov2025expectation,
  title={Expectation-maximization for multi-reference alignment: Two pitfalls and one remedy},
  author={Balanov, Amnon and Huleihel, Wasim and Bendory, Tamir},
  journal={arXiv preprint arXiv:2505.21435},
  year={2025}
}

@article{kotlarski1967characterizing,
  title={On characterizing the gamma and the normal distribution},
  author={Kotlarski, Ignacy},
  journal={Pacific Journal of Mathematics},
  volume={20},
  number={1},
  pages={69--76},
  year={1967},
  publisher={Mathematical Sciences Publishers}
}

@article{mendelson2016upper,
  title={Upper bounds on product and multiplier empirical processes},
  author={Mendelson, Shahar},
  journal={Stochastic Processes and their Applications},
  volume={126},
  number={12},
  pages={3652--3680},
  year={2016},
  publisher={Elsevier}
}

@article{kurisu2022uniform,
  title={On the uniform convergence of deconvolution estimators from repeated measurements},
  author={Kurisu, Daisuke and Otsu, Taisuke},
  journal={Econometric Theory},
  volume={38},
  number={1},
  pages={172--193},
  year={2022},
  publisher={Cambridge University Press}
}

@article{evdokimov2010identification,
  title={Identification and estimation of a nonparametric panel data model with unobserved heterogeneity},
  author={Evdokimov, Kirill},
  journal={Department of Economics, Princeton University},
  volume={1},
  number={5},
  pages={23},
  year={2010}
}

@article{comte2015density,
  title={Density deconvolution from repeated measurements without symmetry assumption on the errors},
  author={Comte, Fabienne and Kappus, Johanna},
  journal={Journal of Multivariate Analysis},
  volume={140},
  pages={31--46},
  year={2015},
  publisher={Elsevier}
}

@book{rao1992identifiability,
  title={Identifiability in Stochastic Models},
  author={Rao, B. L. S. Prakasa},
  year={1992},
  publisher={Academic Press}
}

@article{al2025covariance,
  title={Covariance operator estimation via adaptive thresholding},
  author={Al-Ghattas, Omar and Sanz-Alonso, Daniel},
  journal={Stochastic Processes and their Applications},
  pages={104705},
  year={2025},
  publisher={Elsevier}
}

@article{al2024optimal,
  title={Optimal estimation of structured covariance operators},
  author={Al-Ghattas, Omar and Chen, Jiaheng and Sanz-Alonso, Daniel and Waniorek, Nathan},
  journal={arXiv preprint arXiv:2408.02109},
  year={2024}
}

@article{dou2024rates,
	title={Rates of estimation for high-dimensional multireference alignment},
	author={Dou, Zehao and Fan, Zhou and Zhou, Harrison H},
	journal={The Annals of Statistics},
	volume={52},
	number={1},
	pages={261--284},
	year={2024},
	publisher={Institute of Mathematical Statistics}
}

@article{bendory2024sample,
	title={The Sample Complexity of Sparse Multireference Alignment and Single-Particle Cryo-Electron Microscopy},
	author={Bendory, Tamir and Edidin, Dan},
	journal={SIAM Journal on Mathematics of Data Science},
	volume={6},
	number={2},
	pages={254--282},
	year={2024},
	publisher={SIAM}
}

@article{romanov2021multi,
	title={Multi-reference alignment in high dimensions: sample complexity and phase transition},
	author={Romanov, Elad and Bendory, Tamir and Ordentlich, Or},
	journal={SIAM Journal on Mathematics of Data Science},
	volume={3},
	number={2},
	pages={494--523},
	year={2021},
	publisher={SIAM}
}

@article{perry2019sample,
	title={The sample complexity of multireference alignment},
	author={Perry, Amelia and Weed, Jonathan and Bandeira, Afonso S and Rigollet, Philippe and Singer, Amit},
	journal={SIAM Journal on Mathematics of Data Science},
	volume={1},
	number={3},
	pages={497--517},
	year={2019},
	publisher={SIAM}
}

@article{ghosh2023minimax,
	title={Minimax-optimal estimation for sparse multi-reference alignment with collision-free signals},
	author={Ghosh, Subhro and Mukherjee, Soumendu Sundar and Pan, Jing Bin},
	journal={arXiv preprint arXiv:2312.07839},
	year={2023}
}

@article{ghosh2023sparse,
	title={Sparse multi-reference alignment: Phase retrieval, uniform uncertainty principles and the beltway problem},
	author={Ghosh, Subhroshekhar and Rigollet, Philippe},
	journal={Foundations of Computational Mathematics},
	volume={23},
	number={5},
	pages={1851--1898},
	year={2023},
	publisher={Springer}
}

@ARTICLE{yin2024bispectrum,
	author={Yin, Liping and Little, Anna and Hirn, Matthew},
	journal={IEEE Transactions on Signal Processing}, 
	title={Bispectrum Unbiasing for Dilation-Invariant Multi-reference Alignment}, 
	year={2024},
	volume={},
	number={},
	pages={1-16},
	keywords={Additive noise;Analytical models;White noise;Synchronization;Random variables;Mathematical models;Biological system modeling},
	doi={10.1109/TSP.2024.3420930}}

@article{hirn2021wavelet,
  title={Wavelet invariants for statistically robust multi-reference alignment},
  author={Hirn, Matthew and Little, Anna},
  journal={Information and Inference: A Journal of the IMA},
  volume={10},
  number={4},
  pages={1287--1351},
  year={2021},
  publisher={Oxford University Press}
}

@article{hirn2023power,
  title={Power Spectrum Unbiasing for Dilation-Invariant Multi-reference Alignment},
  author={Hirn, Matthew and Little, Anna},
  journal={Journal of Fourier Analysis and Applications},
  volume={29},
  number={4},
  pages={43},
  year={2023},
  publisher={Springer}
}

@article{zwart2003fast,
  title={{Fast translation invariant classification of HRR range profiles in a zero phase representation}},
  author={Zwart, J Portegies and van der Heiden, Ren{\'e} and Gelsema, Sjoerd and Groen, Frans},
  journal={IEE Proceedings-Radar, Sonar and Navigation},
  volume={150},
  number={6},
  pages={411--418},
  year={2003},
  publisher={IET}
}

@inproceedings{chen2014near,
	Author = {Chen, Yuxin and Guibas, Leonidas J and Huang, Qi-Xing},
	Booktitle = {Proceedings of the 31st International Conference on Machine Learning},
	Number = {2},
	Pages = {100--108},
	Series = {Proceedings of Machine Learning Research},
	Title = {Near-optimal joint object matching via convex relaxation},
	Volume = {32},
	Year = {2014}}

@inproceedings{bandeira2016low,
	Author = {Bandeira, Afonso S and Boumal, Nicolas and Voroninski, Vladislav},
	Booktitle = {Conference on Learning Theory},
	Pages = {361--382},
	Title = {On the low-rank approach for semidefinite programs arising in synchronization and community detection},
	Year = {2016}}

@article{zhong2018near,
	Author = {Zhong, Yiqiao and Boumal, Nicolas},
	Journal = {SIAM Journal on Optimization},
	Number = {2},
	Pages = {989--1016},
	Publisher = {SIAM},
	Title = {Near-optimal bounds for phase synchronization},
	Volume = {28},
	Year = {2018}}

@article{chen2018projected,
	Author = {Chen, Yuxin and Cand{\`e}s, Emmanuel J},
	Journal = {Communications on Pure and Applied Mathematics},
	Number = {8},
	Pages = {1648--1714},
	Publisher = {Wiley Online Library},
	Title = {The projected power method: An efficient algorithm for joint alignment from pairwise differences},
	Volume = {71},
	Year = {2018}}

@article{boumal2016nonconvex,
	Author = {Boumal, Nicolas},
	Journal = {SIAM Journal on Optimization},
	Number = {4},
	Pages = {2355--2377},
	Publisher = {SIAM},
	Title = {Nonconvex phase synchronization},
	Volume = {26},
	Year = {2016}}

@article{perry2018message,
	Author = {Perry, Amelia and Wein, Alexander S and Bandeira, Afonso S and Moitra, Ankur},
	Journal = {Communications on Pure and Applied Mathematics},
	Number = {11},
	Pages = {2275--2322},
	Publisher = {Wiley Online Library},
	Title = {Message-Passing Algorithms for Synchronization Problems over Compact Groups},
	Volume = {71},
	Year = {2018}}

@article{dempster1977maximum,
	Author = {Dempster, Arthur P and Laird, Nan M and Rubin, Donald B},
	Journal = {Journal of the Royal Statistical Society: Series B (Methodological)},
	Number = {1},
	Pages = {1--22},
	Publisher = {Wiley Online Library},
	Title = {{Maximum likelihood from incomplete data via the EM algorithm}},
	Volume = {39},
	Year = {1977}}

@article{brown1992survey,
	Author = {Brown, Lisa Gottesfeld},
	Journal = {ACM Computing Surveys (CSUR)},
	Number = {4},
	Pages = {325--376},
	Publisher = {ACM},
	Title = {A survey of image registration techniques},
	Volume = {24},
	Year = {1992}}

@article{collis1998higher,
	Author = {Collis, WB and White, PR and Hammond, JK},
	Journal = {Mechanical Systems and Signal Processing},
	Number = {3},
	Pages = {375--394},
	Publisher = {Elsevier},
	Title = {Higher-order spectra: the bispectrum and trispectrum},
	Volume = {12},
	Year = {1998}}

@article{hansen1982large,
	Author = {Hansen, Lars Peter},
	Journal = {Econometrica: Journal of the Econometric Society},
	Pages = {1029--1054},
	Publisher = {JSTOR},
	Title = {Large sample properties of generalized method of moments estimators},
	Year = {1982}}

@article{bandeira2023estimation,
  title={Estimation under group actions: recovering orbits from invariants},
  author={Bandeira, Afonso S and Blum-Smith, Ben and Kileel, Joe and Niles-Weed, Jonathan and Perry, Amelia and Wein, Alexander S},
  journal={Applied and Computational Harmonic Analysis},
  volume={66},
  pages={236--319},
  year={2023},
  publisher={Elsevier}
}

@book{meister2007Deconvolution,
	Author = {Meister, Alexander},
	Volume = {193},
	Publisher = {Lecture Notes in Statistics. Springer-Verlag},
	Title = {Deconvolution Problems in Nonparametric Statistics},	
        Year = {2007},
        doi = {10.1007/978-3-540-87557-4},
        url = {https://doi.org/10.1007/978-3-540-87557-4}}

@article{abbe2018multireference,
	Author = {Abbe, Emmanuel and Bendory, Tamir and Leeb, William and Pereira, Jo{\~a}o M and Sharon, Nir and Singer, Amit},
	Journal = {IEEE Transactions on Information Theory},
	Number = {6},
	Pages = {3565--3584},
	Publisher = {IEEE},
	Title = {Multireference alignment is easier with an aperiodic translation distribution},
	Volume = {65},
	Year = {2018}}

@article{park2011stochastic,
	Author = {Park, Wooram and Midgett, Charles R and Madden, Dean R and Chirikjian, Gregory S},
	Journal = {The International Journal of Robotics Research},
	Number = {6},
	Pages = {730--754},
	Publisher = {SAGE Publications Sage UK: London, England},
	Title = {A stochastic kinematic model of class averaging in single-particle electron microscopy},
	Volume = {30},
	Year = {2011}}

@article{park2014assembly,
	Author = {Park, Wooram and Chirikjian, Gregory S},
	Journal = {IEEE Transactions on Automation Science and Engineering},
	Number = {3},
	Pages = {668--679},
	Publisher = {IEEE},
	Title = {An assembly automation approach to alignment of noncircular projections in electron microscopy},
	Volume = {11},
	Year = {2014}}

@article{scheres2005maximum,
	Author = {Scheres, Sjors HW and Valle, Mikel and Nu{\~n}ez, Rafael and Sorzano, Carlos OS and Marabini, Roberto and Herman, Gabor T and Carazo, Jose-Maria},
	Journal = {Journal of Molecular Biology},
	Number = {1},
	Pages = {139--149},
	Publisher = {Elsevier},
	Title = {Maximum-likelihood multi-reference refinement for electron microscopy images},
	Volume = {348},
	Year = {2005}}

@article{sadler1992shift,
	Author = {Sadler, Brian M and Giannakis, Georgios B},
	Journal = {JOSA A},
	Number = {1},
	Pages = {57--69},
	Publisher = {Optical Society of America},
	Title = {Shift-and rotation-invariant object reconstruction using the bispectrum},
	Volume = {9},
	Year = {1992}}

@article{diamond1992multiple,
	Author = {Diamond, Robert},
	Journal = {Protein Science},
	Number = {10},
	Pages = {1279--1287},
	Publisher = {Wiley Online Library},
	Title = {On the multiple simultaneous superposition of molecular structures by rigid body transformations},
	Volume = {1},
	Year = {1992}}

@article{foroosh2002extension,
	Author = {Foroosh, Hassan and Zerubia, Josiane B and Berthod, Marc},
	Journal = {IEEE Transactions on Image Processing},
	Number = {3},
	Pages = {188--200},
	Publisher = {IEEE},
	Title = {Extension of phase correlation to subpixel registration},
	Volume = {11},
	Year = {2002}}

@article{theobald2012optimal,
	Author = {Theobald, Douglas L and Steindel, Phillip A},
	Journal = {Bioinformatics},
	Number = {15},
	Pages = {1972--1979},
	Publisher = {Oxford University Press},
	Title = {Optimal simultaneous superpositioning of multiple structures with missing data},
	Volume = {28},
	Year = {2012}}

@article{bendory2017bispectrum,
	Author = {Bendory, Tamir and Boumal, Nicolas and Ma, Chao and Zhao, Zhizhen and Singer, Amit},
	Journal = {IEEE Transactions on Signal Processing},
	Number = {4},
	Pages = {1037--1050},
	Publisher = {IEEE},
	Title = {Bispectrum inversion with application to multireference alignment},
	Volume = {66},
	Year = {2017}}

@article{sharon2019method,
	Author = {Sharon, Nir and Kileel, Joe and Khoo, Yuehaw and Landa, Boris and Singer, Amit},
	Journal = {Inverse Problems},
	Number = {4},
	Pages = {044003},
	Publisher = {IOP Publishing},
	Title = {Method of moments for {3-D} single particle ab initio modeling with non-uniform distribution of viewing angles},
	Volume = {36},
	Year = {2020}}

@incollection{kam1980reconstruction,
	Author = {Kam, Zvi},
	Booktitle = {Electron Microscopy at Molecular Dimensions},
	Pages = {270--277},
	Publisher = {Springer},
	Title = {The reconstruction of structure from electron micrographs of randomly oriented particles},
	Year = {1980}}

@inproceedings{bandeira2014multireference,
	Author = {Bandeira, Afonso S and Charikar, Moses and Singer, Amit and Zhu, Andy},
	Booktitle = {Proceedings of the 5th Conference on Innovations in Theoretical Computer Science},
	Organization = {ACM},
	Pages = {459--470},
	Title = {Multireference alignment using semidefinite programming},
	Year = {2014}}

@article{bandeira2020non,
	Author = {Bandeira, Afonso and Chen, Yutong and Lederman, Roy R and Singer, Amit},
	Journal = {Inverse Problems},
	Publisher = {IOP Publishing},
	Title = {{Non-unique games over compact groups and orientation estimation in cryo-EM}},
	Year = {2020}}

@article{singer2011angular,
	Author = {Singer, Amit},
	Journal = {Applied and Computational Harmonic Analysis},
	Number = {1},
	Pages = {20--36},
	Publisher = {Elsevier},
	Title = {Angular synchronization by eigenvectors and semidefinite programming},
	Volume = {30},
	Year = {2011}}

@article{bandeira2020optimal,
		title={Optimal rates of estimation for multi-reference alignment},
		author={Bandeira, Afonso and Niles-Weed, Jonathan and Rigollet, Philippe},
		journal={Mathematical Statistics and Learning},
		volume={2},
		number={1},
		pages={25--75},
		year={2020}
	}

@book{vershynin:HDP2017,
  title={High-dimensional Probability: An Introduction with Applications in Data Science},
  author={Vershynin, Roman},
  volume={47},
  year={2018},
  publisher={Cambridge University Press}
}

@article{li1998nonparametric,
  title={Nonparametric estimation of the measurement error model using multiple indicators},
  author={Li, Tong and Vuong, Quang},
  journal={Journal of Multivariate Analysis},
  volume={65},
  number={2},
  pages={139--165},
  year={1998},
  publisher={Elsevier}
}

@article{schennach2004estimation,
  title={Estimation of nonlinear models with measurement error},
  author={Schennach, Susanne M},
  journal={Econometrica},
  volume={72},
  number={1},
  pages={33--75},
  year={2004},
  publisher={Wiley Online Library}
}

\begin{appendix}
\section{Supplementary Materials for Section~\ref{sec:settingandalgorithm}}
This section contains the proof of an auxiliary result, Lemma~\ref{lem:technicalKotlarski}, used to establish Theorem~\ref{thm:kotlarski}.

\begin{lemma}[Path-wise exponential representation]\label{lem:technicalKotlarski}
    Suppose that $\xi:\R^d \to \C$ is a continuously differentiable function and 
    let $\Re \xi, \Im \xi : \R^d \to \R$ denote the real and imaginary parts of $\xi$, respectively, so that $\xi = \Re \xi+ i \Im \xi.$ Then for any $\omega_1, \omega_2 \in \R^d$ for which $\xi$ is non-vanishing on the straight-line path from $\omega_1$ to $\omega_2,$ which we denote by $\gamma: [0,1] \to \C$,  we have
    \begin{align*}
        \frac{\xi(\omega_2)}{\xi(\omega_1)} = \exp \inparen{\int_0^1 \frac{\nabla \xi(\gamma(\alpha) )}{\xi(\gamma(\alpha))} \cdot (\omega_2-\omega_1) d\alpha},
    \end{align*}
    where $\nabla \xi(\omega) \cdot \omega := \nabla \Re \xi(\omega) \cdot \omega +i\nabla \Im \xi(\omega)\cdot \omega$. 
\end{lemma}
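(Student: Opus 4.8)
The plan is to reduce the multivariate identity to a one-dimensional fact about nonvanishing $C^1$ curves in $\C\setminus\{0\}$ by restricting $\xi$ to the path. First I would set $\gamma(\alpha) := (1-\alpha)\omega_1 + \alpha\omega_2$ and define the scalar auxiliary function $g(\alpha) := \xi(\gamma(\alpha))$ for $\alpha \in [0,1]$. Since $\xi$ is continuously differentiable and $\gamma$ is affine, $g$ is $C^1$ on $[0,1]$, and by hypothesis $g$ is non-vanishing there. This turns the statement into a claim purely about the curve $g$.

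Second, I would identify $g'$ with the integrand appearing in the formula. Writing $\xi = \Re\xi + i\,\Im\xi$ and applying the ordinary (real) chain rule to each component along $\gamma$ gives
\[
g'(\alpha) = \nabla\Re\xi(\gamma(\alpha))\cdot(\omega_2-\omega_1) + i\,\nabla\Im\xi(\gamma(\alpha))\cdot(\omega_2-\omega_1),
\]
which is exactly $\nabla\xi(\gamma(\alpha))\cdot(\omega_2-\omega_1)$ under the paper's convention for the complex dot product. Consequently the desired identity is equivalent to the scalar statement $g(1)/g(0) = \exp\!\big(\int_0^1 g'(\alpha)/g(\alpha)\,d\alpha\big)$.

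Third, and this is the heart of the argument, I would establish this scalar identity \emph{without} invoking a complex logarithm, by an integrating-factor/constancy argument. Because $g$ is continuous and non-vanishing on the compact interval $[0,1]$, the modulus $|g|$ is bounded below by a positive constant, so $g'/g$ is continuous and the function $G(\alpha) := \int_0^\alpha g'(s)/g(s)\,ds$ is well-defined and $C^1$. I would then set $h(\alpha) := g(\alpha)\,e^{-G(\alpha)}$ and differentiate:
\[
h'(\alpha) = g'(\alpha)e^{-G(\alpha)} - g(\alpha)\frac{g'(\alpha)}{g(\alpha)}e^{-G(\alpha)} = 0,
\]
so $h$ is constant on $[0,1]$. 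Evaluating at the endpoints gives $g(1)e^{-G(1)} = h(1) = h(0) = g(0)$, i.e. $g(1)/g(0) = e^{G(1)}$, which is precisely the claimed formula after substituting back $g$ and $\gamma$.

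The only genuine subtlety, and the reason I would use the exponential trick rather than directly antidifferentiating $g'/g$ as $(\log g)'$ via the fundamental theorem of calculus, is that the complex logarithm is multivalued; one cannot in general write $\log g(1) - \log g(0) = \int_0^1 g'/g$ because the curve $g$ may wind around the origin. The constancy of $h$ sidesteps the branch ambiguity entirely and remains valid regardless of the winding number. The remaining points, namely integrability of $g'/g$ and the regularity of $G$, are routine consequences of $g$ being $C^1$ and bounded away from zero on a compact interval, and I would dispatch them in a single sentence.
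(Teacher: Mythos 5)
Your proof is correct, and it takes a genuinely different route from the paper's. The paper proves the same scalar reduction but does so by explicitly constructing a continuously chosen branch of the complex logarithm along the curve, writing $\log_c \xi = \log|\xi| + i\arg(\xi)$ and verifying by direct computation of $\frac{d}{d\alpha}\log|\xi(\gamma(\alpha))|$ and $\frac{d}{d\alpha}\arg(\xi(\gamma(\alpha)))$ that the combined derivative equals $\frac{\nabla\xi(\gamma(\alpha))}{\xi(\gamma(\alpha))}\cdot(\omega_2-\omega_1)$, after which the identity follows from the fundamental theorem of calculus and exponentiation. That argument leans on the existence of a continuous lift of the argument along a nonvanishing $C^1$ curve, which the paper asserts as well-defined but does not really justify (it is standard path-lifting, but it is the one nontrivial ingredient). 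Your integrating-factor argument --- setting $h(\alpha) = g(\alpha)e^{-G(\alpha)}$ with $G(\alpha) = \int_0^\alpha g'(s)/g(s)\,ds$ and checking $h' \equiv 0$ --- sidesteps the multivalued logarithm and the lifting question entirely, and you correctly identify this as the crux: one cannot naively write $\int_0^1 g'/g = \log g(1) - \log g(0)$ when the curve may wind around the origin. The only point worth dispatching explicitly in your write-up is that the chain rule $\frac{d}{d\alpha}e^{-G(\alpha)} = -G'(\alpha)e^{-G(\alpha)}$ for complex-valued $G$ of a real variable holds (immediate from the power series of $\exp$, or by splitting into real and imaginary parts); with that noted, your proof is complete, more elementary, and arguably cleaner than the paper's, at the cost of not producing the explicit modulus/argument derivative formulas the paper records along the way.
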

\begin{proof}
    Fix $\omega_1, \omega_2 \in \R^d$ for which the stated assumption holds and let $\gamma: [0,1] \to \R^d$ be the straight line path from $\omega_1$ to $\omega_2$ (i.e. $\gamma(\alpha)= (1-\alpha) \omega_1 + \alpha \omega_2$). Let $\mcL(\alpha) := \log_c \xi( \gamma(\alpha))$, where $\log_c$ denotes the continuously chosen branch of the complex logarithm along the path $\alpha \mapsto \xi(\gamma(\alpha)),$ with initial condition $\mcL(0) = \log \xi(\omega_1).$ Note that this is well-defined since $\xi\circ \gamma : [0,1] \to \C$ is continuously differentiable and non-vanishing. Further, $\mcL$ is smooth on $[0,1].$ We then have 
    \begin{align}\label{eq:complexChainRule}
        \frac{d}{d\alpha} \mcL(\alpha) = 
        \frac{\nabla \xi( \gamma(\alpha) )}{\xi(\gamma(\alpha))} \cdot (\omega_2 - \omega_1).
    \end{align}
    The above can be justified as follows. Write
    \begin{align*}
        \log_c \xi  = \log |\xi| + i \arg (\xi),
    \end{align*}
    where $|\xi |$ is the magnitude and $\arg(\xi )$ is the continuously chosen argument of $\xi$ along the curve. Below, all functions are evaluated at $\gamma(\alpha)$. Standard calculation then yields
    \begin{align*}
        \frac{d}{d\alpha} \log |\xi| 
        &= \frac{1}{|\xi|}\frac{d}{d\alpha} |\xi|
        = \frac{\Re \xi \frac{d}{d\alpha} R + \Im \xi \frac{d}{d\alpha} \Im \xi }{\Re \xi^2 + \Im \xi^2 }
        = 
        \frac{
        (\Re \xi \nabla \Re \xi+\Im \xi \nabla \Im \xi) \cdot (\omega_2-\omega_1)
        }{\Re \xi^2 + \Im \xi^2 },\\
        \frac{d}{d\alpha} \arg(\xi )  
        &=
        \frac{
            (\Re \xi \nabla \Im \xi -\Im \xi \nabla \Re \xi)\cdot (\omega_2-\omega_1)
        }{\Re \xi^2 + \Im \xi^2}.
    \end{align*}  
    Therefore, 
    \begin{align*}
        \frac{d}{d\alpha} \log |\xi| 
        +i\frac{d}{d\alpha} \arg(\xi )  
        &=
        \frac{
        (\Re \xi \nabla \Re \xi 
        +
        \Im \xi \nabla \Im \xi
        + i \Re \xi \nabla \Im \xi
        -
        i \Im \xi \nabla \Re \xi )\cdot (\omega_2-\omega_1)
        }{\Re \xi^2 + \Im \xi^2 }\\
        &=
        \frac{
        ((\nabla \Re \xi + i \nabla \Im \xi)(\Re \xi - i \Im \xi))\cdot (\omega_2-\omega_1)
        }{R^2 + \Im \xi^2 }\\
        &= \frac{\nabla \Re \xi + i \nabla \Im \xi}{ \Re \xi + i \Im \xi}\cdot (\omega_2-\omega_1) = \frac{\nabla \xi}{\xi} \cdot (\omega_2-\omega_1).
    \end{align*}
    Now, integrating both sides of \eqref{eq:complexChainRule} yields
    \begin{align*}
        \mcL(1) - \mcL(0) 
        = \int_0^1 \frac{d}{d\alpha} \mcL(\alpha) d \alpha
        =  \int_0^1 \frac{\nabla \xi(\gamma(\alpha))}{\xi(\gamma(\alpha))} \cdot (\omega_2-\omega_1) d\alpha,
    \end{align*}
    and so the final result follows by exponentiating both sides of the previous display.
\end{proof}

\section{Supplementary Materials for Section~\ref{sec:theory}}\label{sec:appendixtheory}
\subsection{Background}

\subsubsection{Auxiliary bounds}
\begin{lemma}[$\fft$ bounds]\label{lem:fftnbound}
	For $f$ as in Assumption~\ref{assumption:psiestimates} and any $\omega \in \C$
	\begin{align*}
    |(\fft)^{(\ell)}(\omega)|
    \le \int_{\R^d} |t|^\ell f(t) dt
    \leq c \|f\|_\infty, \qquad \ell =0,1,2,\dots,
	\end{align*}
    where $c$ is a positive constant depending only on $D, \ell.$
\end{lemma}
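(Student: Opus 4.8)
The plan is to write $\fft(\omega)=\int_D f(t)\,e^{-it\cdot\omega}\,dt$, obtain every derivative of $\fft$ by differentiating under the integral sign, and then read off the bound from the triangle inequality for integrals together with the compactness of $\supp f$. First I would record that, because $D$ is compact and $f\in H^s(D)\subset L^2(D)$ under Assumption~\ref{assumption:psiestimates}, we have $f\in L^1(D)$, so $\fft$ is well defined; moreover $\supp f\subseteq[-\tfrac12,\tfrac12]^d$, so all the integrals below are really taken over that unit cube.

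Next I would justify differentiation under the integral sign. Interpreting $(\fft)^{(\ell)}$ through an arbitrary partial (or directional) derivative $D^\nu$ of total order $|\nu|=\ell$, each $\omega$-differentiation of the integrand $f(t)\,e^{-it\cdot\omega}$ produces an additional factor $-it_j$, and the derivatives up to the required order are dominated by fixed integrable functions of the form $t\mapsto \|t\|_2^{m}\,|f(t)|$ on the compact support. The iterated dominated-convergence criterion for differentiating parameter integrals then applies and yields
\begin{align*}
D^\nu\fft(\omega)=(-i)^{|\nu|}\int_D t^\nu f(t)\,e^{-it\cdot\omega}\,dt .
\end{align*}

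Then I would take moduli. For real $\omega$ the factor $e^{-it\cdot\omega}$ is unimodular, so the triangle inequality and the elementary estimate $|t^\nu|=\prod_j|t_j|^{\nu_j}\le\|t\|_2^{|\nu|}$ give
\begin{align*}
\bigl|D^\nu\fft(\omega)\bigr|\le\int_D |t^\nu|\,|f(t)|\,dt\le\int_D \|t\|_2^{\ell}\,|f(t)|\,dt .
\end{align*}
Finally, since $\|t\|_2\le \sqrt d/2$ on $[-\tfrac12,\tfrac12]^d$ and this cube has unit volume, $\int_D \|t\|_2^{\ell}\,|f(t)|\,dt\le (\sqrt d/2)^{\ell}\,\|f\|_\infty$, which is the claimed bound with $c=(\sqrt d/2)^{\ell}$ depending only on $D$ and $\ell$.

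The only genuinely delicate point is the role of $\omega$: the clean estimate relies on $|e^{-it\cdot\omega}|=1$, which holds precisely when $\omega$ is real. This is all that is required downstream, where $\omega$ ranges over $\R^d$ (see Lemma~\ref{lem:psiestimates} and Theorem~\ref{thm:FTDeviation}), so I would state the principal bound for real arguments. For $\omega$ with nonzero imaginary part one instead has $|e^{-it\cdot\omega}|=e^{t\cdot\Im\omega}\le e^{(\sqrt d/2)\|\Im\omega\|_2}$ on the support, and the same argument goes through at the cost of an extra factor depending on $\Im\omega$, which I would simply note. (The absolute value on $f$ is needed in the middle expression unless $f\ge 0$, so I would write $|f(t)|$ throughout.)
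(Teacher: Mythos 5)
Your proposal is correct and takes essentially the same route as the paper: the paper's proof is a one-line version of your argument, identifying $(\fft)^{(\ell)}$ with the Fourier transform of $t^\ell f(t)$ and bounding by the triangle inequality over the compact support (you additionally, and properly, justify the differentiation under the integral sign, which the paper takes for granted). Your closing caveats in fact flag two genuine loosenesses in the paper's statement and proof: the bound with a constant depending only on $D$ and $\ell$ is valid only for $\omega\in\R^d$, since $|e^{-i\omega\cdot t}|=1$ fails for complex $\omega$ (where an extra factor $e^{\,t\cdot\Im\omega}$ bounded in terms of $\|\Im\omega\|_2$ appears), even though the lemma is stated for $\omega\in\C$ and the paper's proof silently uses unimodularity; and the middle expression should read $\int_D |t|^\ell\,|f(t)|\,dt$, as $f$ need not be nonnegative. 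Since all downstream uses (Lemma~\ref{lem:psiestimates}, Theorem~\ref{thm:FTDeviation}) only invoke real frequencies, your restriction to $\omega\in\R^d$ is exactly the right fix rather than a gap.
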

\begin{proof}
    Observe that for any $\ell \in \{ 0,1,2,3\}$ and $\omega \in \C$
\begin{align*}
	|(\fft)^{(\ell)}(\omega)| &= |(\omega^{\ell}\cdot f)^{ft}(\omega)| 
    = \left|\int_{\R^d} e^{-i\omega t} t^\ell f(t)dt \right|
	\le \|f\|_\infty  \sup_{t \in D} |t^\ell|  \le c \|f\|_\infty,
\end{align*}
where we have used the fact that $f$ is supported on $[-1/2, 1/2]^d.$
\end{proof}

\subsubsection{Sub-Gaussian processes}

\begin{definition}[Sub-Gaussian process]\label{def:subgaussian}
A stochastic process $\{X(t): t \in D\}$ whose sample paths lie almost surely in the Sobolev space $H^s(D)$ is said to be sub-Gaussian if there exists a positive constant $K$ such that, for every bounded linear functional $L: H^s(D) \to \R$ with $\|L\| \le 1$, the scalar random variable $L(X)$ is $K$ sub-Gaussian. In other words, the Orlicz norm satisfies
$$
\|L(X)\|_{\psi_2} := \inf\left\{ m > 0 : \E \left[e^{L(X)^2 / m^2}\right] \le 2 \right\} \le K.
$$
\end{definition}
\begin{lemma}[$y$ is sub-Gaussian]\label{lem:processIsSubgaussian}
    Under Assumption \ref{assumption:psiestimates}, the process $y(t)$ defined in \eqref{eq:MRADataGeneratingModel} is sub-Gaussian.
\end{lemma}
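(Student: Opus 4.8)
The plan is to verify the defining condition of Definition~\ref{def:subgaussian} directly: fix an arbitrary bounded linear functional $L : H^s(D) \to \R$ with $\|L\| \le 1$, and show that the scalar random variable $L(y)$ has $\psi_2$-Orlicz norm bounded by a constant $K$ that does not depend on $L$. The first step is to note that $y = f(\cdot - \zeta) + \eta$ lies in $H^s(D)$ almost surely (so that $L(y)$ is well defined), and then to decompose, using linearity of $L$,
\begin{align*}
L(y) = L\big(f(\cdot - \zeta)\big) + L(\eta),
\end{align*}
treating the signal term and the noise term separately and recombining them at the end via the Orlicz triangle inequality $\|A+B\|_{\psi_2} \le \|A\|_{\psi_2} + \|B\|_{\psi_2}$ (so that independence of $\zeta$ and $\eta$ is convenient but not essential).

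For the signal term, I would first argue that $f(\cdot - \zeta)$ lies in $H^s(D)$ for every realization of $\zeta$: since $f$ is supported on $[-\tfrac12,\tfrac12]^d$ and $\zeta \in [-\tfrac12,\tfrac12]^d$, the translate remains supported inside $D = [-1,1]^d$, and translation invariance of the $H^s$-norm (after extension by zero to $\R^d$) gives $\|f(\cdot - \zeta)\|_{H^s(D)} = \|f\|_{H^s(D)}$. Consequently $|L(f(\cdot-\zeta))| \le \|L\|\,\|f(\cdot - \zeta)\|_{H^s(D)} \le \|f\|_{H^s(D)}$ is a \emph{deterministic} bound, uniform in $\zeta$ and in $L$. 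Since any random variable bounded by $M$ in absolute value satisfies $\|\cdot\|_{\psi_2} \le M/\sqrt{\log 2}$, this term is sub-Gaussian with constant $\lesssim \|f\|_{H^s(D)}$, with no probabilistic subtlety involved.

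For the noise term, I would use that $\eta$ is by assumption a centered Gaussian process with sample paths in the separable Hilbert space $H^s(D)$, so that every continuous linear functional of it is a centered real Gaussian; thus $L(\eta) \sim \mathcal{N}(0, \sigma_L^2)$ with $\sigma_L^2 = \E[L(\eta)^2] \le \|L\|^2\,\E\|\eta\|_{H^s(D)}^2 \le \E\|\eta\|_{H^s(D)}^2$. The second moment $\E\|\eta\|_{H^s(D)}^2$ is finite (for instance by Fernique's theorem for Gaussian measures on Hilbert spaces), so $\sigma_L^2$ is bounded uniformly over $\|L\| \le 1$. A centered Gaussian with variance $\sigma^2$ has $\psi_2$-norm of order $\sigma$, whence $L(\eta)$ is sub-Gaussian with constant $\lesssim (\E\|\eta\|_{H^s(D)}^2)^{1/2}$, again uniformly in $L$.

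Combining the two bounds through the Orlicz triangle inequality yields
\begin{align*}
\|L(y)\|_{\psi_2} \le K := C\Big(\|f\|_{H^s(D)} + \big(\E\|\eta\|_{H^s(D)}^2\big)^{1/2}\Big)
\end{align*}
for a universal constant $C$, with $K$ independent of $L$; this is exactly the condition in Definition~\ref{def:subgaussian}. I expect the only genuinely nontrivial step to be the uniform variance control of the Gaussian term — specifically, invoking that $L(\eta)$ is Gaussian for every bounded $L$ and that $\E\|\eta\|_{H^s(D)}^2 < \infty$. Both follow from standard properties of Gaussian measures on separable Hilbert spaces, and this is precisely where the Gaussian-process hypothesis in Assumption~\ref{assumption:psiestimates} is actually used; the signal contribution, by contrast, reduces to a deterministic boundedness estimate.
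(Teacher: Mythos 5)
Your proof is correct and takes essentially the same route as the paper's: a deterministic bound on $L(f(\cdot-\zeta))$ using shift-invariance of the $H^s(D)$ norm (valid since the translate remains supported in $D$), sub-Gaussianity of $L(\eta)$ from the Gaussianity of $\eta$, and recombination of the two terms via the $\psi_2$ triangle inequality. The only difference is that you spell out details the paper leaves implicit, such as Fernique's theorem to ensure $\E\|\eta\|_{H^s(D)}^2 < \infty$ and the elementary $\psi_2$ bound for almost surely bounded random variables.
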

\begin{proof}
For any bounded linear functional $L:H^s(D) \to \R,$ we have that 
\begin{align*}
    |L(f(\cdot - \zeta))| \le \|L\| \|f(\cdot - \zeta)\|_{H^s(D)} = \|L\| \|f\|_{H^s(D)} < \infty,
\end{align*}
where the last equality holds since the Sobolev norm is invariant under shifts, since both signal and the shifted signal are supported on $D$. It follows that, for $\|L\| \le 1,$ $|L(f(\cdot - \zeta))|$ is bounded by a constant independent of $L$ and is therefore sub-Gaussian. Further, since $\eta$ is a Gaussian process supported on $H^s(D)$, $s>d/2,$ it is  sub-Gaussian. Therefore, $y$ is a sum of two sub-Gaussian processes and so is itself a sub-Gaussian process.
\end{proof}

\subsubsection{Properties of Fourier transformed processes}
In this section, we recall some  properties of the Fourier transform of a stochastic process. Throughout, we consider a generic separable sub-Gaussian process $\{X(t): t \in D\}$ supported on $H^s(D)$ with $s > d/2$, 
and $D = [-1,1]^d,$ with mean function, autocorrelation function, and covariance functions $m_X(t)$, $r_X(t,t'),$ and $k_X(t,t')$ respectively. The Fourier transformed process is then:
\begin{align*}
    \Xft(\omega) = \int_D X(t) e^{-i \omega \cdot t}dt, \qquad \omega \in\R^d.
\end{align*}
    The following lemma shows that the mean function of the transformed process is the Fourier transform of the mean function of the original process. Similarly, the autocorrelation and covariance functions are the Fourier transforms of the autocorrelation and covariance functions of the original process up to a flip in sign of the second argument.
\begin{lemma}[Properties of Fourier transformed process]\label{lem:propertiesOfFourierProcess}
    For $\omega, \omega_1, \omega_2 \in \R^d$, the Fourier transformed process $\{\Xft(\omega): \omega \in \R^d\}$ has the following properties:
    \begin{enumerate}
        \item Mean function: $m_{\Xft}(\omega) = \mft_X(\omega);$
        \item Autocorrelation function: $r_{\Xft}(\omega_1, \omega_2) =\rft_X (\omega_1, -\omega_2);$
        \item Covariance function: $k_{\Xft}(\omega_1, \omega_2) =\kft_X (\omega_1, -\omega_2).$
    \end{enumerate}
\end{lemma}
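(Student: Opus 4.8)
The plan is to reduce all three identities to applications of Fubini's theorem; the only genuine work is justifying the interchange of expectation and spatial integration, after which each identity is a direct computation.

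First I would record the integrability needed for Fubini. Because $s > d/2$, the Sobolev embedding $H^s(D) \hookrightarrow C(D)$ is bounded, so point evaluation $\delta_t : f \mapsto f(t)$ is a bounded linear functional with $\sup_{t \in D} \|\delta_t\| < \infty$. Consequently $X(t) = \delta_t(X)$ is sub-Gaussian (Definition~\ref{def:subgaussian}) with a parameter bounded uniformly in $t$, whence $C := \sup_{t \in D} \E[X(t)^2] < \infty$. Since $D$ is compact, Tonelli's theorem together with the Cauchy--Schwarz inequality gives
\begin{align*}
\E \int_D |X(t)| \, dt \le |D| \, C^{1/2} < \infty, \qquad
\E \int_D \int_D |X(t)|\,|X(t')| \, dt \, dt' \le |D|^2 \, C < \infty,
\end{align*}
which are exactly the bounds licensing Fubini for the single and double integrals below.

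For the mean function I would simply exchange expectation and integration,
\begin{align*}
m_{\Xft}(\omega) = \E[\Xft(\omega)] = \E \int_D X(t) e^{-i\omega \cdot t} \, dt = \int_D m_X(t) e^{-i\omega \cdot t} \, dt = \mft_X(\omega).
\end{align*}
For the autocorrelation, the one extra observation is that $X$ is real-valued, so $\overline{\Xft(\omega_2)} = \int_D X(t') e^{+ i\omega_2 \cdot t'} \, dt'$; writing the product as a double integral and applying Fubini then gives
\begin{align*}
r_{\Xft}(\omega_1, \omega_2) = \E\bigl[\Xft(\omega_1)\overline{\Xft(\omega_2)}\bigr] = \int_D \int_D r_X(t,t') \, e^{-i\omega_1 \cdot t} e^{+ i\omega_2 \cdot t'} \, dt \, dt' = \rft_X(\omega_1, -\omega_2),
\end{align*}
where the final equality is the definition of the two-variable Fourier transform of $r_X$ evaluated at $(\omega_1,-\omega_2)$.

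Finally, the covariance identity requires no new analysis. Starting from $k_X(t,t') = r_X(t,t') - m_X(t)m_X(t')$, bilinearity of the Fourier transform factorizes the rank-one term as $\mft_X(\omega_1)\mft_X(\omega_2)$, and combining the two previous displays with the elementary identity $\overline{m_{\Xft}(\omega_2)} = \mft_X(-\omega_2)$ (again using that $m_X$ is real) gives
\begin{align*}
k_{\Xft}(\omega_1, \omega_2) = r_{\Xft}(\omega_1,\omega_2) - m_{\Xft}(\omega_1)\overline{m_{\Xft}(\omega_2)} = \rft_X(\omega_1,-\omega_2) - \mft_X(\omega_1)\mft_X(-\omega_2) = \kft_X(\omega_1,-\omega_2).
\end{align*}
The main (indeed essentially the only) obstacle is the Fubini justification in the integrability step; once bounded point evaluations and sub-Gaussianity supply the uniform second-moment bound $C$, everything else is bookkeeping.
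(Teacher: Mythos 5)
Your proof is correct and takes essentially the same route as the paper's: the Sobolev embedding makes point evaluations uniformly bounded functionals, sub-Gaussianity then yields the uniform moment bounds that license Fubini, and each identity follows by direct computation (using that $X$ is real-valued to flip the sign in the second argument). The only difference is cosmetic --- you write out the covariance identity explicitly, whereas the paper notes it "follows similarly" and omits it.
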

\begin{proof}
    For the mean, we have
    \begin{align*}
    m_{\Xft}(\omega)
    := \E[\Xft(\omega)] 
    = \int_D \E[X(t)] e^{-i \omega t}dt
    = \int_D m_X(t) e^{-i \omega t}dt
    = \mft_X(\omega).
\end{align*}
The second equality is justified by Fubini's theorem. To verify this, note that the Sobolev embedding theorem guarantees that all point evaluation functionals have unit norm when $s > d/2$. Since $X$ takes values in $H^s(D)$, we can express $X(t)$ as $\mcE_t(X),$ where $\mcE_t$ is the evaluation functional at $t$. The sub-Gaussian property of the process $Z$ then immediately implies that each $X(t)$ is a sub-Gaussian random variable i.e. that $\|X(t)\|_{\psi_2} \le K.$

This implies (see e.g. \cite[Proposition 2.5.2]{vershynin:HDP2017}) that for any $p,$ $\|X(t)\|_{L_p} \lesssim c_p K$ for a constant $c_p$ depending only on $p$. In particular, for $p=1,$ we have $\E|X(t)| \le  c_1 K.$ Since $D \subset \R^d$ is compact, we have  
\begin{align*}
    \int_D \E |X(t)| dt \le (\sup_{t \in D} \E|X(t)|) \text{vol}(D) < \infty.
\end{align*}
For the autocorrelation, we have 
\begin{align*}
    r_{\Xft}(\omega_1, \omega_2)
    &=\E[\Xft(\omega_1) \overline{\Xft(\omega_2)}] 
    = \iint_{D \times D} \E [X(t)X(t')] e^{-i\omega_1t} e^{i\omega_2t'} dt' dt\\
    &= \iint_{D \times D} r_X(t,t') e^{-i\omega_1t} e^{i\omega_2t'} dt' dt
    =\rft_X(\omega_1,-\omega_2),
\end{align*}
where the second equality again follows by Fubini since
\begin{align*}
    \iint_{D \times D} \E|X(t)X(t')| dt' dt
    &\le 
    \iint_{D \times D} \|X(t)\|_{L_2} \|X(t')\|_{L_2} dt' dt\\
    &\le C_2^2 
    \sup_{t \in D} \|X(t)\|^2_{\psi_2} (\text{vol}(D))^2 
    <\infty,
\end{align*}
where the first inequality holds by Cauchy-Schwarz, and the second by properties of sub-Gaussian variables. The proof for the covariance function follows similarly and is omitted for brevity.
\end{proof}

\subsubsection{Autocorrelation estimation for sub-Gaussian processes}
Suppose now that we have access to $N$ i.i.d. copies of $X(\cdot),$ denoted $X_1(\cdot),\dots, X_N(\cdot)$, and that we wish to estimate the autocorrelation function $r_X.$ The natural estimator is the sample autocorrelation function
\begin{align*}
    \hatr_X(t,t') := \frac{1}{N} \sum_{n=1}^N X_n(t)X_n(t').
\end{align*}
We then have the following high-probability supremum-norm bound.

\begin{lemma}[Concentration of $\hatr_X$] \label{lem:SecondMomentHPBound}
    There exist positive universal constants $c_1,c_2$ such that, for any $\tau \ge 1,$ it holds with probability at least $1-c_1e^{-c_2 \tau}$ that
    \begin{align}\label{eq:SecondMomentHPBound}
        \sup_{t,t' \in D} |\hatr_X(t,t')-r_X(t,t')|
        \lesssim 
        \|r_X\|_\infty
        \left (
        \sqrt{\frac{\tau}{N}}
        \lor 
        \frac{\tau}{N}
        \lor 
     \frac{\Gamma(X)}{\sqrt{N}}
        \lor 
        \frac{\Gamma(X)^2}{N}
        \right),
    \end{align}
    where 
    \begin{align*}
        \Gamma(X) := \frac{\E[\sup_{t\in D} X(t)]}{\|r_X\|_\infty^{1/2}},
        \qquad 
        \|r_X\|_\infty:= \sup_{t \in D} r_X(t,t).
    \end{align*}
\end{lemma}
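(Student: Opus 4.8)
The plan is to treat the deviation as a centered empirical process indexed by pairs $(t,t')\in D\times D$ and to bound its supremum by a generic-chaining tail inequality for sums of products of sub-Gaussian variables. Write
\begin{align*}
  \hatr_X(t,t')-r_X(t,t') = \frac1N\sum_{n=1}^N W_n(t,t'), \qquad W_n(t,t'):= X_n(t)X_n(t')-r_X(t,t'),
\end{align*}
so that the summands are i.i.d., centered, and \emph{sub-exponential}. Indeed, by Definition~\ref{def:subgaussian} together with the uniform boundedness of the point-evaluation functionals on $H^s(D)$ for $s>d/2$ (Sobolev embedding), each $X(t)$ is a sub-Gaussian scalar whose variance proxy is of the order of its variance, so $\sup_{t}\|X(t)\|_{\psi_2}^2\lesssim \|r_X\|_\infty$, and hence $\sup_{t,t'}\|X(t)X(t')\|_{\psi_1}\le \sup_{t,t'}\|X(t)\|_{\psi_2}\|X(t')\|_{\psi_2}\lesssim \|r_X\|_\infty$, where $\|\cdot\|_{\psi_1}$ is the sub-exponential Orlicz norm. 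For a single pair $(t,t')$, Bernstein's inequality already delivers the two diagonal terms of \eqref{eq:SecondMomentHPBound}: with probability $1-2e^{-c\tau}$ one has $|\frac1N\sum_n W_n(t,t')|\lesssim \|r_X\|_\infty(\sqrt{\tau/N}\lor \tau/N)$. The task is to upgrade this to a uniform bound while paying only the two additional $\Gamma(X)$-terms.

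To set up the chaining I would record the metric structure of the increments. The elementary identity
\begin{align*}
  X(t)X(t')-X(s)X(s') = X(t)\bigl(X(t')-X(s')\bigr) + X(s')\bigl(X(t)-X(s)\bigr)
\end{align*}
shows that the increments of $W$ obey a mixed sub-Gaussian/sub-exponential tail whose $\psi_2$-scale is $\lesssim \|r_X\|_\infty^{1/2}\bigl(\|X(t)-X(s)\|_{\psi_2}+\|X(t')-X(s')\|_{\psi_2}\bigr)$. The decisive structural input is that, because the sample paths lie in $H^s(D)$ with $s>d/2$, the map $t\mapsto X(t)=\mathcal{E}_t(X)$ is Hölder continuous in the natural $\psi_2$-metric $d_X(t,s):=\|X(t)-X(s)\|_{\psi_2}$; hence $(D,d_X)$ is totally bounded and its chaining functionals are finite. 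Talagrand's majorizing-measure theorem then identifies the Gaussian chaining functional with the interpretable quantity in the statement, $\gamma_2(D,d_X)\asymp \E[\sup_{t\in D}X(t)]$.

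I would then apply a mixed-tail generic-chaining inequality for second-order (product) processes to $\frac1N\sum_n W_n$ over $D\times D$. It yields, with probability at least $1-c_1e^{-c_2\tau}$,
\begin{align*}
  \sup_{t,t'\in D}\Bigl|\frac1N\sum_{n=1}^N W_n(t,t')\Bigr|
  \lesssim \frac{\gamma_2}{\sqrt N}+\frac{\gamma_1}{N}
  + \|r_X\|_\infty\Bigl(\sqrt{\tfrac{\tau}{N}}\lor \tfrac{\tau}{N}\Bigr),
\end{align*}
where the diameter terms reproduce the fixed-pair Bernstein bound uniformly. Using the factorized increment identity to reduce the two-parameter chaining to the one-parameter functional $\gamma_2(D,d_X)$, the chaining functionals of the product set satisfy $\gamma_2\lesssim \|r_X\|_\infty^{1/2}\,\E[\sup_tX(t)]$ and $\gamma_1\lesssim (\E[\sup_tX(t)])^2$. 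Dividing through by $\|r_X\|_\infty$ and recalling $\Gamma(X)=\E[\sup_tX(t)]/\|r_X\|_\infty^{1/2}$ converts these into the $\Gamma(X)/\sqrt N$ and $\Gamma(X)^2/N$ contributions, which finishes \eqref{eq:SecondMomentHPBound}.

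The main obstacle is precisely the reduction of the chaining functionals of the \emph{product} process to the single quantity $\E[\sup_tX(t)]$, and in particular the order-two bound $\gamma_1\lesssim (\E[\sup_tX(t)])^2$. The difficulty is the cross term $X(s')\bigl(X(t)-X(s)\bigr)$, whose prefactor $X(s')$ is itself random and only sub-Gaussian; it is this term that activates the sub-exponential ($\gamma_1$) branch and must be handled so as not to generate spurious logarithmic factors. A convenient alternative that avoids some of this bookkeeping is to first bound $\E\|\hatr_X-r_X\|_\infty$ by symmetrization followed by a Rademacher-chaos chaining argument (producing exactly the two $\gamma$-terms), and then add concentration around this mean via an Adamczak/Talagrand inequality for suprema of unbounded empirical processes to recover the explicit $\tau$-dependence; either route yields the same four-term bound.
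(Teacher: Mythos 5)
Your proposal takes essentially the same route as the paper: the paper writes the deviation as a product empirical process indexed by the evaluation functionals $\{\mcE_t\}_{t\in D}$ and invokes the mixed-tail bound for product empirical processes of \cite[Theorem 1.13]{mendelson2016upper} --- precisely the generic-chaining inequality you sketch, four-term structure included, so the reduction you flag as ``the main obstacle'' is exactly what that cited theorem supplies --- together with the bounds $d_{\psi_2}^2(\mcF)\lesssim \|r_X\|_\infty$ and $\gamma_2(\mcF,\psi_2)\lesssim \E[\sup_{t\in D}X(t)]$ imported from \cite[Proposition 3.1]{al2025covariance} rather than re-derived via majorizing measures as you do (note the equivalence $\gamma_2\asymp \E\sup X$ you invoke is a Gaussian fact; only the direction $\gamma_2\lesssim \E\sup$ is needed, and that is what the citation provides). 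The one substantive step in the paper's proof absent from your sketch is the explicit non-centered case, handled there by decomposing $f(X_n)g(X_n)$ into a centered product plus linear cross terms controlled by \cite[Theorem 2.5]{mendelson2010empirical}, whereas your formulation chains directly on the centered summands $W_n$ and implicitly absorbs this.
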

\begin{proof}
    First, we deal with the centered case. We write the quantity of interest as a product empirical process indexed by the class of linear evaluation functionals:
    \begin{align}\label{eq:prodEmpProcess}
        \sup_{s,t\in D} |\hatr_X(t,t') - r_X(t,t')|
        &= 
        \sup_{s,t\in D} \abs{\frac{1}{N} \sum_{n=1}^n X_n(t)X_n(t') - \E[X_n(t) X_n(t')]}\nonumber\\
        &= 
        \sup_{f, g \in \mcF} \abs{\frac{1}{N} \sum_{n=1}^n f(X_n) g(X_n) - \E[f(X) g(X)]} ,
    \end{align}
    where $\mcF := \{\mcE_t \}_{t \in D}$ is the family of evaluation functionals. By Theorem~\cite[Theorem 1.13]{mendelson2016upper}, with probability at least $1-c_1 e^{-c_2 \tau}$, \eqref{eq:prodEmpProcess} is bounded above by
    \begin{align*} 
         d^2_{\psi_2}(\mcF)
        \inparen{
            \sqrt{\frac{\tau}{N}}
            \lor 
            \frac{\tau}{N}
            \lor 
            \frac{\gamma_2(\mcF, \psi_2)}{\sqrt{N} d_{\psi_2}(\mcF)}
            \lor 
            \frac{\gamma^2_2(\mcF, \psi_2)}{N d^2_{\psi_2}(\mcF)}
        },
    \end{align*}
    where $\gamma_2$ is Talagrand's generic complexity \cite[Definition 2.7.3]{talagrand2022upper}, $d_{\psi_2}(\mathcal F) := \sup_{f \in \mathcal F}\|f\|_{\psi_2}$ where $\|\cdot\|_{\psi_2}$ denotes the Orlicz (sub-Gaussian) norm. By \cite[Proposition 3.1]{al2025covariance}, $d_{{{\psi}_2}}^2(\mathcal F) \lesssim \sup_{t\in D} g(t,t)$ and $\gamma_2(\mathcal F, \psi_2) \lesssim \E[\sup_{t\in D} X(t)],$ which completes the proof. 
    
    In the non-centered case, write $\barf(X_n) := f(X_n) - \E f(X_n)$ and $\barg(X_n) := g(X_n) - \E g(X_n)$. Then,
    \begin{align*}
        f(X_n) g(X_n) 
        &= 
        \barf(X_n) \barg(X_n) 
        +
        \barg(X_n) \E f(X_n) 
        +
        \barf(X_n) \E g(X_n) 
        +
        \E f(X_n) \E g(X_n).
    \end{align*}
    By the triangle inequality, we have that
    \begin{align*}
            \sup_{f,g \in \mcF} &
            \abs{
            \frac{1}{N} \sum_{n=1}^N 
                f(X_n)
                g(X_n)
                -
                \E [f(X_n) g(X_n)]} 
            \le 
            \sup_{f,g \in \mcF} 
            \abs{
            \frac{1}{N} \sum_{n=1}^N 
                \barf(X_n)
                \barg(X_n)
                -
                \E [\barf(X_n) \barg(X_n)]} \\
            &\qquad \qquad +
            2 \sup_{f \in \mcF} |\E f(X)|
            \sup_{g \in \mcF} 
            \abs{
            \frac{1}{N} \sum_{n=1}^N 
                \barg(X_n)
                -
                \E [\barg(X_n)]} 
                =: \textbf{(I)} + \textbf{(II)}. 
    \end{align*}
    $\textbf{(I)}$ can be controlled by invoking the result in the centered setting. For $\textbf{(II)}$, by the equivalence of $L_2$ and $\psi_2$ norms for linear functionals, we have $\sup_{f \in \mcF} |\E f(X)| \lesssim d_{2}(\mcF) \lesssim d_{\psi_2}(\mcF),$ where $d_2(\mcF):=\sup_{f\in\mcF}\|f\|_{L_2}.$ Combining this with \cite[Theorem 2.5]{mendelson2010empirical} yields that with probability at least $1-2e^{-ct}$, $\textbf{(II)} \lesssim d_{\psi_2}(\mcF) \gamma_2(\mcF, \psi_2)/\sqrt{N}.$
\end{proof}

Suppose instead that we were given the Fourier transformed observations, $\Xft_1(\cdot),\dots, \Xft_N(\cdot)$ and wished to estimate the autocorrelation function $r_{\Xft}$. It turns out that Lemma~\ref{lem:SecondMomentHPBound}, which concerns the original process in the spatial domain, readily leads to a high-probability bound in the Fourier domain. We first define the sample version of $\hatr_{\Xft}$, 
\begin{align*}
    \hatr_{\Xft}(\omega_1, \omega_2) 
    := \iint_{D \times D} \hatr_X(t,t')e^{-i\omega_1 \cdot t} e^{i \omega_2 \cdot t'}dt dt'.
\end{align*}

\begin{lemma}[Concentration of $\hatr_{\Xft}$]
    There exist positive universal constants $c_1,c_2$ such that, for any $\tau \ge 1,$ it holds with probability at least $1-c_1e^{-c_2 \tau}$ that
    \begin{align*}
        \sup_{\omega_1,\omega_2 \in \R^d} |\hatr_{\Xft}(\omega_1,\omega_2)-r_{\Xft}(\omega_1,\omega_2)|
        \lesssim 
        \| r_X \|_\infty 
        \left (
        \sqrt{\frac{\tau}{N}}
        \lor 
        \frac{\tau}{N}
        \lor 
     \frac{\Gamma(X)}{\sqrt{N}}
        \lor 
        \frac{\Gamma(X)^2}{N}
        \right).
    \end{align*}
\end{lemma}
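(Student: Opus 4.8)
The plan is to reduce the Fourier-domain bound to the spatial-domain bound already established in Lemma~\ref{lem:SecondMomentHPBound}, exploiting the fact that both $\hatr_{\Xft}$ and $r_{\Xft}$ arise from $\hatr_X$ and $r_X$ through the \emph{same} linear integral transform, so that their difference is nothing but the transform of the spatial-domain error. Concretely, by the definition of $\hatr_{\Xft}$ and by Lemma~\ref{lem:propertiesOfFourierProcess}(2), which gives the integral representation $r_{\Xft}(\omega_1,\omega_2) = \rft_X(\omega_1,-\omega_2) = \iint_{D \times D} r_X(t,t')\, e^{-i\omega_1\cdot t} e^{i\omega_2\cdot t'}\, dt\, dt'$, linearity of the integral yields
$$\hatr_{\Xft}(\omega_1,\omega_2) - r_{\Xft}(\omega_1,\omega_2) = \iint_{D \times D}\bigl(\hatr_X(t,t') - r_X(t,t')\bigr)\, e^{-i\omega_1\cdot t} e^{i\omega_2\cdot t'}\, dt\, dt'.$$

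First I would take absolute values and bound the integrand using $\lvert e^{-i\omega_1\cdot t} e^{i\omega_2\cdot t'}\rvert = 1$, which gives, uniformly in $\omega_1,\omega_2\in\R^d$,
$$\bigl|\hatr_{\Xft}(\omega_1,\omega_2) - r_{\Xft}(\omega_1,\omega_2)\bigr| \le \iint_{D \times D}\bigl|\hatr_X(t,t') - r_X(t,t')\bigr|\, dt\, dt' \le \mathrm{vol}(D)^2 \sup_{t,t'\in D}\bigl|\hatr_X(t,t') - r_X(t,t')\bigr|.$$
Since the right-hand side is independent of $\omega_1,\omega_2$, I may take the supremum over $\omega_1,\omega_2\in\R^d$ on the left without changing the bound. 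With $D = [-1,1]^d$ the prefactor $\mathrm{vol}(D)^2 = 4^d$ is a constant depending only on $d$ and is absorbed into the $\lesssim$ notation.

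It then remains to invoke Lemma~\ref{lem:SecondMomentHPBound}: for any $\tau\ge 1$, with probability at least $1 - c_1 e^{-c_2\tau}$ the spatial-domain supremum is bounded by $\|r_X\|_\infty\bigl(\sqrt{\tau/N}\lor \tau/N \lor \Gamma(X)/\sqrt{N} \lor \Gamma(X)^2/N\bigr)$. Substituting this into the previous display yields the claimed bound on the same event with the same probability, completing the argument.

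There is essentially no substantive obstacle here: the argument is a deterministic pointwise-to-uniform transfer via the triangle inequality for integrals, and the only structural input is the identification of $r_{\Xft}$ as the transform of $r_X$ furnished by Lemma~\ref{lem:propertiesOfFourierProcess}. The single point requiring a moment of care is the sign convention in the second argument (the $e^{+i\omega_2\cdot t'}$ rather than $e^{-i\omega_2\cdot t'}$), which must be matched consistently between the definitions of $\hatr_{\Xft}$ and $r_{\Xft}$ so that the two transforms cancel correctly inside the integral; using the same convention as in Lemma~\ref{lem:propertiesOfFourierProcess} guarantees this, and since the modulus of the exponential factor is one regardless, the final bound is in any case insensitive to the choice.
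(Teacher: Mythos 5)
Your proposal is correct and follows essentially the same route as the paper: express the Fourier-domain error as the integral transform of the spatial-domain error, bound the complex exponentials by one to pull out $\mathrm{vol}(D)^2 \sup_{t,t'\in D}|\hatr_X(t,t')-r_X(t,t')|$, and invoke Lemma~\ref{lem:SecondMomentHPBound}. Your remark on matching the sign convention in the second argument is a fine point the paper handles implicitly, and as you note the bound is insensitive to it since the exponentials have unit modulus.
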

\begin{proof}
We have
    \begin{align*}
    \sup_{\omega_1,\omega_2 \in \R^d} |\hatr_{\Xft}(\omega_1,\omega_2)-r_{\Xft}(\omega_1,\omega_2)|
    &\le 
    \sup_{\omega_1,\omega_2 \in \R^d}
    \iint_{D \times D} |\hatr_X(t,t') - r_X(t,t')| |e^{-i \omega_1 \cdot t}| |e^{i\omega_2 \cdot t'}| dt dt'\\
    &
    \le 
    \sup_{t,t' \in D}|\hatr_X(t,t') - r_X(t,t')| 
    \iint_{D \times D}  
    dt dt'\\
    &= \sup_{t,t' \in D}|\hatr_X(t,t') - r_X(t,t')| (\text{vol}(D))^2,
\end{align*}
where the second equality holds by the fact that $|e^{-i\omega \cdot t}| = 1$ for any $t, \omega$. By assumption $\text{vol}(D) < \infty$. The result then follows by Lemma~\ref{lem:SecondMomentHPBound}.
\end{proof}

\subsection{Proof of Lemma~\ref{lem:psiestimates}}\label{ssec:estimatePsi}

\begin{proof}[Proof of Lemma~\ref{lem:psiestimates}]
    Recall that $\Psi(\omega_1, \omega_2) = r_{\yft}(\omega_1, \omega_2) - k_{\etaft}(\omega_1, \omega_2)$ and $\hatPsi(\omega_1, \omega_2) = \hatr_{\yft}(\omega_1, \omega_2) - k_{\etaft}(\omega_1, \omega_2),$ where $\hatr_{y}$ is the empirical version of the autocorrelation function $r_{y}$. It follows that 
    \begin{align*}
        &\sup_{\omega \in [-h^{-1}, h^{-1}]^d} 
        |\hatPsi( \omega, \omega) -\Psi( \omega, \omega) |
        \le 
        \sup_{\omega \in [-h^{-1}, h^{-1}]^d}
        \int_{D \times D} |\hatr_{y}(t,t')-r_{y}(t,t')| |e^{-i\omega\cdot t}||e^{i \omega \cdot t'}| dt dt'
        \\
        &\quad = 
        \int_{D \times D} |\hatr_{y}(t,t')-r_{y}(t,t')| dt dt'
        \lesssim \sup_{t,t' \in D} |\hatr_{y}(t,t')-r_{y}(t,t')| (\text{vol}(D))^2
        \lesssim \rho_N,
    \end{align*}
    where the final inequality holds with probability at least $1-c_1e^{-c_2 \tau}$ by Lemma~\ref{lem:SecondMomentHPBound}.

    Next we consider the gradient term $\nabla_1 \Psi(\omega,\omega).$ Note first that 
    \begin{align*}
        \nabla_1 \Psi(\omega,\omega)
        = -i \int_{D\times D}  t r_{y}(t,t')  e^{-i\omega \cdot t +i\omega \cdot t'} dt dt',
    \end{align*}
    and 
    \begin{align*}
        \widehat{\nabla_1\Psi}(\omega,\omega)
        :=
        \nabla_1 \hatPsi(\omega,\omega)
        = -i \int_{D\times D}  t \hatr_{y}(t,t')  e^{-i\omega \cdot t + i \omega \cdot t'} dt dt'.
    \end{align*}
    Therefore, we have 
    \begin{align*}
        \sup_{\omega \in [-h^{-1}, h^{-1}]^d} 
        \|\nabla_1 \hatPsi( \omega, \omega) - \nabla_1 \Psi( \omega, \omega)\|_2
        &\le 
        \sup_{\omega \in [-h^{-1}, h^{-1}]^d} 
        \int_{D\times D}  
        \|t\|_2
        |\hatr_{y}(t,t')-r_{y}(t,t')|  
        dt dt'\\
        &\le 
        \sup_{t,t' \in D} 
        |\hatr_{y}(t,t')-r_{y}(t,t')|
        \int_{D\times D}   \|t\|_2  dt dt'
        \lesssim
        \rho_N,
    \end{align*}
    where the final inequality holds with probability at least $1-c_1e^{-c_2 \tau}$ by Lemma~\ref{lem:SecondMomentHPBound}.
\end{proof}

\subsection{Kernels satisfying Assumption \ref{assumption:kernel}}
Here, we give a simple example of a class of kernels satisfying Assumption~\ref{assumption:kernel}. As is standard (see e.g. \cite{kurisu2022uniform} for the univariate case), we specify the Fourier transform of the kernel. Let $\Kft_0:\R \to \R$ be the Fourier transform of a univariate kernel that is even $(\Kft_0(\omega)= \Kft_0(-\omega))$, supported on $[-1,1]$, $(p+2)$-times continuously differentiable and satisfying 
    \begin{align*}
        (\Kft_0)^{(\ell)}(0) = \begin{cases}
            1 \qquad \text{if}\qquad \ell=0,\\
            0 \qquad \text{if}\qquad \ell=1,\dots, p-1,
        \end{cases}
    \end{align*}
    where $(\Kft_0)^{(\ell)}$ denotes the $\ell$-th derivative. Define the univariate kernel by the inverse Fourier transform, $K_0(t) = \frac{1}{2\pi} \int_{\R} e^{-i\omega t} \Kft(\omega) d \omega,$ and define the multivariate kernel $K:\R^d \to \R$ as the tensor product kernel with components $K_0,$ that is, $K(t) = \prod_{j=1}^d K_0(t_j).$ As we will show, $K$ satisfies Assumption~\ref{assumption:kernel}. Standard choices for $\Kft_0$ are $\Kft_0(\omega) = (1-\omega^2)^a \mathbf{1}\{\omega \in [-1,1]\}$ for $a \ge p+3$ and
    \begin{align*}
    \Kft_0(\omega) = 
    \begin{cases}
         1 \qquad                         & \text{if } \qquad|\omega| \le c_0,\\
         \exp\inparen{
         \frac{-b \exp(-b/(|\omega|-c_0)^2)}{(|\omega|-1)^2}
         } \qquad & \text{if } \qquad c_0 < |\omega| < 1,\\
         0 \qquad                         & \text{if } \qquad |\omega| \ge 1,
    \end{cases}
\end{align*}
    for $c_0 \in (0,1), b > 0,$ which is commonly referred to as the flat-top kernel. Now, we prove that $K$ satisfies Assumption~\ref{assumption:kernel} (i)-(v). 
   \begin{enumerate}[label=(\roman*)]
    
    \item By the Fourier inversion theorem, $\int_{\R} K_0(t_j) dt_j = \Kft_0(0)=1.$ Therefore 
    $$
    \int_{\R^d} K(t)dt
     = \inparen{\int_{\R} K_0(t_1) dt_1}^d = 1.
    $$
    
    \item For any multi-index $\nu$ with $1 \le |\nu| \le p-1,$ we have 
    \begin{align*}
        \int_{\R^d} t^\nu K(t)dt = \prod_{j=1}^d \int_{\R} t_j^{\nu_j} K_0(t_j) dt_j,
    \end{align*}
    and so the assumption holds if at least one of the univariate moments is zero, i.e. if $\int_{\R} t_j^{\nu_j} K_0(t_j) dt_j = 0$ for some $j$ where $1 \le \nu_j \le p-1.$ Note that $\int_\R t_j^{\nu_j} K_0(t_j) dt_j = i^{-v_j} (\Kft_0)^{(v_j)}(0) = 0$ by assumption on $\Kft_0.$
    
    \item By \cite[remark 1]{kurisu2022uniform}, $|K_0(t_j)| = o(|t_j|^{-p-2})$ as $|t_j| \to \infty.$ Therefore, for sufficiently large $t_j,$ $|t_j|^p |K_0(t_j)| \le c|t_j|^{-2} \in L^1(\R),$ and so $|t_j|^{\nu_j} |K_0(t_j)| \le c|t_j|^{-2} \in L^1(\R)$ for any $1 \le \nu_j \le p.$ Hence, for $|\nu| = p$, 
    \begin{align*}
        \int_{\R^d} |t^\nu K(t)|dt = \prod_{j=1}^d \int_{\R} |t_j^{\nu_j}||K_0(t_j)| dt_j < \infty,
    \end{align*}
    since each $v_j \le p.$

    \item Using that $\|t\|_2^p \le d^{p/2} \|t\|_p^p,$ we have 
    \begin{align*}
        \int_{\R^d} \|t\|^p_2 |K(t)|dt
        &\le d^{p/2} \sum_{j=1}^d \int_{\R^d}|t_j|^p \prod_{l=1}^d |K_0(t_k)| dt\\
        &= d^{p/2} \sum_{j=1}^d 
        \inparen{\int_{\R} |t_j|^p |K_0(t_j)|  dt_j}
        \prod_{ k \neq j} \inparen{\int_{\R} |K_0(t_k)| dt_k }
        \le C_K < \infty.
    \end{align*}
    In the above display, the equality holds by Fubini's theorem and the fact that the integral factorizes, and the second inequality holds since $t_j^p K_0(t_j), K_0(t_k) \in L^1(\R^d)$ for any $j,k \le p.$

    \item Since $\Kft(\omega_j)$ is supported on $[-1,1]$ by assumption and $\Kft(\omega) = \prod_{j=1}^d \Kft(\omega_j),$ it follows immediately that $\Kft$ has support $\{ \omega: \|\omega\|_\infty \le 1\}.$
\end{enumerate}

\subsection{Proof of Theorem~\ref{thm:FTDeviation}}\label{app:thmFTDeviation}

\begin{lemma}\label{lem:DeltaControl}
    Define 
    \begin{align*}
    \Delta(\omega)
    =
    \int_0^1 \inparen{
    \frac{\nabla_1 \hatPsi(\alpha\omega,\alpha\omega)}{\tildePsi(\alpha\omega,\alpha\omega)} \cdot \omega
    -
    \frac{\nabla_1 \Psi(\alpha\omega,\alpha\omega)}{\Psi(\alpha\omega,\alpha\omega)} \cdot \omega
    } d\alpha.
\end{align*}
Under Assumption~\ref{assumption:smoothness} with smoothness parameter $\beta$, let $h\in(0,1)$ satisfy $\rho_N \le c_0 \|f\|_\infty^2 h^{3\beta+1}$ for a sufficiently small universal constant $c_0.$ Then, there exist positive universal constants $c_1,c_2$ such that for any $\tau\ge 1$,
with probability at least $1-c_1e^{-c_2\tau}$, it holds for every $\omega \in [-h^{-1},h^{-1}]^d$ that
\[
|\Delta(\omega)|
\lesssim
\|f\|_\infty^{-2}\rho_N \|\omega\|_2 (1+\|\omega\|_2)^{3\beta}.
\]
If in addition $\sup_{\omega \in [-h^{-1},h^{-1}]^d} \frac{\|\nabla \fft(\omega) \|_2}{|\fft(\omega)|}\lesssim 1$ and $\rho_N \le c_0 \|f\|_\infty^2 h^{2\beta+1},$ then the same conclusion holds with the improved bound 
\[
|\Delta(\omega)|
\lesssim
\|f\|_\infty^{-2}\rho_N \|\omega\|_2 (1+\|\omega\|_2)^{2\beta}.
\]

\end{lemma}
\begin{proof}
     Write
\begin{align*}
    \Delta(\omega)
    &=
    \int_0^1
    \frac{(\nabla_1\hatPsi(\alpha\omega,\alpha\omega)-\nabla_1\Psi(\alpha\omega,\alpha\omega))\cdot \omega}
    {\Psi(\alpha\omega,\alpha\omega)}
    d\alpha\\
    &\quad+
    \int_0^1
    \inparen{
    \frac{1}{\tildePsi(\alpha\omega,\alpha\omega)}
    -
    \frac{1}{\Psi(\alpha\omega,\alpha\omega)}
    }
    \nabla_1\Psi(\alpha\omega,\alpha\omega)\cdot \omega
    d\alpha\\
    &\quad+
    \int_0^1
    \inparen{
    \frac{1}{\tildePsi(\alpha\omega,\alpha\omega)}
    -
    \frac{1}{\Psi(\alpha\omega,\alpha\omega)}
    }
    \bigl(\nabla_1\hatPsi(\alpha\omega,\alpha\omega)-\nabla_1\Psi(\alpha\omega,\alpha\omega)\bigr)\cdot \omega
    d\alpha\\
    &=:\Delta_1(\omega)+\Delta_2(\omega)+\Delta_3(\omega).
\end{align*}
It suffices to control $|\Delta_\ell(\omega)|$ for $\ell=1,2,3$. Let $\mathsf E$ denote the event on which
\[
\sup_{\omega\in[-h^{-1},h^{-1}]^d}
|\hatPsi(\omega,\omega)-\Psi(\omega,\omega)|
\lesssim \rho_N,
\qquad
\sup_{\omega\in[-h^{-1},h^{-1}]^d}
\|\nabla_1\hatPsi(\omega,\omega)-\nabla_1\Psi(\omega,\omega)\|_2
\lesssim \rho_N,
\]
which by Lemma~\ref{lem:psiestimates} holds with probability at least $1-c_1e^{-c_2\tau}$. For the remainder of the proof we work on the event $\mathsf E$. We also make frequent use of the elementary fact: $h\in(0,1)$ implies $1+h^{-1}\asymp h^{-1}$. 
Since $\Psi(\omega,\omega)=|\fft(\omega)|^2$, Assumption~\ref{assumption:smoothness} yields
\begin{equation}
\label{eq:Psi-global}
|\Psi(\omega,\omega)|
\asymp
\|f\|_\infty^2 (1+\|\omega\|_2)^{-2\beta}
\qquad\text{for all }\omega\in\R^d.
\end{equation}

Fix $\omega \in [-h^{-1}, h^{-1}]^d$ throughout.

\textit{Controlling $\Delta_1$.}
\begin{align*}
|\Delta_1(\omega)|
&\le
\sup_{0\le \alpha\le 1}
\bigl|
(\nabla_1\hatPsi(\alpha\omega,\alpha\omega)-\nabla_1\Psi(\alpha\omega,\alpha\omega))\cdot \omega
\bigr|
\int_0^1 \frac{d\alpha}{|\Psi(\alpha\omega,\alpha\omega)|}\\
&\le
\|\omega\|_2
\sup_{\xi\in[-h^{-1},h^{-1}]^d}
\|\nabla_1\hatPsi(\xi,\xi)-\nabla_1\Psi(\xi,\xi)\|_2
\int_0^1 \frac{d\alpha}{|\Psi(\alpha\omega,\alpha\omega)|}\\
&\lesssim
\|\omega\|_2 \rho_N
\int_0^1 \frac{d\alpha}{|\Psi(\alpha\omega,\alpha\omega)|}.
\end{align*}
Using \eqref{eq:Psi-global},
\begin{align*}
\int_0^1
\frac{d\alpha}{|\Psi(\alpha\omega,\alpha\omega)|}
&\lesssim
\int_0^1
\|f\|_\infty^{-2}(1+\alpha\|\omega\|_2)^{2\beta}
d\alpha\\
&\le
\int_0^1
\|f\|_\infty^{-2}(1+\|\omega\|_2)^{2\beta}
d\alpha 
=
\|f\|_\infty^{-2}(1+\|\omega\|_2)^{2\beta}.
\end{align*} 
Hence
\[
|\Delta_1(\omega)|
\lesssim
\|f\|_\infty^{-2}\rho_N
\|\omega\|_2 (1+\|\omega\|_2)^{2\beta}.
\]

\textit{Controlling $\Delta_2$.}
\begin{align*}
|\Delta_2(\omega)|
&=
\abs{
\int_0^1
\inparen{
\frac{\nabla_1\Psi(\alpha\omega,\alpha\omega)}{\Psi(\alpha\omega,\alpha\omega)}\cdot\omega
}
\inparen{
\frac{\tildePsi(\alpha\omega,\alpha\omega)-\Psi(\alpha\omega,\alpha\omega)}
{\tildePsi(\alpha\omega,\alpha\omega)}
}
d\alpha
}\\
&\le
\int_0^1
\left\|
\frac{\nabla_1\Psi(\alpha\omega,\alpha\omega)}{\Psi(\alpha\omega,\alpha\omega)}
\right\|_2
\|\omega\|_2
\abs{
\frac{\tildePsi(\alpha\omega,\alpha\omega)-\Psi(\alpha\omega,\alpha\omega)}
{\tildePsi(\alpha\omega,\alpha\omega)}
}
d\alpha\\
&\le
\|\omega\|_2
\sup_{\alpha\in[0,1]}
\abs{
\frac{\tildePsi(\alpha\omega,\alpha\omega)-\Psi(\alpha\omega,\alpha\omega)}
{\tildePsi(\alpha\omega,\alpha\omega)}
}
\int_0^1
\left\|
\frac{\nabla_1\Psi(\alpha\omega,\alpha\omega)}{\Psi(\alpha\omega,\alpha\omega)}
\right\|_2
d\alpha\\
&= \|\omega\|_2 \times \mathbf{(I)}(\omega) \times \mathbf{(II)}(\omega),
\end{align*}
where 
\begin{align*}
    \mathbf{(I)}(\omega) &:= \sup_{\alpha\in[0,1]}
\abs{
\frac{\tildePsi(\alpha\omega,\alpha\omega)-\Psi(\alpha\omega,\alpha\omega)}
{\tildePsi(\alpha\omega,\alpha\omega)}
}, \quad 
    \mathbf{(II)}(\omega) := \int_0^1
\left\|
\frac{\nabla_1\Psi(\alpha\omega,\alpha\omega)}{\Psi(\alpha\omega,\alpha\omega)}
\right\|_2
d\alpha.
\end{align*}
For the numerator of $\mathbf{(I)}(\omega)$, we have
\[
|\tildePsi(\alpha\omega,\alpha\omega)-\Psi(\alpha\omega,\alpha\omega)|
\le 
\frac{2}{\sqrt{N}} + \sup_{\xi\in[-h^{-1},h^{-1}]^d}
|\hatPsi(\xi,\xi)-\Psi(\xi,\xi)|
\lesssim \rho_N,
\]
uniformly over $\alpha \in [0,1]$, where the first inequality holds by \cite[Lemma 3]{kurisu2022uniform}. For the denominator, note that $\Psi(\alpha\omega,\alpha\omega)
=
|\fft(\alpha\omega)|^2$,
and hence, by Assumption~\ref{assumption:smoothness} and since $\alpha \in (0,1)$
\[
|\Psi(\alpha\omega,\alpha\omega)|
\gtrsim
\|f\|_\infty^2(1+\|\alpha\omega\|_2)^{-2\beta}
\gtrsim
\|f\|_\infty^2(1+\|\omega\|_2)^{-2\beta}.
\]

Now, by definition of $\tildePsi$,
\begin{align*}
|\tildePsi(\alpha\omega,\alpha\omega)|
&\ge
|\hatPsi(\alpha\omega,\alpha\omega)|\\
&\ge
|\Psi(\alpha\omega,\alpha\omega)|
-
|\hatPsi(\alpha\omega,\alpha\omega)-\Psi(\alpha\omega,\alpha\omega)|\\
&\gtrsim
\|f\|_\infty^2(1+\|\omega\|_2)^{-2\beta}
-\rho_N.
\end{align*}

Since \(\rho_N \le c_0\|f\|_\infty^2 h^{3\beta+1}\) with $c_0>0$ sufficiently small and \(h\in(0,1)\), we have
\[
\rho_N
\le
c_0
\|f\|_\infty^2 h^{3\beta+1}
\le
c_0
\|f\|_\infty^2 h^{2\beta}.
\]
Moreover, 
since $\| \omega\|_2 \lesssim h^{-1},$
\[
(1+\|\omega\|_2)^{-2\beta}
\gtrsim
(1+h^{-1})^{-2\beta}
\asymp
h^{2\beta}.
\]
Therefore, choosing $c_0>0$ sufficiently small yields 
$|\tildePsi(\alpha \omega,\alpha \omega)| \gtrsim \|f\|^2_\infty (1+\|\omega\|_2)^{-2\beta}$ and consequently that
\[
\mathbf{(I)}(\omega)
=
\sup_{\alpha\in[0,1]}
\abs{
\frac{\tildePsi(\alpha\omega,\alpha\omega)-\Psi(\alpha\omega,\alpha\omega)}
{\tildePsi(\alpha\omega,\alpha\omega)}
}
\lesssim
\frac{\rho_N}{\|f\|_\infty^2}(1+\|\omega\|_2)^{2\beta}.
\]

For $\mathbf{(II)}(\omega)$, direct calculation shows that 
\begin{align*}
    \Psi(\omega_1,\omega_2) 
    &= r_{\yft}(\omega_1,\omega_2)  - k_{\etaft}(\omega_1,\omega_2) 
    = \E[e^{-i(\omega_1-\omega_2)\zeta}] \fft(\omega_1) \overline{\fft(\omega_2)},
\end{align*}
and so 
\begin{align*}
    \nabla_1 \Psi(\omega_1,\omega_2) 
    &=
    -i \E[\zeta e^{-i(\omega_1-\omega_2)\zeta}]  \fft(\omega_1) \overline{\fft(\omega_2)}
    +
    \E[e^{-i(\omega_1-\omega_2)\zeta}]  \overline{\fft(\omega_2)}\nabla \fft(\omega_1),
\end{align*}
which, since $\E[\zeta]=0,$ gives 
$\nabla_1 \Psi(\omega, \omega)= 
     \overline{\fft(\omega)}\nabla \fft(\omega)$. Therefore
\[
\left\|
\frac{\nabla_1\Psi(\omega,\omega)}{\Psi(\omega,\omega)}
\right\|_2
\lesssim
\left\|
\frac{\nabla \fft(\omega)}{\fft(\omega)}
\right\|_2.
\]
Since $f$ is compactly supported,
\[
\|\nabla \fft(\omega)\|_2
\le
\int_{\R^d}\|t\|_2 |f(t)|dt
\lesssim
\|f\|_\infty,
\]
and by Assumption~\ref{assumption:smoothness},
$|\fft(\omega)|
\gtrsim
\|f\|_\infty (1+\|\omega\|_2)^{-\beta}$. Thus
\[
\left\|
\frac{\nabla \fft(\omega)}{\fft(\omega)}
\right\|_2
\lesssim
(1+\|\omega\|_2)^\beta.
\]
Hence, 
\begin{align*}
\mathbf{(II)}(\omega)
&=
\int_0^1
\left\|
\frac{\nabla_1\Psi(\alpha\omega,\alpha\omega)}{\Psi(\alpha\omega,\alpha\omega)}
\right\|_2
d\alpha 
\lesssim
\int_0^1
(1+\|\alpha\omega\|_2)^\beta d\alpha \\
&\le
\int_0^1
(1+\|\omega\|_2)^\beta d\alpha 
\lesssim
(1+\|\omega\|_2)^\beta.
\end{align*}
Combining the bounds for $\mathbf{(I)}(\omega)$ and $\mathbf{(II)}(\omega)$, we obtain for every $\omega \in [-h^{-1}, h^{-1}]^d$
\begin{align*}
|\Delta_2(\omega)|
&\le
\|\omega\|_2\mathbf{(I)}(\omega)\mathbf{(II)}(\omega)
\lesssim
\|\omega\|_2
\cdot
\frac{\rho_N}{\|f\|_\infty^2}(1+\|\omega\|_2)^{2\beta}
\cdot
(1+\|\omega\|_2)^\beta\\
&\lesssim
\frac{\rho_N}{\|f\|_\infty^2}
\|\omega\|_2(1+\|\omega\|_2)^{3\beta}.
\end{align*}

Under the additional assumption, $\sup_{\omega \in [-h^{-1},h^{-1}]^d} \frac{\|\nabla \fft(\omega) \|_2}{|\fft(\omega)|}\lesssim 1$, it follows immediately that $\mathbf{(II)}(\omega) \lesssim 1,$ and so we instead obtain 
\begin{align*}
|\Delta_2(\omega)|
&\lesssim
\frac{\rho_N}{\|f\|_\infty^2}
\|\omega\|_2(1+\|\omega\|_2)^{2\beta}.
\end{align*}

\textit{Controlling $\Delta_3$.}
Using the same estimates as above,
\begin{align*}
|\Delta_3(\omega)|
&\le
\sup_{\alpha\in[0,1]}
\abs{
\frac{\tildePsi(\alpha\omega,\alpha\omega)-\Psi(\alpha\omega,\alpha\omega)}{\tildePsi(\alpha\omega,\alpha\omega)\Psi(\alpha\omega,\alpha\omega)}
}
\int_0^1
\bigl|
(\nabla_1\hatPsi(\alpha\omega,\alpha\omega)-\nabla_1\Psi(\alpha\omega,\alpha\omega))\cdot \omega
\bigr|d\alpha.
\end{align*}
For the first term, arguing as in the control of $\Delta_2$, we obtain
\begin{align*}
    \sup_{\alpha\in[0,1]}
\abs{
\frac{\tildePsi(\alpha\omega,\alpha\omega)-\Psi(\alpha\omega,\alpha\omega)}{\tildePsi(\alpha\omega,\alpha\omega)\Psi(\alpha\omega,\alpha\omega)}
}
\lesssim 
\frac{\rho_N}{\|f\|^4_\infty} (1+\|\omega\|_2)^{4\beta}.
\end{align*}
For the second term, 
\begin{align*}
    \int_0^1
\bigl|
(\nabla_1\hatPsi(\alpha\omega,\alpha\omega)-\nabla_1\Psi(\alpha\omega,\alpha\omega))\cdot \omega
\bigr|d\alpha
&\le \|\omega\|_2 \sup_{\omega \in [-h^{-1}, h^{-1}]^d} \| \nabla_1 \hatPsi(\omega,\omega)-\nabla_1 \Psi(\omega,\omega)\|\\
&\lesssim \|\omega\| \rho_N.
\end{align*}
Combining the two bounds yields
\begin{align*}
    |\Delta_3(\omega)| \lesssim
    \rho_N^2 \|f\|^{-4}_\infty \|\omega\|_2 (1+\|\omega\|_2)^{4\beta}.
\end{align*}

Since $\rho_N \le c_0 \|f\|_\infty^2 h^{3\beta+1}$ with $c_0>0$ sufficiently small and $h\in(0,1)$, it follows that
\begin{align*}
    |\Delta_3(\omega)| \le
    c
    \rho_N \|f\|^{-2}_\infty \|\omega\|_2 (1+\|\omega\|_2)^{3\beta}.
\end{align*}
for sufficiently small $c>0$. Under the additional assumption, $\sup_{\omega \in [-h^{-1},h^{-1}]^d} \frac{\|\nabla \fft(\omega) \|_2}{|\fft(\omega)|}\lesssim 1$, and with $\rho_N \le c_0 \|f\|_\infty^2 h^{2\beta+1}$, we similarly obtain
\begin{align*}
    |\Delta_3(\omega)| \le
    c
    \rho_N \|f\|^{-2}_\infty \|\omega\|_2 (1+\|\omega\|_2)^{2\beta}.
\end{align*}
Therefore, in both cases, the contribution of $\Delta_3$ is of strictly smaller order and can be absorbed into the bounds for $\Delta_2$.

Combining the bounds for $\Delta_1,\Delta_2,\Delta_3$, we conclude that
\[
|\Delta(\omega)|
\lesssim
\|f\|_\infty^{-2}\rho_N \|\omega\|_2 (1 + \|\omega\|_2)^{3\beta}.
\]
Under the additional assumption, $\sup_{\omega \in [-h^{-1},h^{-1}]^d} \frac{\|\nabla \fft(\omega) \|_2}{|\fft(\omega)|}\lesssim 1$, it follows instead that
\[
|\Delta(\omega)|
\lesssim
\|f\|_\infty^{-2}\rho_N \|\omega\|_2 (1 + \|\omega\|_2)^{2\beta}.
\]
\end{proof}

\begin{proof}[Proof of Theorem~\ref{thm:FTDeviation}]
Let $\mathsf E$ denote the event from the proof of Lemma~\ref{lem:DeltaControl}, which by Lemma~\ref{lem:psiestimates}, holds with probability at least
$1-c_1e^{-c_2\tau}$. We work on $\mathsf E$ throughout the remainder of the proof.

Let $\Delta(\omega)$ be the quantity defined in Lemma~\ref{lem:DeltaControl}. By construction, $\hatfft(\omega)=\fft(\omega)e^{\Delta(\omega)}$, and so
\[
\hatfft(\omega)-\fft(\omega)
=
\fft(\omega)\bigl(e^{\Delta(\omega)}-1\bigr).
\]
Moreover, by Lemma~\ref{lem:DeltaControl}, 
\[
\sup_{\omega\in[-h^{-1},h^{-1}]^d} |\Delta(\omega)|
\lesssim
\|f\|_\infty^{-2}\rho_N h^{-1} (1+h^{-1})^{3\beta}
\lesssim
\|f\|_\infty^{-2}\rho_N h^{-3\beta-1}.
\]
Since $\rho_N\le c_0\|f\|_\infty^2 h^{3\beta+1}$ with $c_0>0$ sufficiently small, we have
$\sup_{\omega\in[-h^{-1},h^{-1}]^d} |\Delta(\omega)|\le 1$. Furthermore, by 
Assumption~\ref{assumption:smoothness}, $|\fft(\omega)|\asymp \|f\|_\infty(1+\|\omega\|_2)^{-\beta}.$ 
Therefore
\begin{align*}
  |\hatfft(\omega)-\fft(\omega)|
&\le 2|\fft(\omega)\Delta(\omega)| \\
&\lesssim
\|f\|_\infty(1+\|\omega\|_2)^{-\beta}
\|f\|_\infty^{-2}\rho_N
\|\omega\|_2(1+\|\omega\|_2)^{3\beta} \\
&\lesssim
\|f\|_\infty^{-1}\rho_N
\|\omega\|_2(1+\|\omega\|_2)^{2\beta},
\end{align*}

where the first inequality follows using the fact $|1-e^z| \le 2 |z|$ for $z \in \C$ with $|z| \le 1.$ It follows that the supremum is bounded by $\rho_N \|f\|_\infty^{-1} h^{-2\beta-1}.$ Under the additional assumption, the proof follows using an identical argument, invoking instead the second part of Lemma~\ref{lem:DeltaControl}.
\end{proof}

\subsection{Supplementary Materials for Subsection~\ref{subsec:snr}}\label{app:highNoiseRegime}
\begin{proof}[Proof of Corollary~\ref{corr:highNoiseRegime}]
        Under the additional assumptions of this subsection, we have 
        \begin{align*}
            \|r_y\|_\infty \frac{\Gamma(y)}{\sqrt{N}} 
            &= \|r_y\|_\infty^{1/2} \frac{\E[\sup_{t\in D} y(t)]}{\sqrt{N}} 
            \le \frac{1}{\sqrt{N}} (\|f\|_\infty + \|k_\eta \|^{1/2}_\infty) ( \|f\|_\infty +  \E [\sup_{t \in D} \eta(t)])\\
            &\lesssim  \frac{1}{\sqrt{N}} \|k_\eta \|^{1/2}_\infty \E [\sup_{t \in D} \eta(t)]
            = \frac{\mathsf{snr}^{-1} \|f\|_\infty^2}{\sqrt{N}}.
        \end{align*}
        The last inequality follows since by assumption $\|k_\eta\|^{1/2}_\infty \ge \|f\|_\infty$, and the fact that for any centered Gaussian process, we have, for any $t \in D,$
    $$
    \E[\sup_{t\in D} \eta(t)]
    \asymp 
    \E[\sup_{t\in D} |\eta(t)|]
    \ge 
    \E|\eta(t)|
    \ge 
    \sqrt{\E|\eta(t)|^2}
    = \sqrt{k_\eta(t,t)},
    $$
    where the last inequality holds by Jensen's inequality. Theorem~\ref{thm:ErrorBoundDeconvolutionEstimator} then implies that with probability at least $1-c_1e^{-c_2 \tau },$ 
    \begin{align}\label{eq:corr1-prelim}
        \frac{\|\hatf - f\|_\infty}{\|f\|_\infty}
        \lesssim
        \inparen{\|r_y\|_{\infty}
        \sqrt{\frac{\tau}{N}}
        \lor 
        \frac{\mathsf{snr}^{-1}}{\sqrt{N}}} h^{-2\beta - d - 1} + h^{\beta-d}.
    \end{align}
    The optimal bandwidth is chosen to minimize the right-hand side of the above display, and a calculation yields 
    \begin{align*}
        h^* \asymp \inparen{
        \|r_y\|_{\infty}
        \sqrt{\frac{\tau}{N}}
        \lor 
        \frac{\mathsf{snr}^{-1}}{\sqrt{N}}
        }^{\frac{1}{3\beta  + 1}},
    \end{align*}
    which can be plugged into \eqref{eq:corr1-prelim} to give the desired relative error bound, after noting that  $\| r_y\|_\infty \le \|f\|_\infty^2 + \|k_\eta\|_\infty \lesssim \|k_\eta\|_\infty$.
    \end{proof}

\begin{proof}[Proof of Corollary~\ref{corr:largesigma}]
         Clearly, $k_\eta^{\text{SE}}(t,t) = \sigma^2$ for any $t \in D$. Further, we may write $\eta = \sigma z$ where $z$ is a Gaussian process with covariance function $\exp(-\|t-t'\|^2_2/2\lambda^2),$ and so $\E[\sup_{t \in D} \eta(t)]=\sigma \E[\sup_{t \in D} z(t)],$ with the second term having no dependence on $\sigma.$ Therefore, 
         $\E[\sup_{t \in D} \eta(t)] \asymp \sigma.$ 
         The result follows from Corollary~\ref{corr:highNoiseRegime} after noting that $\Gamma(y) \asymp 1$ and $\|k_\eta\|_\infty = \sigma^2.$
    \end{proof}

\begin{proof}[Proof of Corollary~\ref{corr:smallLengthscale}]
        Clearly, $k_\eta^{\text{SE}}(t,t) = \sigma^2$ for any $t \in D$, and by \cite[Lemma 4.3]{al2025covariance}, there exists a universal constant $\lambda_0>0$ such that, for any $\lambda < \lambda_0,$ $\E [\sup_{t \in D} \eta(t)] \asymp \sigma \sqrt{\log\frac{1}{\lambda^d}}.$ 
       The result follows  from Corollary~\ref{corr:highNoiseRegime} by taking $\tau = \Gamma(y)^2$ and noting that 
        \begin{align*}
            \Gamma(y) 
            = \frac{\E[\sup_{t \in D} y(t)]}{\|r_y\|_\infty^{1/2}}
            \le 
            \frac{\|f\|_\infty + \E[\sup_{t \in D} \eta(t)]}{\|k_\eta\|_\infty^{1/2}}
            \lesssim
             \frac{\E[\sup_{t \in D} \eta(t)]}{\|k_\eta\|_\infty^{1/2}} \asymp \sqrt{\log \lambda^{-d}}.
        \end{align*}
    \end{proof}

\section{Supplementary Materials for Section~\ref{sec:vanishing}}
\label{app:vanishing}

\begin{lemma}[Power spectrum bounds]\label{PSderivbounds}
Suppose $f$ satisfies Assumption \ref{assumption:psiestimates}. Let $P(\omega) = \Psi(\omega, \omega)=\fft(\omega)\fft(-\omega)$. Then, $\|P^{(\ell)}\|_\infty \lesssim \|f\|^2_\infty$ for $\ell=0,1,2,3.$
\end{lemma}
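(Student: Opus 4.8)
The plan is to recognize $P$ as a product of two factors, each of whose derivatives up to order three are already controlled by Lemma~\ref{lem:fftnbound}, and then to expand via the Leibniz rule. Concretely, introduce the auxiliary function $g(\omega) := \fft(-\omega)$, so that $P(\omega) = \fft(\omega)\, g(\omega)$. First I would observe that, by the chain rule, $g^{(k)}(\omega) = (-1)^k (\fft)^{(k)}(-\omega)$ for each $k$, so that $|g^{(k)}(\omega)| = |(\fft)^{(k)}(-\omega)|$. Since Lemma~\ref{lem:fftnbound} applies to $\fft$ and all its derivatives at any argument (the bound $|(\fft)^{(k)}(\omega)| \le c\|f\|_\infty$ is uniform in $\omega$), we obtain $\sup_\omega |g^{(k)}(\omega)| \lesssim \|f\|_\infty$ for $k = 0,1,2,3$ as well.

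Next I would apply the Leibniz rule to the product $P = \fft \cdot g$:
\begin{align*}
P^{(\ell)}(\omega) = \sum_{j=0}^{\ell} \binom{\ell}{j} (\fft)^{(j)}(\omega)\, g^{(\ell - j)}(\omega), \qquad \ell = 0,1,2,3.
\end{align*}
Taking absolute values and using Lemma~\ref{lem:fftnbound} for the factor $(\fft)^{(j)}$ together with the bound just derived for $g^{(\ell-j)}$, every summand is bounded by $c_j\, \|f\|_\infty \cdot c_{\ell-j}\, \|f\|_\infty \lesssim \|f\|_\infty^2$, uniformly in $\omega$. Since the sum contains only $\ell + 1 \le 4$ terms, each with a fixed binomial coefficient, I conclude $\sup_\omega |P^{(\ell)}(\omega)| \lesssim \|f\|_\infty^2$, which is exactly the claim $\|P^{(\ell)}\|_\infty \lesssim \|f\|_\infty^2$.

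There is no substantive obstacle here: the entire argument reduces to the product rule plus the uniform derivative bounds of Lemma~\ref{lem:fftnbound}. The only points requiring a word of care are (i) noting that Lemma~\ref{lem:fftnbound} holds at every point (so it may be applied at both $\omega$ and $-\omega$), and (ii) tracking the sign factor $(-1)^k$ from differentiating $\omega \mapsto \fft(-\omega)$, which is immaterial to the modulus bounds. The implied constant depends only on $D$ and on $\ell \le 3$, consistent with the $\lesssim$ convention used throughout Section~\ref{sec:theory}.
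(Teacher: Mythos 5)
Your proof is correct and follows essentially the same route as the paper's: the paper simply writes out $P'$, $P''$, and $P'''$ explicitly as products of $(\fft)^{(j)}(\omega)$ and $(\fft)^{(k)}(-\omega)$ and then invokes Lemma~\ref{lem:fftnbound}, which is precisely your Leibniz-rule expansion with the sign factors tracked term by term. Your observation that Lemma~\ref{lem:fftnbound} holds uniformly in the argument (hence at $-\omega$ as well) is the same point the paper uses implicitly, so there is no gap.
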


\begin{proof}
Direct calculation shows:
\begin{align*}
	P(\omega) &= \Psi(\omega, \omega) = \fft(\omega) \fft(-\omega),\\
	P'(\omega) &= \partial_1 \Psi(\omega, \omega) + \partial_2 \Psi(\omega, \omega)=  (\fft)'(\omega) \fft(-\omega)-\fft(\omega) (\fft)'(-\omega),\\
	P''(\omega)&= 
	\partial_1^2 \Psi(\omega, \omega) + \partial_2^2 \Psi(\omega, \omega)
	+2\partial_2\partial_1 \Psi(\omega, \omega)\\
	&= 
	(\fft)''(\omega)\fft(-\omega) + \fft(\omega)(\fft)''(-\omega) - 2 (\fft)'(\omega) (\fft)'(-\omega),\\
	P'''(\omega) &= \partial_1^3 \Psi(\omega,\omega)+ \partial_2^3\Psi(\omega,\omega) + 3\partial_2 \partial_1^2\Psi(\omega,\omega) + 3\partial_1 \partial_2^2\Psi(\omega,\omega)\\
	&= (\fft)'''(\omega)\fft(-\omega)-(\fft)'''(-\omega)\fft(\omega)\\
	&\quad+(\fft)'(-\omega)(\fft)''(\omega)-(\fft)'(\omega)(\fft)''(-\omega).
\end{align*}
The result follows immediately by Lemma~\ref{lem:fftnbound}.
\end{proof}

\begin{proof}[Proof of Lemma \ref{zeros_location_control}]
	Suppose that $h\in(0,1)$. We will make repeated use of the fact that for such an $h$, $1+\hinv \asymp \hinv.$  Defining $ P^{(\ell)}(\omega),  \ell = 0,1,2,3$, to be the power spectrum and its derivatives, as in Lemma \ref{PSderivbounds}, note that $P^{(\ell)}$ is real for $\ell = 0,1,2,3$. Likewise, since by assumption $\hatP$ is real, it follows that $\hatP^{(\ell)}$ is real for $\ell = 0,1,2$. By Lemma \ref{lem:psiestimates} and a slight modification of its proof to handle estimation of $\nabla_2\Psi$, it follows that there exist universal constants $ c_1,c_2 $ such that, for any $ \tau \geq 1 $, it holds with probability at least $ 1-c_1 e^{-c_2 \tau} $ that
	\[
	\sup_{\omega\in[0,\hinv]} | P^{(\ell)}(\omega)-\hatP^{(\ell)}(\omega) | 
	\lesssim \rho_N, \qquad \ell=0,1,2.
	\]
    Define $\checkep := h^{-\beta}\|f\|_\infty^{-1}\sqrt{\thresh}$. Then, by Assumption \ref{assumption:thresholding}, $\xi_{k(\hinv)}+h^{-\beta}\|f\|_\infty^{-1}\sqrt{\thresh} \leq \hinv$, so we can assume going forwards that $[\xi_i-\checkep, \xi_i+\checkep] \subset [0,\hinv]$ for all $\xi_i \in Z_P$. Additionally, by the assumptions,
    \begin{align}
        \checkep
        &\lesssim h^{2\beta} \leq \epsilon_{\max},\label{eq:checkep}\\
        \rho_N &\lesssim \checkep^2h^{4\beta}\|f\|_\infty^2.\label{eq:rhocheckep}
    \end{align}
    We will reference these bounds later on. Note that $ \checkep \leq \epsilon_{\max}$ in \eqref{eq:checkep} implies that there exist $L,B>0$ satisfying the inequalities in Assumption \ref{assumption:nice_oscillations}.\\ \textbf{(I)} We first show that
\begin{equation}\label{eq:P''lowerbound}
    	\min_{\xi_i\in Z_{P}} |\omega-\xi_i| \leq \checkep \implies |P''(\omega)| \gtrsim h^{2\beta}\|f\|_\infty^2 .
\end{equation}
	Let $ \xi_i \in Z_P $ and $ \checkep \in [0, \epsilon_{\max}] $. Note that $ Z_P $ and $ Z_{\fft} $ are identical. Then, $(\fft)'(\xi_i) = o_f'(\xi_i)(1+\xi_i)^{-\beta} - \beta o_f(\xi_i)(1+\xi_i)^{-\beta-1} = o_f'(\xi_i)(1+\xi_i)^{-\beta}$, and so by Assumption \ref{assumption:nice_oscillations} and the fact that $ \xi_i\leq \hinv $, $ |P''(\xi_i)| = 2|(\fft)'(\xi_i)|^2 \geq 2B^2\| f\|_\infty^2 (1+\hinv)^{-2\beta} \gtrsim 2\| f\|_\infty^2 h^{2\beta}$. By the mean value theorem, there exists a $ \omega_1 \in [\xi_i,\xi_i+\checkep] $ such that 
	\[
	P''(\xi_i+\checkep) =  P''(\xi_i) + P'''(\omega_1)\cdot \checkep.
	\]
	By Lemma \ref{PSderivbounds}, $ |P'''(\omega)| \lesssim \| f\|_\infty^2$, so there exists some constant $C_3 > 0 $ such that
	\[
	|P''(\xi_i+\checkep)| \geq \left|  |P''(\xi_i)| - C_3\|f\|_\infty^2\checkep \right|.
	\] 
	   By \eqref{eq:checkep}, $ \checkep \lesssim h^{2\beta}$, so $|P''(\xi_i+\checkep)| \gtrsim h^{2\beta}\|f\|_\infty^2$. Extending to $ \xi_i-\checkep $ without loss of generality yields
	\[
	\min_{\xi_i\in Z_{P}} |\omega-\xi_i| \leq \checkep \implies |P''(\omega)| \gtrsim h^{2\beta}\|f\|_\infty^2 .
	\]
	\textbf{(II)} Next, we show that     \begin{equation}\label{eq:exists0}
        \forall \xi_i \in Z_P,  \exists \hatxi_i \in Z_{\hatP'} :  \hatxi_i \in  [\xi_i-\checkep, \xi_i+\checkep].
    \end{equation} 
    Suppose this were not true, and fix a $\xi_i\in Z_P$. It follows that $\hatP'$ could not have a zero on $(\xi_i-\checkep, \xi_i+\checkep)$, and so without loss of generality, we say that $\hatP'(\omega) > 0$ for all $\omega \in [\xi_i-\checkep, \xi_i+\checkep]$. Specifically, we have (i) $\hatP'(\xi_i+\checkep) > 0$. As $Z_P \subset Z_{P'}$, by (\ref{eq:P''lowerbound}), $|P''(\omega)| \gtrsim h^{2\beta}\|f\|_\infty^2 >0$ on $(\xi_i-\checkep, \xi_i+\checkep)$. Thus, $P'$ must be monotonically increasing or decreasing on $(\xi_i-\checkep, \xi_i+\checkep)$. Note that $P'(\xi_i) = 0$. Since $[\xi_i-\checkep, \xi_i+\checkep] \subset [0,\hinv]$ for all $\xi_i \in Z_P$, without loss of generality we can say that $P'(\xi_i-\checkep) > 0$ and (ii) $P'(\xi_i+\checkep)<0$. Additionally, we observe that (iii) $|P'(\xi_i+\checkep)| \geq B^2h^{2\beta}\|f\|_\infty^2 \checkep$, as otherwise by the mean value theorem there would exist an $\omega_2 \in (\xi_i, \xi_i+\checkep)$ such that 
	\begin{align*}
		B^2h^{2\beta}\|f\|_\infty^2 \leq |P''(w_2)| &= \frac{|P'(\xi_i+\checkep)-P'(\xi_i))|}{|\xi_i+\checkep-\xi_i|} = \frac{|P'(\xi_i+\checkep)|}{\checkep}\\
		&<B^2h^{2\beta}\|f\|_\infty^2 \checkep/\checkep = B^2 h^{2\beta}\|f\|_\infty^2 ,
	\end{align*}
	yielding $B^2h^{2\beta}\|f\|_\infty^2 < B^2h^{2\beta}\|f\|_\infty^2 $, a contradiction. Collecting all the pieces, we have that, since 
    \begin{enumerate}
        \item $P'(\xi_i+\checkep)<0;$ 
        \item $|P'(\xi_i+\checkep)| \gtrsim  h^{2\beta}\| f\|_\infty^2;$
        \item $\hatP'(\xi_i+\checkep) > 0$,
    \end{enumerate}
    it follows that
	\[
	h^{2\beta}\|f\|_\infty^2 \checkep \lesssim | P'(\xi_i+\checkep)| < | P'(\xi_i+\checkep) - \hatP'(\xi_i+\checkep) | < \rho_N,
	\]
	a contradiction by \eqref{eq:rhocheckep}, and so there must exist a $\hatxi_i\in (\xi_i-\checkep, \xi_i+\checkep)$ such that $\hatP'(\hatxi_i) = 0.$ \\
	\textbf{(III)} Next, we show that 
\begin{equation}\label{eq:unique0}
        \forall \xi_i \in Z_P,  \exists! \hatxi_i \in Z_{\hatP'} :  \hatxi_i \in  [\xi_i-\checkep, \xi_i+\checkep].
    \end{equation} 
	We have already shown in \textbf{(II)} that there exists such a $\hatxi_i$. Now we show that such a $\hatxi_i$ is unique. We proceed by contradiction. If such a $\hatxi_i$ was not unique in $(\xi_i-\checkep, \xi_i+\checkep)$, i.e. there existed a $\hatxi_i^1 < \hatxi_i^2$ such that $\hatP'(\hatxi_i^1) = \hatP'(\hatxi_i^2) = 0$, then there would exist a $\omega_3 \in (\hatxi_i^1, \hatxi_i^2)$ such that $\hatP''(\omega_3) = 0$. But this is a contradiction, as if $\hatP''(\omega_3) = 0$,
	\[
	h^{2\beta}\|f\|_\infty^2  \lesssim   \left|  |P''(\omega_3)| - | \hatP''(\omega_3) |   \right| \leq |P''(\omega_3) - \hatP''(\omega_3)| < \rho_N,
	\]
	which is impossible by \eqref{eq:rhocheckep}.\\
    \textbf{(IV)} Define $ \delta = \rho_N/(h^{2\beta}\|f\|_\infty^2 )$. Finally, we show that
    \begin{equation}\label{eq:0bound}
        \forall  \xi_i\in Z_P, |\xi_i-\hatxi_i| \lesssim \delta,
    \end{equation}
    where $\hatxi_i$ is the unique element of $Z_{\hatP'}$ in $[\xi_i-\checkep, \xi_i+\checkep]$ that we attain in equations (\ref{eq:exists0}) and (\ref{eq:unique0}). Indeed, by the mean value theorem and equation (\ref{eq:P''lowerbound}), there exists a $\omega_4 \in (\xi_i-\checkep, \xi_i+\checkep)$ such that
	\[
	h^{2\beta}\|f\|_\infty^2 \lesssim |P''(\omega_4)| = \frac{|P'(\xi_i)-P'(\hatxi_i)|}{|\xi_i-\hatxi_i|}.
	\] 
    Note that $P'(\xi_i)=0$ and $\hatP'(\hatxi_i)=0$. Thus by our assumptions, $|P'(\hatxi_i)-\hatP'(\hatxi_i)| \leq \rho_N$, so $|P'(\hatxi_i)| \leq \rho_N$.
    Consequently,
    \[
    h^{2\beta}\|f\|_\infty^2 \lesssim \frac{|P'(\xi_i)-P'(\hatxi_i)|}{|\xi_i-\hatxi_i|} = \frac{|P'(\hatxi_i)|}{|\xi_i-\hatxi_i|}
    \leq \frac{\rho_N}{|\xi_i-\hatxi_i|},
    \]
    which ultimately implies that
    \[ |\xi_i-\hatxi_i|  \lesssim \rho_N/(h^{2\beta}\|f\|_\infty^2), 
    \]
    as desired. We have now shown that for each $ \xi_i \in Z_P $, there exists a unique $ \hatxi_i \in Z_{\hatP'} $ such that $ |\xi_i - \hatxi_i| \lesssim\delta $.\\	
	\textbf{(V)} Recall that $\checkep := h^{-\beta}\|f\|_\infty^{-1}\sqrt{2\thresh}$ and let $\rho_N h^{-2\beta} \lesssim \thresh \lesssim \|f\|_\infty^2h^{6\beta}$, as assumed. We now show that for 
    \begin{align}
        \hatxi_i\in Z_{\hatP'} \text{ and }\hatP(\hatxi_i) > \thresh &\implies\hatxi_i \not\in Z_P^{\checkep},\\
        \hatxi_i\in Z_{\hatP'} \text{ and } \hatP(\hatxi_i) < \thresh &\implies \hatxi_i \in Z_P^{\checkep}.
    \end{align}

	We proceed by proving the contrapositive of the two statements. Suppose that $\hatxi_i \not\in \cup_{\xi_i\in Z_P} (\xi_i-\checkep, \xi_i+\checkep)$. Then $|\hatxi_i-Z_P| \geq \checkep$ and so $|\hatxi_i-Z_{\fft} | \geq \checkep$. Thus by Assumption \ref{assumption:nice_oscillations}, $|P(\hatxi_i)| \geq L^2\|f\|_\infty^2\checkep^2  (1+\hinv)^{-2\beta} \gtrsim \|f\|_\infty^2\checkep^2h^{2\beta}$. Additionally, as $| \hatP(\hatxi_i)-P(\hatxi_i) | < \rho_N$, $\hatP(\hatxi_i) \geq P(\hatxi_i)-\rho_N \gtrsim \|f\|_\infty^2\checkep^2h^{2\beta}- \rho_N$, so since $\rho_B \lesssim h^{2\beta}\thresh$ and by \eqref{eq:rhocheckep} $\rho_N \lesssim \checkep^2h^{4\beta}\|f\|_\infty^2$, it must be that $\hatP(\hatxi_i) > \thresh$, completing the first part of the proof.
	
	Now suppose that $\hatxi_i \in\cup_{\xi_i\in Z_P} (\xi_i-\checkep, \xi_i+\checkep)$. Then, since the zeros of $\fft$ and therefore $P$ are well isolated, there exists a unique $\xi_i\in Z_P$ such that $|\xi_i - \hatxi_i| < \checkep$. But we have already shown that for such a $\xi_i\in Z_P$, there can be at most one $\hatxi_i\in Z_{\hatP'}$ such that $|\xi_i - \hatxi_i| < \checkep$ and in fact for such a $\hatxi_i$, it must be that $|\xi_i - \hatxi_i| \lesssim\delta$.
	
	By the mean value theorem, assuming without loss of generality that $\xi_i < \hatxi_i$, there exists an $\omega_5 \in [\xi_i, \hatxi_i]$ such that $|P'(\omega_5)| = | P(\hatxi_i)-P(\xi_i)|/ |\hatxi_i - \xi_i|$. But $P(\xi_i) = 0$, so by Lemma \ref{PSderivbounds} there exists a constant $C >0$ such that
	\[
	|P(\hatxi_i) |= | P'(\omega^*)|\cdot |\hatxi_i - \xi_i| \leq \frac{C}{h^{2\beta}\|f\|_\infty^2}\|f\|_\infty^2\rho_N = {C\cdot h^{-2\beta}}\rho_N.
	\]
	Finally, $\hatP(\hatxi_i) < P(\hatxi_i) + \rho_N < ({h^{-2\beta }C}+1)\rho_N$. Since by \eqref{eq:rhocheckep}, $\rho_N \lesssim \|f\|_\infty^2\checkep^2 h^{4\beta}$, it must be that $\hatP(\hatxi_i) < \thresh$.\\	
	\textbf{(VI)} We conclude by defining $ \mathcal{Z}_\thresh := \{ \hatxi \in Z_{\hatP'}  :  \hatP(\xi) < \thresh  \} $.  We began by showing that for any element $\xi \in Z_P$ there is a unique element $\hatxi \in Z_{\hatP'}$ such that $|\xi - \hatxi| \lesssim \delta$, and now we have shown that every element of $\mathcal{Z}_\thresh$ is in $\cup_{\xi_i\in Z_P} (\xi_i-\checkep, \xi_i+\checkep)$ and thus is $\delta $ away from a unique element of $Z_P$. Thus $|\mathcal{Z}_\thresh| = |Z_P|$ and the lemma is proved.
\end{proof}

\begin{proof}[Proof of Lemma \ref{lem:error_from_zeros}]
	We first show that \eqref{equ:error_intermediate} holds when $\omega \in \hatomega_\epsilon$, i.e. $|\omega - \hatxi_i|\geq \epsilon$ for all the estimated zeros $\hatxi_i$. Note that in this case, $\hatk(\omega)=k(\omega)$, and for brevity we denote the number of zeros less than $\omega$ simply by $k$ in the following argument. Note that $\frac{\partial_{1} \Psi(\xi,\xi)}{\Psi(\xi,\xi)}=\frac{(\fft)'(\xi)}{\fft(\xi)}$ and so $\frac{\fft(\xi_2)}{\fft(\xi_1)}=\exp\left(\int_{\xi_1}^{\xi_2} \frac{\partial_{1} \Psi(\xi,\xi)}{\Psi(\xi,\xi)} d\xi\right)$ by Lemma \ref{lem:technicalKotlarski}, as long as $\fft \ne 0$ on the straight-line path connecting $\xi_1,\xi_2$.  We thus obtain: 
	\begin{align}
&\exp\left(\int_{\hatomega_\epsilon} \frac{\partial_{1} \Psi(\xi,\xi)}{\Psi(\xi,\xi)} d\xi\right) \nonumber \\&  = \exp\left(\int_0^{\hatxi_1^-} \frac{\partial_{1} \Psi(\xi,\xi)}{\Psi(\xi,\xi)} d\xi\right)\exp\left(\int_{\hatxi_1^+}^{\hatxi_2^-}\frac{\partial_{1} \Psi(\xi,\xi)}{\Psi(\xi,\xi)} d\xi\right) \cdots \nonumber
		\exp\left(\int_{\hatxi_{k}^+}^{\omega}\frac{\partial_{1} \Psi(\xi,\xi)}{\Psi(\xi,\xi)} d\xi\right) \\
		\hspace{1cm}& = \left(\frac{\fft(\hatxi_1^-)}{\fft(\hatxi_1^+)}\right)\left(\frac{\fft(\hatxi_2^-)}{\fft(\hatxi_2^+)}\right) \cdots \left(\frac{\fft(\hatxi_k^-)}{\fft(\hatxi_k^+)}\right)\fft(\omega) \label{equ:initial_formula} ,
	\end{align}
	where $\hatxi^{-}_i = \hatxi_i-\epsilon$, $\hatxi^{+}_i=\hatxi_i+\epsilon$.
	Taylor expanding $\fft$ about $\xi_i$, we have 
	\begin{align*}
		\fft(\xi) &= (\fft)'(\xi_i)(\xi-\xi_i) +  \frac{(\fft)''(\xi_i)}{2}(\xi-\xi_i)^2 +\frac{(\fft)^{(3)}(\xi^*)}{6}(\xi-\xi_i)^3 \\
		&:= c_1(\xi-\xi_i)+c_2(\xi-\xi_i)^2+c_3(\xi-\xi_i)^3
	\end{align*}
	for some $\xi^*$ on the line segment containing $\xi_i$ and $\xi$. Plugging in $\hatxi_i^+,\hatxi_i^-$ and letting $\Delta \xi_i = \hatxi_i - \xi_i$ gives:
	\begin{align*}
		\fft(\hatxi_i^+) &= c_1(\hatxi_i +\epsilon - \xi_i)+ c_2(\hatxi_i +\epsilon - \xi_i)^2 + c^+_3(\hatxi_i +\epsilon-\xi_i)^3 \\
		&= c_1(\epsilon +\Delta \xi_i)+ c_2(\epsilon +\Delta \xi_i)^2 + O(\epsilon^3), \\
		\fft(\hatxi_i^-) &= c_1(\hatxi_i - \epsilon - \xi_i)+ c_2(\hatxi_i -\epsilon - \xi_i)^2 + c^-_3(\hatxi_i - \epsilon-\xi_i)^3 \\
		&= c_1(-\epsilon +\Delta \xi_i)+ c_2(- \epsilon +\Delta \xi_i)^2 + O(\epsilon^3)  .
	\end{align*}
	We thus obtain:
	\begin{align*}
		\frac{\fft(\hatxi_i^+)}{\fft(\hatxi_i^-)} &= \frac{ c_1(\epsilon +\Delta \xi_i)+ c_2(\epsilon +\Delta \xi_i)^2 + O(\epsilon^3) }{c_1(-\epsilon +\Delta \xi_i)+ c_2(- \epsilon +\Delta \xi_i)^2 + O(\epsilon^3) } \\
		&= \frac{ \frac{\epsilon +\Delta \xi_i}{-\epsilon +\Delta \xi_i} + \frac{c_2}{c_1}\frac{(\epsilon +\Delta \xi_i)^2}{(-\epsilon +\Delta \xi_i)} + O(\epsilon^2) }{ 1 + \frac{c_2}{c_1}(-\epsilon +\Delta \xi_i) + O(\epsilon^2)} \\
		&= -1 \frac{ \left(\frac{\epsilon +\Delta \xi_i}{\epsilon -\Delta \xi_i} + \frac{c_2}{c_1}\frac{(\epsilon +\Delta \xi_i)^2}{(\epsilon -\Delta \xi_i)} + O(\epsilon^2) \right)}{ \left(1 + \frac{c_2}{c_1}(-\epsilon +\Delta \xi_i) + O(\epsilon^2)\right)}
		:= -1\cdot \frac{\textbf{(I)}}{\textbf{(II)}} ,
	\end{align*}
	where $\textbf{(I)}$ denotes the numerator and $\textbf{(II)}$ the denominator. We will show that both factors are close to 1, starting with $\textbf{(I)}$. Note:
	\begin{align*}
		\textbf{(I)} &=  \textbf{(A)} + \textbf{(B)} + O(\epsilon^2) , \quad\quad \textbf{(A)} = \frac{1+\frac{\Delta \xi_i}{\epsilon}}{1-\frac{\Delta \xi_i}{\epsilon}}  , \quad\quad \textbf{(B)} = \frac{c_2}{c_1}\textbf{(A)}(\epsilon+\Delta\xi_i) .
	\end{align*}
	We thus obtain:
	\begin{align}
		\label{equ:bound_for_A}
		\textbf{(A)} = 1 + O\left(\delta \epsilon^{-1}\right)  , \quad\quad \textbf{(B)} = \frac{c_2}{c_1}\left(1 + O\left(\delta \epsilon^{-1}\right)\right)(\epsilon+\Delta\xi_i) = \frac{c_2}{c_1}O(\epsilon) .
	\end{align}
    Note that by Assumption \ref{assumption:FT_decay}, we have at the zero $\xi_i$:
	\begin{align*}
		(\fft)'(\xi_i) &= \frac{o'(\xi_i)}{1+\xi_i^\beta}  , \quad\quad {(\fft)''(\xi_i) =-2\beta\frac{o'(\xi_i)}{1+\xi_i^{\beta+1}} + \frac{o''(\xi_i)}1+\xi_i^\beta}  ,
	\end{align*}
	so that utilizing $\xi_i\geq 1$ we obtain
	\begin{align*}
		\left|\frac{c_2}{c_1}\right| &= \left|\frac{(\fft)''(\xi_i)}{2(\fft)'(\xi_i)}\right| \leq \frac{\beta}{ 1+\xi_i} + \left|\frac{o_f''(\xi_i)}{2o_f'(\xi_i)}\right| \leq \beta + \frac{C}{2B},
	\end{align*}
    which ensures
    \begin{align*}
    \textbf{(I)} &= 1+O(\delta\epsilon^{-1}+\epsilon).
    \end{align*}
	Next we bound $\textbf{(II)}$. Note:
	\begin{align*}
		\textbf{(II)} &= 1+ O(\epsilon) \quad\implies\quad \textbf{(II)}^{-1} = 1+ O(\epsilon)  .
	\end{align*}
	Combining the bounds for $\textbf{(I)}$, $\textbf{(II)}^{-1}$, we obtain:
	\begin{align*}
		\frac{\textbf{(I)}}{\textbf{(II)}} &= \left(1+O(\delta\epsilon^{-1}+\epsilon)\right)\bigl(1+ O(\epsilon)\bigr) = 1 + O(\delta\epsilon^{-1}+\epsilon).
	\end{align*}
	Since the above bounds hold for all $i$, i.e. for all the zeros, and since $\hatk(\omega)=k(\omega)$ since $|\omega-Z_{\fft}|>\delta$, we have from \eqref{equ:initial_formula} that
	\begin{align*}
		\fft_\epsilon(\omega)=(-1)^{\hatk(\omega)}\exp\left(\int_{\hatomega_\epsilon} \frac{\partial_{1} \Psi(\xi,\xi)}{\Psi(\xi,\xi)} d\xi\right) &= \left(1+ O(\delta\epsilon^{-1}+\epsilon)\right)^{k(\omega)}\fft(\omega) .
	\end{align*}
	Utilizing $k(\omega)\leq \omega \leq \hinv$ gives $\left(1+ O(\delta\epsilon^{-1}+\epsilon)\right)^{k(\omega)}=1+ O \bigl(\hinv(\delta\epsilon^{-1}+\epsilon) \bigr)$, i.e. $|\fft_\epsilon(\omega)-f(\omega)| \lesssim \hinv\|f\|_\infty(\delta\epsilon^{-1}+\epsilon)$ for $\delta\epsilon^{-1}+\epsilon \lesssim h$, proving \eqref{equ:error_intermediate}.
    
    Finally, we check that \eqref{equ:error_intermediate} holds when $\omega \notin \hatomega_\epsilon$, i.e. when there exists an estimated zero $\hatxi_i$ such that $|\omega - \hatxi_i|<\epsilon$. Note by definition of the estimator in \eqref{equ:GeneralizedKotlarskiIntermediate}, we have
    \[ \fft_\epsilon(\omega) = \fft_\epsilon(\hatxi_i^-)  ,\]
    but since $\hatxi_i^-$ is distance $\epsilon$ from all estimated zeros, we have by the previous analysis that
    \[ |\fft_\epsilon(\hatxi_i^-) - \fft(\hatxi_i^-)| \lesssim \hinv\|f\|_\infty(\delta\epsilon^{-1}+\epsilon) . \]
    Furthermore, Taylor expanding $\fft$ about $\omega$ gives
    \[ \fft(\hatxi_i^-)  - \fft(\omega)=(\fft)'(\omega)O(\hatxi_i^- -\omega)=\|f\|_\infty O(\epsilon)  ,\]
    (see proof of Lemma \ref{PSderivbounds} for last equality). Thus:
    \begin{align*}
        |\fft_\epsilon(\omega)-\fft(\omega)| &= |\fft_\epsilon(\hatxi_i^-) - \fft(\omega)| \\
        &\leq |\fft_\epsilon(\hatxi_i^-) - \fft(\hatxi_i^-)|+|\fft(\hatxi_i^-) - \fft(\omega)| \\
        &\lesssim \hinv\|f\|_\infty(\delta\epsilon^{-1}+\epsilon) ,
    \end{align*}
    giving \eqref{equ:error_intermediate}.
\end{proof}

\begin{proof}[Proof of Lemma \ref{lem:error_from_integrand}]
Recall the estimator
\begin{align*}
\hatfft_\epsilon(\omega) &= (-1)^{\hatk(\omega)}\exp\left(\int_{\hatomega_\epsilon}\frac{\partial_{1} \hatPsi(\xi,\xi)}{\tildePsi(\xi,\xi)} d\xi \right).
\end{align*}
A straightforward modification of Theorem~\ref{thm:FTDeviation} which invokes Assumptions~\ref{assumption:FT_decay} and \ref{assumption:nice_oscillations} to bound terms involving $|\Psi|$ implies that there exist positive universal constants $c_1,c_2$ such that, for any $\tau \ge 1$, it holds with probability at least $1-c_1 e^{-c_2 \tau}$ that
\begin{align}
\label{equ:bound_on_integrand}
	\sup_{\xi \in \hatomega_{\epsilon}}\left|\frac{\partial_{1} \Psi(\xi,\xi)}{\Psi(\xi,\xi)} - \frac{\partial_{1} \hatPsi(\xi,\xi)}{\tildePsi(\xi,\xi)} \right| &\lesssim 
	\frac{\rho_N h^{-3\beta} }{ \epsilon^3 \|f\|_\infty^2}  ,
\end{align}
as long as $ \rho_N \lesssim \epsilon^2 \|f\|_\infty^2 h^{2\beta}$, which holds by the assumption on $\rho_N$.
We omit the details for brevity, as the calculation is nearly identical. For notational convenience, define 
    \[E(\xi) := \frac{\partial_{1} \hatPsi(\xi,\xi)}{\tildePsi(\xi,\xi)} -\frac{\partial_{1} \Psi(\xi,\xi)}{\Psi(\xi,\xi)} 
    \] 
    so that
    \begin{align*}
       \hatfft(\omega)  &=  \fft_\epsilon(\omega)\exp\left(\int_{\hatomega_\epsilon}E(\xi) d\xi\right).
       \end{align*}
 It follows that
 \begin{align*}
       |\hatfft(\omega) -\fft_\epsilon(\omega)| &\leq |\fft_\epsilon(\omega)|\cdot\left|\exp\left(\int_{\hatomega_\epsilon}E(\xi) d\xi\right)-1\right|\\
       &\leq 2\|f\|_\infty  \cdot\left|1+\omega\right|^{-\beta}\cdot\left|\int_{\hatomega_\epsilon}E(\xi) d\xi\right| \lesssim \frac{\rho_N h^{-2\beta-1} }{ \epsilon^3 \|f\|_\infty}
    \end{align*}
    with probability at least $1-c_1 e^{-c_2 \tau}$,
    where we have utilized $|e^z-1|\leq 2|z|$ for $|z|\leq 1$ for the next to last inequality and \eqref{equ:bound_on_integrand} for the last one. Note $\rho_N \lesssim \epsilon^3 \|f\|^2_\infty h^{3\beta+1}$ guarantees that $|z| = \left| \int_{\hatomega_\epsilon}E(\xi) d\xi\right| \leq 1$.
\section{Supplementary Materials for Section \ref{sec:numerics}}\label{appendix:numerics}

\begin{figure}
    \centering
    \includegraphics[width=1.0\linewidth]{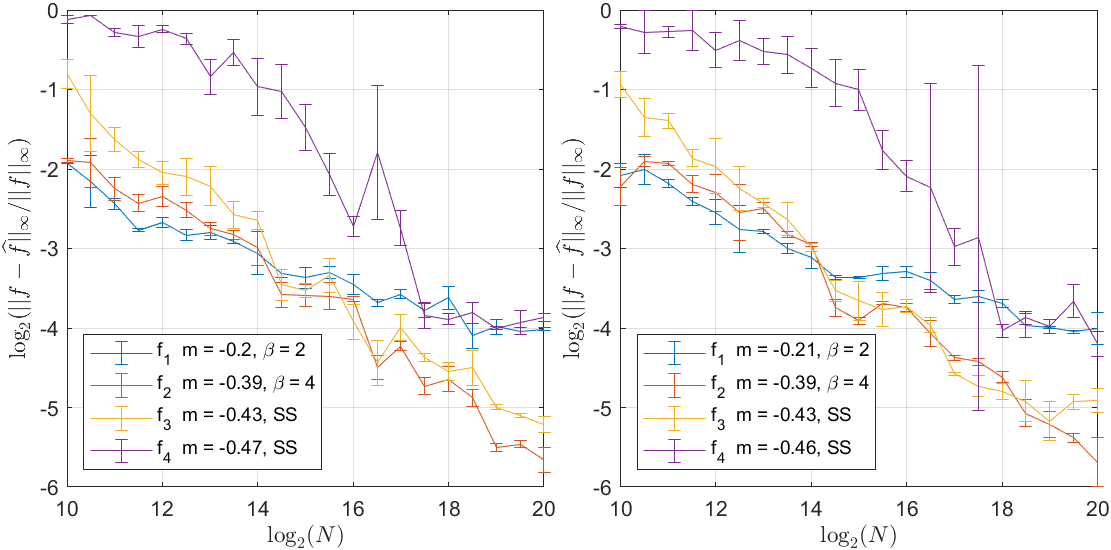}
    \caption{Error decay with varying sample size for fixed $\sigma =0.5$, $\lambda = 0.1$; slopes $m$ for least-squares fit shown in legend. Left: shifts from $\zeta_1$ (uniform). Right: shifts from $\zeta_2$.}
    \label{fig:nerr_sigma.5}
\end{figure}
\begin{remark}\label{rmk:numerics_padding}
    After shifting and noise corruption, our samples are then zero padded by a factor of $ \mathcal{B}=10 $, so that the samples in space live on $ [-20,20] $ with a sample rate of $ 2^{-5} $. This is done so that in frequency, our samples, which are supported on $ [-2^5\pi, 2^5\pi] $, are sampled at a rate of $ \pi/2 \mathcal{B} $ instead of $ \pi/2 $. This increases the fidelity of the discrete gradient.
\end{remark}

\begin{remark}\label{rmk:numerics_regconst}
    There is a minor yet important difference between theory and numerics in the regularization of the estimate of $\Psi$ used in Algorithms \ref{alg:mainalgorithm} and \ref{alg:zerosalgorithm}. In the numerics, we define 
    \[
    \tildePsi(u,v) := \frac{\hatPsi(u,v)}{1 \wedge r\cdot \sqrt{M} |\hatPsi(u,v) |} ,
    \]
    where $r$ is a regularization constant that is set to 1 in the theory. In practice, values of $r < 1$ yield better recovery for the critical lower frequencies in the high-noise regime, but it comes at the cost of bias in higher frequencies, as the recovered signal's frequency component does not decay to 0 (see bottom row of Figure \ref{fig:signals}). However, since the deconvolution step of Algorithms \ref{alg:mainalgorithm} and \ref{alg:zerosalgorithm} truncates frequency components at $h^{-1}$, this bias is minimally represented in the final recovery if $h$ is chosen appropriately. Tuning $r$ can be difficult and signal dependent however. In Figures \ref{fig:nerr_sigma.5} and \ref{fig:nerr_sigma1}, $r = 0.01$ for $f_1,f_2,f_3$ and $r=0.0001$ for $f_4$, while for Figures \ref{fig:lambda}, \ref{fig:sigma}, \ref{fig:sigma_nchange}, $r= 0.001$ for $f_1,f_2,f_3$ and $r=0.0001$ for $f_4$. Increased regularization is especially necessary for Algorithm \ref{alg:zerosalgorithm}, or for any signal $f$ where $\fft(0)$ is near 0. 
\end{remark}

\begin{remark}\label{rmk:numerics_sig.5}
Figure \ref{fig:nerr_sigma.5} is identical in methodology to Figure \ref{fig:nerr_sigma1}, except that the experiment was performed at a noise intensity of $\sigma = 0.5$ instead of $\sigma = 1$. This experiment is thus not in the high noise regime, but better accents the effect of $\beta$ on the average slope of error decay for the four signals. Indeed, the average slope increases from $-.2$ to $-.39$ to $-.43$ as one moves from $f_1$ to $f_2$ to $f_3$. Of interest is that the average slope for $f_4$ is actually steepest in this experiment, contrary to Figure \ref{fig:nerr_sigma1}, where it is the least steep. However, if the plot was continued for samples larger than $2^{20}$, it is likely that the slope would notably decrease, as beyond sample sizes of roughly $\log_2(N) = 18$, the error is likely dominated by the bias introduced by the small regularization constant of $r=0.0001$ used for $f_4$ in this experiment, see Remark \ref{rmk:numerics_regconst}. Additionally, observe the large error bars for $f_4$; if the threshold set in Algorithm \ref{alg:zerosalgorithm} is too sensitive and the zeros are not found, then the recovery relative error is near 1, even if the absolute value of the signal is estimated well. Thus, a single failure to estimate zeros can cause a large sample standard deviation for a set sample size.
\end{remark}

\begin{figure}
    \centering
    \includegraphics[width=0.5\linewidth]{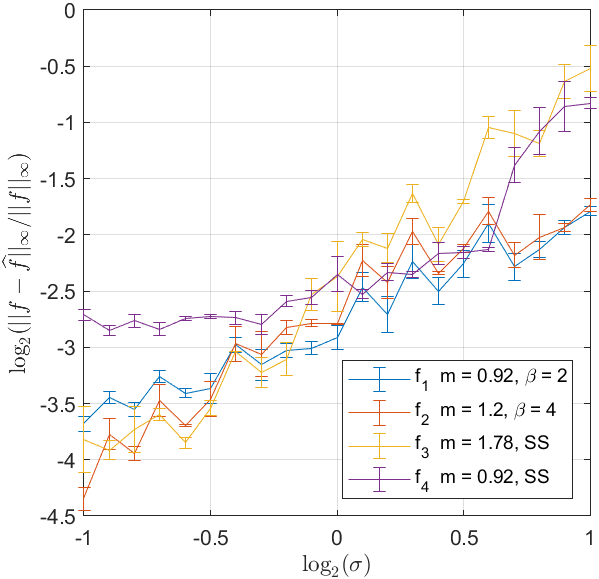}
    \caption{Recovery error as a function of noise intensity $\sigma$, with $N = 100,000$ fixed, $\lambda = 0.1$ fixed. Slopes $m$ for least-squares fit shown in legend.}
    \label{fig:sigma}
\end{figure}
\begin{remark}\label{rmk:numerics_sigma_nfix}
In Figure \ref{fig:sigma}, we kept $N=100,000, \lambda = 0.1$ fixed but varied $\sigma$ evenly in $\log_2$ space from $2^{-1}  =0.5$ to $2^1 = 2$. Taking the log of the error bound in Corollary \ref{corr:largesigma} implies that the constant preceding $\log\sigma$ of the bounding term should increase as $\beta$ increases. Indeed, this is exactly what we observe in Figure \ref{fig:sigma}; as $\beta$ increases from 2 to 4 to the super smooth case, the average slope increases from 0.92 for $f_1$ to 1.2 for $f_2$ and finally 1.78 for $f_3$, numerically verifying Corollary \ref{corr:largesigma} and confirming the conclusions drawn from Figure \ref{fig:sigma_nchange}.
\end{remark}

\begin{remark}\label{rmk:numerics_signalrec}
The first row of Figure \ref{fig:signals} is identical to Figure \ref{fig:signalsintro}, just with the noisy shifted samples omitted. The second row of Figure \ref{fig:signals} shows the Fourier transforms of the signals $f_1$ through $f_4$, with only the real parts shown for $f_3$ and $f_4$. All recoveries were performed with $\sigma = 1, \lambda = 0.1, N = 100,000$. Figure \ref{fig:signals} demonstrates especially well how as $\beta$ increases up to the super smooth case, recovery difficulty decreases. This is at least partially due to the fact that as $\beta$ increases, there are generally fewer high frequency components to recover. This is key, as after the regularization process kicks in, subsequent small high frequency components are lost. Thus, one has to decide whether to preserve high fidelity recovery of lower frequency components at the expense of higher frequency components or accept a larger error from noise in the whole recovery. This is especially obvious in the bottom right subplot of Figure \ref{fig:signals}, where a small regularization constant of $0.0001$ was used for $f_4$, introducing a large bias in frequency that is nonetheless truncated away in space. 

\end{remark}

\begin{remark}\label{rmk:numerics_hfix}
The left panel of Figure \ref{fig:hfixsig1} is identical to the left panel of Figure \ref{fig:nerr_sigma1}, with $h$ optimized dynamically at each sample size, while the right panel of Figure \ref{fig:hfixsig1} is identical in methodology to the left panel, but with fixed deconvolution bandwidths $h= 0.035$ for $f_1,f_2,f_3$, and $h= 0.05$ for $f_4$. Both plots were recovered from samples shifted by $\zeta_1$, the uniform distribution. Different fixed bandwidths were chosen for the signals recovered from Algorithm \ref{alg:mainalgorithm} versus Algorithm \ref{alg:zerosalgorithm} in the right panel of Figure \ref{fig:hfixsig1} due to regularization; $r = 0.01$ for $f_1,f_2,f_3$ and $r= 0.0001$ for $f_4$. Thus, due to the increased error from bias for $f_4$ recovery, a larger $h$ and thus smaller bandwidth $h^{-1}$ is desired. Figure \ref{fig:hfixsig1} demonstrates that although we performed all experiments in Section \ref{sec:numerics} with $h$ optimized dynamically, in general this is not necessary to guarantee good recovery: the error is generally comparable between $h$ fixed and $h$ optimized across all model parameters, here for example $N$. We chose to dynamically optimize $h$ in Section \ref{sec:numerics} to minimize the impact of $h$ as a model parameter and focus on the dependence on the other parameters $N,\beta, \sigma, \lambda$.
\end{remark}

\begin{figure}
    \includegraphics[width=\textwidth]{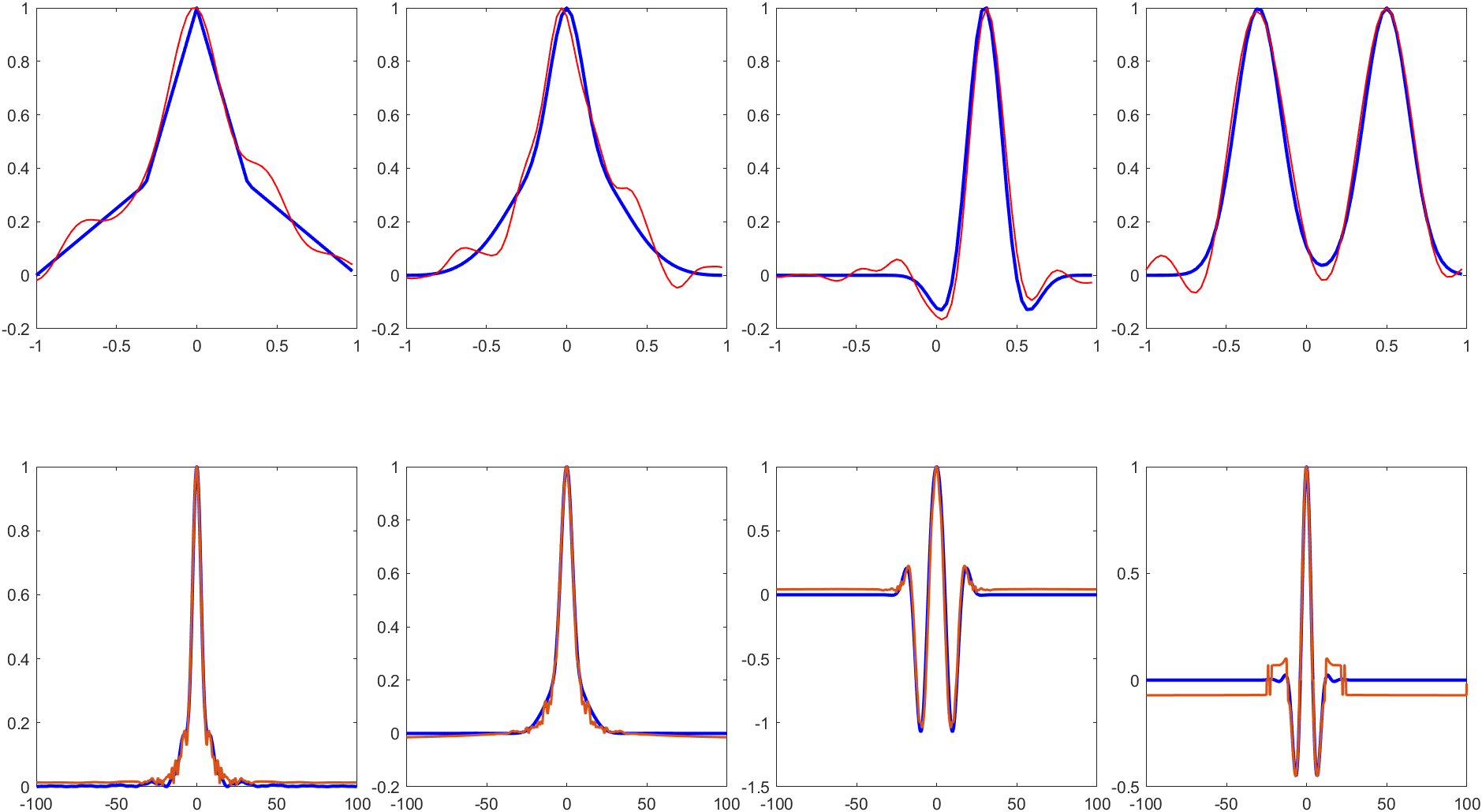}
    \caption{In thick blue: true signals in space and frequency. In thin red, their recoveries following Algorithms \ref{alg:mainalgorithm} and \ref{alg:zerosalgorithm}. Top row: $f_1, f_2,f_3,f_4$. Bottom row $\fft_1, \fft_2,\Re\fft_3,\Re\fft_4$. $\sigma = 1, \lambda = 0.1, N = 100,000$ for all plots.}\label{fig:signals}
\end{figure}
\begin{figure}
    \centering
    \includegraphics[width=1.0\linewidth]{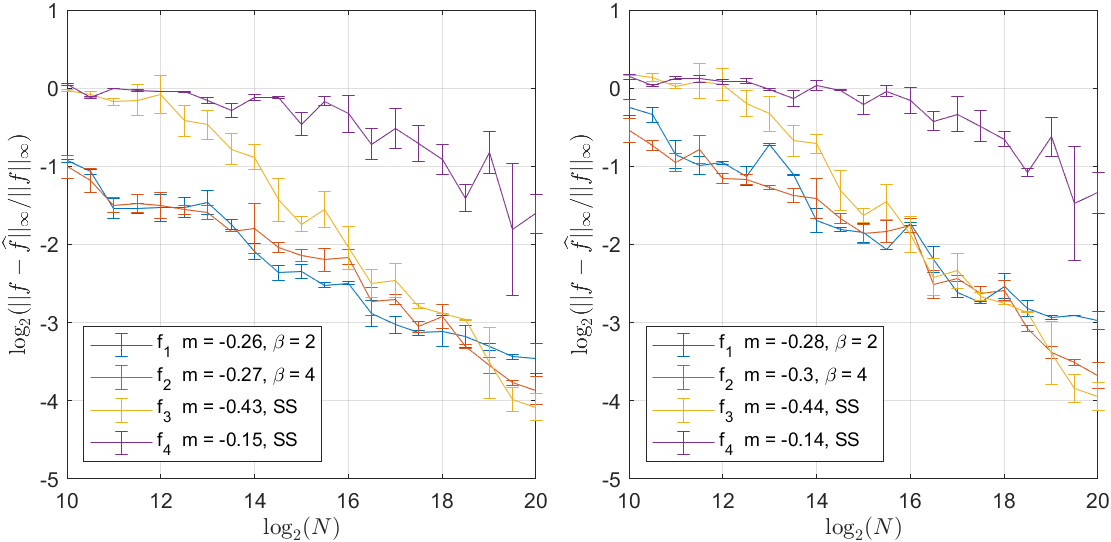}
    \caption{Error decay with varying sample size for fixed $\sigma =1$, $\lambda = 0.1$; slopes $m$ for least-squares fit shown in legend; all samples shifted by $\zeta_1$, the uniform distribution. Left: $h$ optimized dynamically at each sample size. Right: $h$ fixed at $h=0.035$ for $f_1,f_2,f_3$ and $h=0.05$ for $f_4$}
    \label{fig:hfixsig1}
\end{figure}

\subsection{Comparison to spectral algorithm}\label{ssec:suppspectral}
\begin{algorithm}[htp]
\caption{\label{alg:samplingdiscrete}Generating Discrete MRA Samples}
\begin{algorithmic}[1]
     \STATE {\bf Input:} True signal $f$; $N$ number of samples; $J$ integer; $dx=1/J$ spatial sampling rate; noise variance $\sigma^2$.
    \STATE {\bf Sample shifts:} Uniformly sample $N$ integers with replacement from $[-J, J]$ to get $N$ shift samples $\zeta_n, n \in \{1,\ldots,N\}$. 
    \STATE {\bf Initialize signal $f$:} Evaluate $f$ on $[-2,2)$ at a sampling rate of $dx$ to obtain a vector of length $4J$, $f_{vec}\in\R^{4J}.$
    \STATE {\bf Shift signal:} For each $n \in \{1,\ldots,N\}$, circularly shift $f_{vec}$ by $\zeta_n$ to obtain a vector $y_n$.
    \STATE {\bf Add noise:} For each $n \in \{1,\dots,N\}$, sample $\varepsilon_n \sim \mathcal{N}(0,\sigma^2 I)$ and set $y_n \leftarrow y_n + \varepsilon_n$.
\end{algorithmic}
\end{algorithm}

\begin{algorithm}[htp]
\caption{\label{alg:samplingfunctional}Generating Functional MRA Samples}
\begin{algorithmic}[1]
     \STATE {\bf Input:} True signal $f$; $N$ number of samples; $J$ integer; $dx=1/J$ spatial sampling rate; $k_\eta$ covariance function.
    \STATE {\bf Sample shifts:} Uniformly sample $[-1,1]$ to get $N$ shift samples $\zeta_n, n \in \{1,\ldots,N\}$.
    \STATE {\bf Initialize covariance matrix:} Let $\{x_i\}_{i=1}^{4J}$ be the uniform grid on $[-2,2)$ with spacing $dx$. Form the covariance matrix $\Sigma \in \mathbb{R}^{4J\times 4J}$ with entries $\Sigma_{ij} = k(x_i,x_j)$.
    \STATE {\bf Shift signal:} For each $n \in \{1,\ldots,N\}$, 
    obtain vector $y_n$ by evaluating $f$ on $[-2+\zeta_n, 2+\zeta_n)$ at sampling rate $dx$.
    \STATE {\bf Add noise:} For each $n \in \{1,\dots,N\}$, sample $\varepsilon_n \sim \mathcal{N}(0,\Sigma)$ and set $y_n \leftarrow y_n + \varepsilon_n$.
\end{algorithmic}
\end{algorithm}

\begin{figure}[tb]\label{fig:spec_comp_f4}
    \centering
    \begin{subfigure}[b]{0.47\textwidth}
            \includegraphics[width=\linewidth]{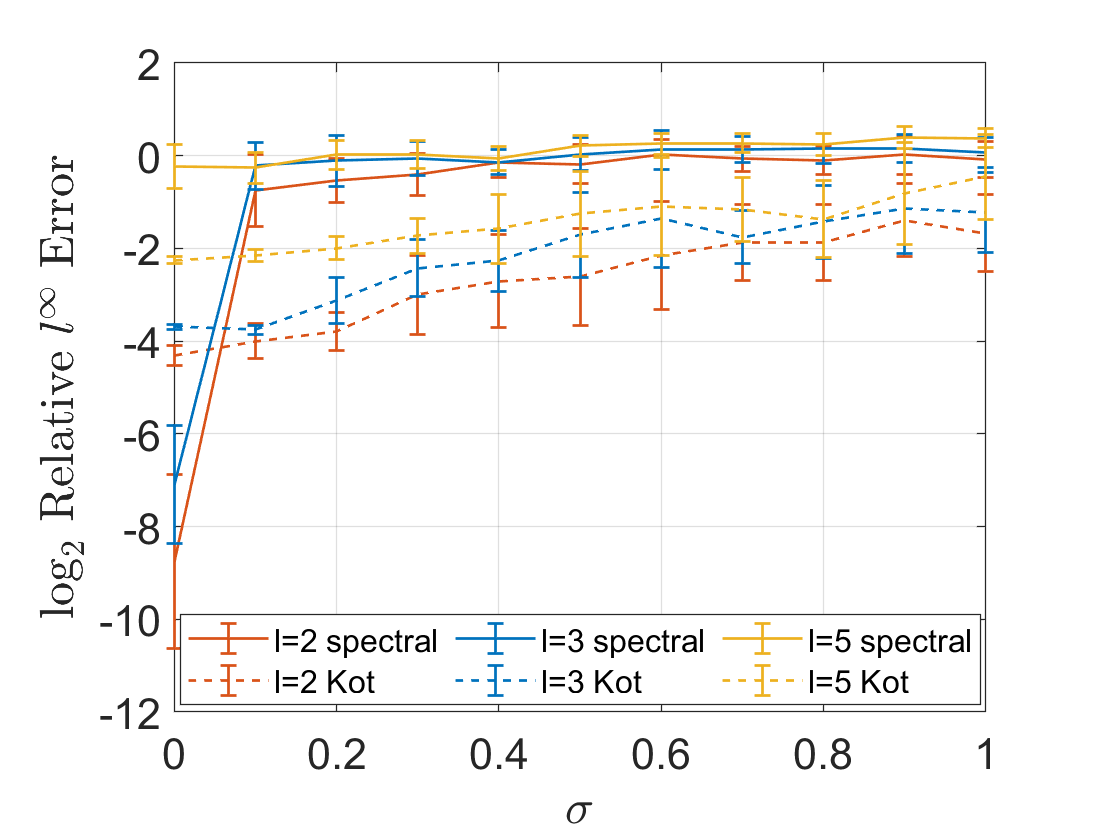}
        \caption{Isotropic noise, shifts on grid.}
        \label{fig:theirs_spec_comp_f4}
    \end{subfigure}    
    \hfill
    \begin{subfigure}[b]{0.47\textwidth}
        \includegraphics[width=\linewidth]{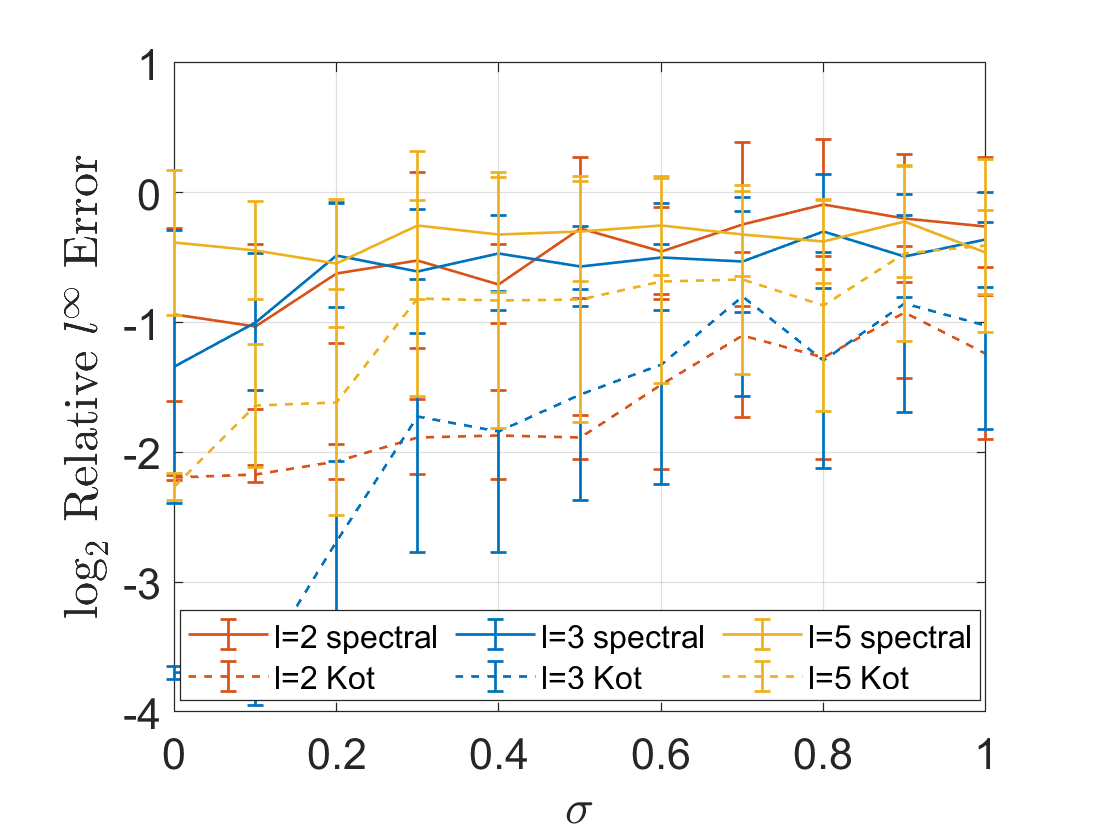}
        \caption{Squared exponential noise, continuous shifts.}
        \label{fig:ours_spec_comp_f4}
    \end{subfigure}
    \caption{Comparison of recovery for $f_4$ using the spectral algorithm and Algorithm \ref{alg:zerosalgorithm} across sampling rates $2^{-l}$ and noise levels with $N=10,000$ fixed. Note the vertical axes differ across panels. }
\end{figure}

\begin{remark}\label{rmk:numerics_spec_f4}
    In Figures \ref{fig:theirs_spec_comp_f4} and \ref{fig:ours_spec_comp_f4} we repeated the experiments in Figures \ref{fig:theirs_spec_comp} and \ref{fig:ours_spec_comp} for the signal $f_4$. All parameters were held exactly the same as in Subsection \ref{sec:comp_to_spectral}. Although the spectral algorithm cannot theoretically handle a vanishing Fourier transform, with a coarse enough sampling rate in space and low noise, the spectral algorithm can recover $f_4$.
\end{remark}

\end{proof}
\end{appendix}

\end{document}